\numberwithin{equation}{section}
\DeclareMathOperator{\CS}{CS}
\DeclareMathOperator{\Tr}{Tr}
\DeclareMathOperator{\ad}{ad}
\DeclareMathOperator{\adj}{adj}
\DeclareMathOperator{\Ad}{Ad}
\newcommand{\trb}{\mathbf{tr}}
\theoremstyle{plain}
\newtheorem*{thm*}{Theorem}
\theoremstyle{plain}
\newtheorem{thm}{Theorem}[section]
\newtheorem{prop}[thm]{Proposition}
\newtheorem{proposal}[thm]{Proposal}
\theoremstyle{definition}
\newtheorem{defn}[thm]{Definition}
\newtheorem{notation}[thm]{Notation}
\newtheorem{rem}[thm]{Remark}
\tikzset{
  big arrow/.style={
    decoration={markings,mark=at position 1 with {\arrow[scale=1.5,#1]{>}}},
    postaction={decorate},
    shorten >=0.4pt},
  big arrow/.default=black}
\pgfplotsset{compat=1.18}
\begin{document}

\begin{titlepage}
\begin{center}
\vspace{1.8cm}
{\Huge\bfseries 
The topological life of Dynkin indices: universal scaling and matter selection\\
 } 
\vspace{2.5cm}
{\Large
Mboyo Esole$^{\diamond}$ and Monica Jinwoo Kang$^{\clubsuit}$\\}
\vspace{.6cm}
{\large $^{\diamond}$ Department of Mathematics, Northeastern University}\par
{\large   Boston, MA 02115, U.S.A.}\par
\vspace{.2cm}
{\large $^\clubsuit$ 
Mitchell Institute for Fundamental Physics and Astronomy, Texas A\&M University}\par
{College Station, TX 77843, U.S.A}\par
\vspace{.6cm}
{\tt j.esole@northeastern.edu, monicak6@tamu.edu}\par
\vspace{1.6cm}
\end{center}

\begin{abstract}
\noindent For simple, simply-connected compact Lie groups, Dynkin embedding indices obey a universal scaling law with a direct topological meaning. Given an inclusion $f:G\hookrightarrow H$, the Dynkin embedding index $j_f$ is characterized equivalently by the induced maps on $\pi_3$ and on the canonical generators of $H^3$, $H^4(B{-})$, and $H^4(\Sigma{-})$. Consequently, $j_f$ controls instanton-number scaling, the quantization levels of Chern--Simons and Wess--Zumino--Witten terms, and the matching of gauge couplings and one-loop RG scales. We connect this picture to representation theory via the $\beta$-construction in topological $K$-theory, relating Dynkin indices to Chern characters through Harris' degree--$3$ formula and Naylor's suspended degree--$4$ refinement. Finally, we apply these results to F-theory to explain the prevalence of index-one matter: we propose a ``genericity heuristic'' where geometry favors regular embeddings (typically $j_f=1$) associated with minimal singularity enhancements, while higher-index embeddings require non-generic tuning.
\end{abstract}

\vfill 

\end{titlepage}

\tableofcontents

\newpage 

\section{Introduction}\label{sec:Intro}

Matter fields in gauge theories are organized by finite-dimensional irreducible representations of a Lie algebra $\mathfrak g$. They are called \emph{matter representations}. In geometric compactifications of M-theory and F-theory on elliptic fibrations, one obtains a rich spectrum of such representations which, until recently, lacked a clear intrinsic criterion distinguishing them from the larger universe of possible representations \cite{EK11}.

A powerful proposal due to Katz and Vafa \cite{Katz:1996xe} is to deduce the matter representations by embedding $\mathfrak g$ into a larger ``parent'' algebra $\mathfrak m$ and decomposing the adjoint representation of $\mathfrak m$ along $\mathfrak g$. As explained in \cite{EK11}, this \emph{Katz--Vafa method} is ambiguous if one specifies $\mathfrak g$ and $\mathfrak m$ only up to isomorphism: in general there are many distinct embeddings $\mathfrak g\hookrightarrow \mathfrak m$ producing different branching rules for the adjoint representation of $\mathfrak m$.

In \cite{EK11} we explained that, to control this ambiguity, Dynkin's refined theory of linear equivalence of subalgebras of simple Lie algebras becomes essential, together with the notions of embeddings and linear equivalence of subalgebras, characteristic representations, and in particular the \emph{Dynkin embedding index}. We review these notions in Section~\ref{sec:DI}. We argued that up to multiplicities and trivial summands, the matter representations appearing in generic F-theory compactifications are precisely the \emph{characteristic representations} of certain embeddings singled out by a simple index-one condition.

\medskip
\begin{proposal}[Dynkin index-one matter selection rule {\cite{EK11}}]\label{SR}
In generic F-theory configurations, up to multiplicities and trivial representations, the matter representations with respect to a Lie algebra $\mathfrak g$ arise as characteristic representations of embeddings
$
\mathfrak g\hookrightarrow \mathfrak m
$
whose Dynkin embedding index equals $1$ along each simple component of $\mathfrak g$.
\end{proposal}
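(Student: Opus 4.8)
The proposal is physical in nature, so the realistic target of a proof is to reduce it to a representation-theoretic statement that can be checked enhancement by enhancement, and then to isolate precisely the genericity hypothesis under which it holds. The plan is to (i) fix the local geometry of a codimension-two matter locus, (ii) identify the embedding $f:\mathfrak{g}\hookrightarrow\mathfrak{m}$ that the geometry realizes, (iii) invoke the topological characterization of $j_f$ reviewed in Section~\ref{sec:DI} to show that minimal enhancements force $j_f=1$, and (iv) show that every embedding with $j_f>1$ requires non-generic tuning. Throughout, ``up to multiplicities and trivial representations'' means I only track the non-adjoint, non-trivial summands produced by Katz--Vafa.

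First I would set up the local model. Along a divisor $S$ in the base the discriminant carries a Kodaira fiber whose dual intersection graph is the affine Dynkin diagram of $\mathfrak{g}$; over a generic point $p\in S$ of a matter locus the fiber degenerates further to type $\mathfrak{m}$. The Katz--Vafa rule assigns to $p$ the non-adjoint, non-trivial summands of $\mathrm{adj}(\mathfrak{m})$ under $f$, and the embedding $f$ itself is read off geometrically as the inclusion of the fibral curves surviving the partial resolution over $p$ into the full set of fibral curves of $\mathfrak{m}$ --- equivalently, as an inclusion of (affine) Dynkin diagrams. This dictionary is what converts the geometric datum into the algebraic datum $f$, and pinning it down rigorously (via an explicit crepant resolution and the identification of the fibral $\mathbb{P}^1$'s with simple coroots) is the first technical task.

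The heart of the argument is the third step. For a generic complex structure the fiber jump at $p$ is \emph{minimal}: the smallest singularity enhancement compatible with the monodromy cover and the Calabi--Yau condition, which corresponds to a root-subsystem inclusion $\mathfrak{g}\hookrightarrow\mathfrak{m}$ of Borel--de Siebenthal type. For such regular embeddings the simple roots of $\mathfrak{g}$ map to roots of $\mathfrak{m}$ of equal length, so the induced map on trace forms is the identity and $j_f=1$ on each simple factor. Here I would use the topological reformulation: $j_f$ is the multiplier of the map $f_\ast$ induced on $\pi_3$, equivalently the ratio of the canonical $H^4(B{-})$ generators. Since a regular enhancement shares its generating fibral $\mathbb{P}^1$'s with the resolution of $\mathfrak{m}$, this multiplier is forced to equal $1$ and the instanton number does not rescale across the enhancement. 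For the converse I would argue that $j_f>1$ demands that the simple roots of $\mathfrak{g}$ map to long integral combinations of $\mathfrak{m}$-roots (as for the principal $\mathfrak{su}(2)\hookrightarrow\mathfrak{su}(3)$ with $j_f=4$), which in Weierstrass terms imposes extra vanishing of the model's coefficients --- a positive-codimension condition in complex-structure moduli, hence non-generic. Concretely this is verified by tabulating, for each pair $(\mathfrak{g},\mathfrak{m})$ arising from a Kodaira enhancement, the index of every admissible embedding against its explicit resolution.

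The main obstacle will be turning ``generic'' into a theorem in the third step. That a generic elliptic fibration realizes only minimal, regular enhancements is not automatic: it requires controlling the monodromy covers that distinguish split, semi-split, and non-split fibers, together with global obstructions in the resolution, and there are configurations where the naive minimal enhancement is obstructed. A fully uniform proof therefore appears out of reach, which is exactly why the statement is phrased as a proposal rather than a theorem. The achievable result is the index-one conclusion established for each Kodaira enhancement class via explicit resolutions, accompanied by a precise description of the non-generic loci on which it can fail --- precisely the ``genericity heuristic'' advertised in the introduction.
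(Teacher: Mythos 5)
The first thing to say is that the paper never proves this statement: it is deliberately left as a \emph{Proposal}, and the paper's actual contribution is (a) the universal scaling theorem (Theorem~\ref{thm:universal_scaling}, Proposition~\ref{prop:pi3_H_over_G}) explaining \emph{why} $j_f=1$ is physically meaningful, and (b) the genericity heuristic of Section~\ref{sec:Ftheory_genericity}, which explicitly refrains from the kind of case-by-case theorem you outline. Your final paragraph, which concedes that ``generic'' resists formalization and retreats to a heuristic plus a description of non-generic loci, lands exactly where the paper does, and your use of the topological characterization of $j_f$ (the multiplier on $\pi_3$ and on $H^4(B-)$) is the correct import from Section~\ref{sec:universalscaling}. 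So in broad architecture you are compatible with the paper.

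However, the central mechanism of your step (iii) is wrong, and it is precisely the error the paper warns against in Section~\ref{sec:Ftheory_genericity}: \emph{``regular'' does not mean ``index one'', and ``index one'' does not mean ``regular''}. Your claimed chain ``minimal enhancement $\Rightarrow$ root-subsystem (Borel--de Siebenthal) inclusion $\Rightarrow$ simple roots map to roots of equal length $\Rightarrow$ $j_f=1$'' breaks at the third arrow for non-simply-laced algebras: a subsystem of \emph{short} roots is perfectly regular yet has $j_f>1$. With the normalization $(\theta,\theta)=2$, the $\mathfrak{sl}_2$ along a root $\alpha$ has index $\tfrac12(\alpha^\vee,\alpha^\vee)=2/(\alpha,\alpha)$, giving index $2$ for short roots of $B_n$, $C_n$, $F_4$ and index $3$ for short roots of $G_2$; the short-root factor of the regular (indeed maximal-rank) subalgebra $A_1\times A_1\subset G_2$ has index $3$. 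For the same reason your geometric argument --- that sharing fibral $\mathbb{P}^1$'s with the resolution of $\mathfrak m$ forces the $\pi_3$ multiplier to be $1$ --- fails when a fibral curve corresponds to a short coroot. The converse direction is equally flawed: $j_f>1$ does not require mapping to ``long integral combinations of roots'' (short-root regular subalgebras are a counterexample), and $j_f=1$ does not certify regularity --- the paper's own example $\mathfrak b_3\subset\mathfrak d_4$ is a maximal \emph{special} subalgebra of index one, so a tabulation keyed to a regular-versus-special dichotomy would misclassify exactly the cases the paper highlights. (A smaller point: Borel--de Siebenthal governs equal-rank subalgebras, whereas Katz--Vafa matter enhancements have $\operatorname{rank}\mathfrak m=\operatorname{rank}\mathfrak g+1$, so it is not the right framework for the inclusions you need.) The honest statement available --- and the one the paper makes --- is that minimal tunings frequently realize the \emph{smallest available} embedding index, which in many common families happens to be $1$, with monodromy/folding in the non-simply-laced cases treated as an additional input rather than a corollary of regularity.
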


\noindent\textbf{Problem Statement.}
A fundamental question left open in \cite{EK11} is the \emph{physical meaning} of this condition: why should index-one embeddings be privileged from the viewpoint of gauge theory?

\noindent\textbf{A topological perspective.}
To address this question, we start by looking at the problem through the lens of the algebraic-topology description of charges and topological terms in the physics of principal bundles. In doing so, we consider the simple simply-connected compact group associated to a given simple Lie algebra.

We will need some tools to be able to make meaningful statements, and we review the needed background in the appendices. For any space $X$, we denote by $\Sigma X$ its suspension and by $\Omega X$ its loop space. Given two spaces $X$ and $Y$, $[X,Y]$ denotes the set of homotopy classes of continuous maps from $X$ to $Y$. Given a topological group $G$, we denote by $BG$ its classifying space. There is a universal principal $G$--bundle
\begin{equation*}
G\longrightarrow EG\longrightarrow BG
\end{equation*}
with total space $EG$ contractible, and one has a homotopy equivalence $\Omega BG\simeq G$.

Principal $G$--bundles over a paracompact space $X$ are classified up to isomorphism by maps $X\to BG$ up to 
homotopy, that is, by the set $[X,BG]$. The group $\pi_3(G)\cong \mathbb Z$ governs instanton charge (equivalently $\pi_4(BG)$), while $H^4(BG,\mathbb Z)$ and $H^3(G,\mathbb Z)$ govern Chern--Simons and Wess--Zumino--Witten terms \cite{DW, Henriques}. In symmetry breaking of a Lie group $H$ to a subgroup $G$, the quotient $H/G$ is a homogeneous space, the projection $H\to H/G$ is a principal $G$--bundle, and the homotopy group $\pi_3(H/G)$ provides key information on instantons in the symmetry-breaking phase of the theory \cite{Csaki}. Armed with these tools, we can have a more refined look at matter representations and the role of the Dynkin embedding index. Connecting through the $K$--theory of compact Lie groups will also give an elegant overall picture of the common theme in topology unifying these different objects.

\subsection{Dynkin index-one matter selection rule as a topological conservation law}
In this paper we show that the Dynkin embedding index is the \emph{unique} integer controlling the simultaneous rescaling of the basic topological charges and quantization levels in gauge theory under embeddings of structure groups.

For a simple, simply-connected compact Lie group $G$, classical results of Cartan, Bott, Koszul, Hopf, and Borel imply
\begin{equation*}
\pi_3(G)\cong \mathbb Z,\qquad H^3(G,\mathbb Z)\cong \mathbb Z,\qquad H^4(BG,\mathbb Z)\cong \mathbb Z,\qquad H^4(\Sigma G,\mathbb Z)\cong \mathbb Z,
\end{equation*}
where $BG$ is the classifying space of $G$ while $\Sigma G$ is the suspension of $G$.
Cohomology suspension and transgression (via the Serre spectral sequence of $G\to EG\to BG$) identify compatible generators across these groups.
As each of these groups is infinite cyclic, each has exactly two generators related by a change of sign.

If $f:G\hookrightarrow H$ is an embedding between simple, simply-connected compact Lie groups, it induces homomorphisms
\begin{align*}
(Bf)^* &: H^4(BH,\mathbb Z)\longrightarrow H^4(BG,\mathbb Z),\\
f^* &: H^3(H,\mathbb Z)\longrightarrow H^3(G,\mathbb Z),\\
f_* &: \pi_3(G)\longrightarrow \pi_3(H),\\
(\Sigma f)^* &: H^4(\Sigma H,\mathbb Z)\longrightarrow H^4(\Sigma G,\mathbb Z).
\end{align*}
Since each target is infinite cyclic, each map is multiplication by an integer.
After fixing the standard normalization and making compatible choices of generators, \emph{the same positive integer appears in every case}, and it is exactly the Dynkin embedding index $j_f$.
Dynkin first proved that the Dynkin index appears as the multiplicative factor for the map $f^*$ in the case of classical groups \cite{Dynkin59}.
His result was generalized to all simply-connected simple compact Lie groups by Arkady~L.~Onishchik \cite{Oniscik60}.

The appearance of the same index in all these cases is particularly important because these maps have deep physical meaning: under an embedding $f: G\to H$, the spectra of
\begin{itemize}
\item instanton charges (governed by $\pi_3$),
\item Chern--Simons levels (classified by $H^4(BG,\mathbb Z)$), and
\item WZW levels (classified by $H^3(G,\mathbb Z)$)
\end{itemize}
are simultaneously rescaled by $j_f$.

Thus, the Dynkin embedding index $j_f$ equals $1$ if and only if these charges/levels are preserved.
If $j_f>1$, the minimal charge in $H$ is no longer minimal in $G$ and could lead to a fractionalization of the charge unit. That is exactly the phenomenon studied in \cite{Csaki} in the context of symmetry breaking of a group $H$ to a subgroup $G$.

This leads us to the following proposal.

\begin{proposal}[Topological conservation principle]\label{prop:TopologicalConservation}
Let $f:G\hookrightarrow H$ be an embedding of simple, simply-connected compact Lie groups with Dynkin embedding index $j_f$.
After compatible choices of generators, the induced maps on $\pi_3$, $H^3$, $H^4(B{-})$, and $H^4(\Sigma{-})$
are all multiplication by $j_f$. In particular $j_f=1$ if and only if these charges/levels are preserved
without fractionalization.
\end{proposal}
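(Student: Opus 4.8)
The plan is to fix a canonical generator in each of the four infinite cyclic groups, tie them together by the classical Hurewicz, suspension, and transgression isomorphisms, and then show that $f$ turns the four induced maps into a single homomorphism $\mathbb Z\to\mathbb Z$ repeated four times; naming that integer is then a citation to Dynkin and Onishchik. Concretely, since $G$ is simply connected and simple it is $2$-connected, so the Hurewicz map gives $\pi_3(G)\cong H_3(G,\mathbb Z)$ and the universal coefficient theorem gives $H^3(G,\mathbb Z)\cong\mathrm{Hom}(H_3(G,\mathbb Z),\mathbb Z)$. The reduced suspension isomorphism $\tilde H^3(G,\mathbb Z)\cong\tilde H^4(\Sigma G,\mathbb Z)$ fixes the generator of $H^4(\Sigma G,\mathbb Z)$, and the Serre spectral sequence of $G\to EG\to BG$ (with $EG$ contractible) yields a transgression isomorphism $\tau\colon H^3(G,\mathbb Z)\to H^4(BG,\mathbb Z)$ fixing the generator there. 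I would call the four generators \emph{compatible} when they correspond under these isomorphisms, normalizing the global sign so the chosen class is positive.

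Next I would invoke naturality throughout. The inclusion induces $Bf$ and $\Sigma f$, and each of the four constructions is functorial: Hurewicz is natural, matching $f_*$ on $\pi_3$ with $f_*$ on $H_3$; the universal coefficient isomorphism is natural, so $f_*$ on $H_3$ is transpose to $f^*$ on $H^3$; the suspension isomorphism is natural, matching $f^*$ on $H^3$ with $(\Sigma f)^*$ on $H^4(\Sigma{-})$; and transgression is natural for maps of fibrations, so the square relating $\tau$ for $G$ and for $H$ commutes, matching $f^*$ on $H^3$ with $(Bf)^*$ on $H^4(B{-})$. Composing these commuting squares shows that, in the compatible generators, all four induced maps are multiplication by one and the same integer $n$.

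Finally I would identify $n$. Dynkin computed this scalar for $f^*$ on $H^3$ (equivalently for $f_*$ on $\pi_3$) in the case of the classical groups and found it equal to the embedding index $j_f$, and Onishchik extended the computation to every simple, simply-connected compact Lie group. Hence $n=j_f$; since $j_f$ is a positive integer, the sign normalization above is automatically consistent. The asserted consequence is then immediate: the generator in $H$ pulls back to $j_f$ times the generator in $G$, so instanton charges, Chern--Simons levels, and WZW levels are rescaled uniformly and are preserved precisely when $j_f=1$, while any $j_f>1$ forces the charge fractionalization discussed above.

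The step I expect to be the main obstacle is the sign and variance bookkeeping in the second paragraph, specifically reconciling the \emph{contravariant} cohomology pullbacks with the \emph{covariant} map $f_*$ on $\pi_3$. The conceptual resolution is elementary: for infinite cyclic groups a homomorphism and its transpose are multiplication by the same integer, so variance does not change the scalar. The care lies in pinning down the orientation conventions for the suspension coordinate and for the fundamental class entering the transgression, so that the four maps share not merely the same magnitude but the same sign, yielding the uniform \emph{positive} factor $j_f$ rather than $\pm j_f$.
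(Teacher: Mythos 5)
Your proof is correct, and its topological bookkeeping (compatible generators tied together by Hurewicz, universal coefficients, suspension, and transgression, with naturality propagating a single scalar through all four infinite cyclic groups) matches what the paper does in Steps 2--4 of its Theorem~\ref{thm:universal_scaling}. Where you genuinely diverge is in identifying that scalar with $j_f$: you reduce all four maps to one integer $n$ and then cite Dynkin (classical groups) and Onishchik (general case) for $n=j_f$, whereas the paper proves the identification directly and self-containedly. It represents $x_3(K)$ by the bi-invariant Cartan $3$-form $\eta_K$ built from the reduced inner product, normalized by $\int_{SU(2)_\theta}\eta_K=1$ over the highest-root subgroup; the defining relation $\bigl(df(X),df(Y)\bigr)_H=j_f\,(X,Y)_G$ then gives the \emph{pointwise} identity of forms $f^*\eta_H=j_f\,\eta_G$, and pairing against $h(\gamma_3(G))$ yields $f^*x_3(H)=j_f\,x_3(G)$ and $f_*\gamma_3(G)=j_f\,\gamma_3(H)$ simultaneously, after which suspension and transgression naturality handle $u_4$ and $y_4$ exactly as in your second paragraph. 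The trade-off is clear: your route is shorter and purely homotopy-theoretic, but it outsources precisely the nontrivial content---the bridge between the Lie-algebraic definition of $j_f$ and the topological generators---to the literature, which is the very bridge the paper is constructing; the Cartan-form step also resolves your sign worry structurally, since $f^*\eta_H=j_f\,\eta_G$ holds as forms with $j_f>0$, so all four maps carry the positive factor $j_f$ with no residual $\pm$ ambiguity. If you keep the citation route, the one point to make explicit is that your ``positive generator'' conventions agree with those used by Dynkin--Onishchik (both rest on the long-root/highest-root normalization $(\theta,\theta)=2$, so they do); otherwise $n=j_f$ is pinned only up to sign.
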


When one restricts structure group along an embedding $f:G\hookrightarrow H$, for example in the analysis of instanton effects after partial breaking \cite{Csaki,Intriligator:1995id}, one needs to track how these integer normalizations change.

Since $G$ is compact, $f(G)\subset H$ is closed, so the homogeneous space $H/G$ is a smooth manifold and $H\to H/G$ is a principal $G$--bundle.
The same scaling implies the familiar discrete remnant
\begin{equation*}
\pi_3(H/G)\cong \mathbb Z/j_f\mathbb Z,
\end{equation*}
as discussed in detail in \cite{Csaki} and proved here as Proposition \ref{prop:pi3_H_over_G}. 
In particular, $\pi_3(H/G)$ is trivial {\em if and only if} $j_f=1$.

The topological conservation principle in Proposal~\ref{prop:TopologicalConservation}
is not an abstract coincidence: it is mirrored across several foundational analyses of
instanton physics \cite{Bernard,AHS,Intriligator:1995id,Csaki}. 
F Atiyah--Hitchin--Singer identify $\pi_{3}(G)\cong\mathbb Z$ (equivalently
$H^{4}(BG,\mathbb Z)\cong\mathbb Z$ via characteristic classes) as the basic
topological invariant organizing self-dual Yang--Mills sectors (see also \cite{BelavinEtAl1975BPST,Bernard}, and then they use
index-one subgroup chains to transport solutions without changing their charge
\cite{AHS}.
Second, Intriligator--Seiberg emphasize that in spontaneous symmetry breaking
$H\to G$, a sharp physical distinction between $j_f=1$ and $j_f>1$ is
precisely whether there exist genuine UV instanton sectors that cannot be realized
inside the unbroken subgroup and hence must be accounted for separately when the
massive fields are integrated out \cite[Footnote~2]{Intriligator:1995id}.
Third, Cs\'aki--Murayama identify the coset homotopy group $\pi_3(H/G)$ as the
detector of these additional sectors and relate it to the same Dynkin index,
yielding $\pi_{3}(H/G)\cong\mathbb Z/j_f\mathbb Z$ \cite{Csaki}.
We synthesize these perspectives in Section~\ref{sec:instanton_story}, where they
form a coherent physical interpretation of the index-one condition as a
conservation law for quantized topological charges under embeddings.

\subsection{Topology and K-theory behind the scaling}

A second theme of the paper is that the Dynkin index of a representation is detected by the Chern character in complex topological $K$-theory.
$K$-theory allows one to see the universal rescaling in action from the point of view of the Dynkin index of a representation.

Complex topological $K$-theory was developed in the early 1960s by Michael Atiyah and Friedrich Hirzebruch, inspired by the study of complex vector bundles and Bott periodicity. It is $\mathbb Z/2$--graded:
\begin{equation*}
K^*(X)=K^0(X)\oplus K^{-1}(X).
\end{equation*}
We associate to any compact Lie group $G$ its representation ring $R(G)$.
The $\beta$--construction gives a morphism from the representation ring $R(G)$ to the odd $K$--theory group $K^{-1}(G)$:
\begin{equation*}
\beta:\ R(G)\longrightarrow K^{-1}(G)\cong [G,U],
\end{equation*}
where $U$ denotes the stable unitary group, and $[G,U]$ denotes homotopy classes of continuous maps.
Via Bott periodicity, we also obtain a suspended class $\beta_\Sigma(\rho)\in \widetilde K^0(\Sigma G)$.

Harris' theorem identifies the Dynkin index $\ell_\rho$ of a representation $\rho$ as the coefficient of the generator in degree $3$ via the \emph{odd} Chern character \cite{Harris}:
\begin{equation*}
\operatorname{ch}_3(\beta(\rho))=\ell_\rho\,x_3\in H^3(G,\mathbb Z),
\end{equation*}
where the generator $x_3$ is chosen so that $\ell_\rho$ is positive.
Naylor's suspension-based description gives the complementary degree-$4$ statement (see Theorem~\ref{thm:Naylor_corrected}):
\begin{equation*}
\operatorname{ch}_2(\beta_\Sigma(\rho))=\ell_\rho\,u_4\in H^4(\Sigma G,\mathbb Z),
\qquad u_4=\Sigma^\sharp(x_3).
\end{equation*}
This functorial package is summarized in Figure~\ref{fig:hexagon}, which we use throughout to track how a single
integer controls $\pi_3$, $H^3$, $H^4(B-)$, $H^4(\Sigma-)$, and the $K$--theory classes attached to representations.

If $f:G\hookrightarrow H$ is an embedding, then naturality of the Chern character means precisely that for any
map $g:X\to Y$ and any $a\in K^*(Y)$ one has
\begin{equation*}
\operatorname{ch}(g^!a)=g^*\operatorname{ch}(a).
\end{equation*}
Applying this to $f$ (and to $\Sigma f$) shows that restriction of representations multiplies Dynkin indices by the same
integer that rescales $x_3$, $y_4$, and $u_4$. In this way, this integer can be read off equally from $\pi_3$, from
cohomology, or from the Chern-character images in $K$-theory. Concretely, after fixing compatible generators
\begin{equation*}
x_3(H)\mapsto j_f\,x_3(G),\qquad y_4(H)\mapsto j_f\,y_4(G),\qquad u_4(H)\mapsto j_f\,u_4(G),
\end{equation*}
one has for every representation $\rho$ of $H$
\begin{equation*}
\ell_{\rho\circ f}=j_f\,\ell_\rho,
\end{equation*}
and the same factor governs instanton-number scaling via $\pi_3$ (equivalently $\pi_4(B{-})$), the
transgression $H^3(G)\to H^4(BG)$, and the suspension class in $H^4(\Sigma G)$.

It follows that the Dynkin index of the representation $\rho$ of $H$ agrees with that of its restriction $\rho\circ f$ to $G$
if and only if the Dynkin embedding index is $j_f=1$.
This ``universal scaling'' viewpoint is the organizing principle of the paper and the mechanism behind the
selection rule that motivated \cite{EK11}.
\subsection{Organization of the paper}

The paper is organized as follows.

Section~\ref{sec:DI} fixes conventions for invariant symmetric bilinear forms on simple Lie algebras, including the reduced Killing form normalization (and the compact sign convention), and reviews Dynkin indices of representations and Dynkin embedding indices of inclusions.  We also recall the intrinsic group invariant $d_G$ (the gcd of Dynkin indices), and record the basic numerical data needed later (Table~\ref{Table:DataSimpleGroup} and Figure/Table~\ref{DynkinT5}).

Section~\ref{sec:instanton_story} explains why the dichotomy $j_f=1$ versus $j_f>1$ is physically meaningful in gauge theory.  We synthesize the viewpoints of Atiyah--Hitchin--Singer, Intriligator--Seiberg, and Cs\'aki--Murayama to show that $j_f=1$ is precisely the condition under which the minimal instanton sector (and hence the basic charge normalization) is preserved under embeddings and symmetry breaking, while $j_f>1$ detects genuinely new coset sectors and “fractionalization” phenomena.  We also discuss the dynamical matching of couplings and scales, and briefly indicate the parallel normalization issues that appear in string compactifications.

Section~\ref{sec:Ftheory_genericity} provides a geometric heuristic explaining why index-one embeddings occur frequently in standard F-theory engineering: codimension-two enhancements in minimal Weierstrass/Tate tunings often realize the smallest available embedding index, and in many common families that smallest index is $1$.  We also emphasize the distinction between regular versus special subalgebras and outline where $j_f>1$ can plausibly arise in more constrained (non-generic) corners.

Section~\ref{sec:K} develops the $K$-theoretic bridge between representation theory and topology.  We review the $\beta$-construction, Bott suspension, the odd Chern character, and the representation-theoretic structure of $K(G)$.  We then state Harris’ degree-$3$ formula and Naylor’s suspended degree-$4$ refinement, and package these compatibilities into the functorial “hexagon” of Figure~\ref{fig:hexagon}.

Section~\ref{sec:universalscaling} proves the universal scaling theorem: for an embedding $f:G\hookrightarrow H$ of simple, simply-connected compact Lie groups, a single integer rescales in a compatible way the generators in $H^{3}(-)$, $H^{4}(B-)$, $H^{4}(\Sigma-)$, and $\pi_{3}(-)$, and this integer is exactly the Dynkin embedding index $j_f$.  We also derive $\pi_3(H/G)\cong \mathbb Z/j_f\mathbb Z$ and explain how naturality in $K$-theory forces the corresponding scaling law for Dynkin indices under restriction of representations, thereby identifying the index-one selection rule with a conservation law for the basic quantized topological charges and levels.

The appendices collect background and normalizations used throughout.  Appendix~\ref{sec:A} summarizes the low-degree topology of $G$ and $BG$, the suspension/loop formalism, and the transgression/suspension compatibilities that organize the generators $x_3$, $y_4$, and $u_4$.  Appendix~\ref{sec:AppB} records the compact-sign conventions and the highest-root $SU(2)$ normalization used to pin down generators and the Cartan $3$-form.

\medskip
\noindent\textbf{Reading guide.}
Readers primarily interested in the physical meaning of the index-one condition may read
Section~\ref{sec:DI} $\rightarrow$ Section~\ref{sec:instanton_story} $\rightarrow$ Section~\ref{sec:universalscaling}.
Readers focused on the $K$-theoretic interpretation may read
Section~\ref{sec:K} (especially Theorems~\ref{thm:Harris} and \ref{thm:Naylor_corrected} and Figure~\ref{fig:hexagon})
before moving to Section~\ref{sec:universalscaling}.
Readers approaching from F-theory may read Section~\ref{sec:Ftheory_genericity} after the definitions in
Section~\ref{sec:DI}, and then consult Sections~\ref{sec:instanton_story} and~\ref{sec:universalscaling} for the
topological mechanism underlying the index-one heuristic.

\section{Invariant symmetric bilinear forms}\label{sec:DI}

Let $\mathfrak{g}$ be a semisimple Lie algebra over $\mathbb{C}$ with Lie bracket $[\cdot,\cdot]$.
A symmetric bilinear form $B(\cdot,\cdot):\mathfrak{g}\times\mathfrak{g}\to\mathbb{C}$ is said to be
{\em $\operatorname{Ad}$-invariant} (or {\em invariant}) if
\begin{equation}\label{eq:ad-invariance-lie}
B([X,Y],Z) + B(Y,[X,Z]) = 0 \qquad \text{for all }X,Y,Z\in\mathfrak{g}.
\end{equation}
Equivalently, if $G$ is a connected Lie group with Lie algebra $\mathfrak{g}$, invariance means
\begin{equation*}
B(\operatorname{Ad}_g X,\operatorname{Ad}_g Y)=B(X,Y)\qquad \text{for all }g\in G,\ X,Y\in\mathfrak g.
\end{equation*}

\begin{rem}[Semisimple versus simple]\label{rem:semisimple-forms}
If $\mathfrak g=\bigoplus_{a=1}^s \mathfrak g_a$ is semisimple with simple ideals $\mathfrak g_a$,
then the space of invariant symmetric bilinear forms on $\mathfrak g$ is $s$-dimensional and is spanned
by the Killing forms of the simple summands.  In particular, when $\mathfrak g$ is \emph{simple} the space of
invariant forms is one-dimensional, so any two invariant symmetric bilinear forms are proportional.
\end{rem}

A basic example is $\mathfrak{g}=\mathfrak{sl}(n,\mathbb{C})$, where the trace of the product of two matrices
\begin{equation*}
(X,Y)\longmapsto \operatorname{Tr}(XY)
\end{equation*}
is an invariant symmetric bilinear form.
More generally, if $f:\mathfrak{g}\to\mathfrak{sl}(V)$ is a finite-dimensional representation of a Lie algebra
$\mathfrak g$, then
\begin{equation}\label{eq:trace-form}
(X,Y)_f := \operatorname{Tr}\big(f(X)\,f(Y)\big)
\end{equation}
defines an invariant symmetric bilinear form on $\mathfrak g$.
These trace forms are the basic invariant quadratic inputs that appear in Chern--Weil theory
(via invariant quadratic polynomials on $\mathfrak g$).

\subsection{The Killing form and its normalization}

Let $\mathfrak{g}$ be a complex Lie algebra. The adjoint representation is
\begin{equation*}
\operatorname{ad}:\mathfrak{g}\to \mathfrak{gl}(\mathfrak{g}),\qquad X\mapsto \operatorname{ad}_X,
\quad\text{where }\operatorname{ad}_X(Y)=[X,Y].
\end{equation*}
The Jacobi identity is equivalent to the statement that $\operatorname{ad}_X$ is a derivation:
\begin{equation*}
\operatorname{ad}_X([Y,Z])=[\operatorname{ad}_X(Y),Z]+[Y,\operatorname{ad}_X(Z)].
\end{equation*}
The \emph{Killing form} is the trace form of the adjoint representation:
\begin{equation}\label{eq:killing}
\kappa(X,Y)=\operatorname{Tr}(\operatorname{ad}_X \circ \operatorname{ad}_Y).
\end{equation}
Cartan's criterion states that $\kappa$ is nondegenerate if and only if $\mathfrak g$ is semisimple.
For a compact real form of a semisimple Lie algebra, $\kappa$ is negative definite.

From now on we assume that $\mathfrak g$ is \emph{simple}. By Remark~\ref{rem:semisimple-forms},
any invariant symmetric bilinear form on $\mathfrak g$ is a scalar multiple of $\kappa$.
Let $\mathfrak h\subset\mathfrak g$ be a Cartan subalgebra and let $\Phi\subset\mathfrak h^*$ be the corresponding
root system. Denote by $\theta$ the highest root.

\begin{defn}[Reduced Killing form]\label{defn:reduced-killing}
The \emph{reduced Killing form} is the invariant symmetric bilinear form $(\cdot,\cdot)$ defined by
\begin{equation}\label{eq:reducedKilling}
(X,Y):=\frac{1}{2h^\vee}\,\kappa(X,Y),
\end{equation}
where $h^\vee$ is the dual Coxeter number of $\mathfrak g$.
Via the induced bilinear form on $\mathfrak h^*$, every long root has squared length equal to $2$.
In particular,
\begin{equation}\label{eq:theta-length}
(\theta,\theta)=2.
\end{equation}
\end{defn}

\begin{rem}\label{rem:compact_sign}
For the compact real form, the Killing form is negative definite; in that setting we use the sign-reversed
normalization so that the corresponding inner product is positive definite. See Appendix~\ref{sec:AppB}.
\end{rem}

For any finite-dimensional representation $\rho:\mathfrak g\to \mathfrak{gl}(V)$, the trace form
$(X,Y)\mapsto \Tr\big(\rho(X)\rho(Y)\big)$ is invariant and symmetric, hence there exists a scalar $\ell_\rho$ such that
\begin{equation}\label{eq:trace_index}
\Tr\big(\rho(X)\rho(Y)\big)=\ell_\rho\,(X,Y).
\end{equation}

\subsection{Dynkin index of a linear representation}

Let $\mathfrak g$ be simple and equip it with the reduced Killing form $(\cdot,\cdot)$ from \eqref{eq:reducedKilling}.
Let $f:\mathfrak g\to \mathfrak{gl}(V)$ be a finite-dimensional representation.
Then \eqref{eq:trace-form} defines an invariant symmetric bilinear form $(\cdot,\cdot)_f$ on $\mathfrak g$.
Since $\mathfrak g$ is simple, there exists a unique scalar $\ell_f\ge 0$ such that
\begin{equation}\label{eq:def-ellf}
\operatorname{Tr}\big(f(X)f(Y)\big)=\ell_f\,(X,Y)\qquad (X,Y\in \mathfrak g).
\end{equation}

\begin{defn}\label{defn:Dynkin-index-rep}
The scalar $\ell_f$ is called the \emph{Dynkin index} of the representation $f$.
\end{defn}

Clearly $\ell_f=0$ if and only if $f$ is the trivial representation.
If $f$ is irreducible with highest weight $\lambda_f$, Dynkin showed that \cite[Theorem 2.5]{Dynkin.SubA}
\begin{equation}\label{eq:DynkinIndexHighestWeight}
\ell_f=\frac{\dim f}{\dim \mathfrak g}\,(\lambda_f,\lambda_f+2\rho),
\end{equation}
 where $\rho$ is the Weyl vector (half the sum of positive roots).
With our normalization of $(\cdot,\cdot)$, one has $\ell_f\in \mathbb Z_{\ge 0}$ for every irreducible
representation (see Dynkin; modern accounts include Gross and related references).

The Killing form $\kappa$ corresponds to the adjoint representation, so \eqref{eq:reducedKilling} gives
\begin{equation*}
\kappa(X,Y)=2h^\vee\,(X,Y),
\end{equation*}
and hence the Dynkin index of the adjoint representation is
\begin{equation}\label{eq:ell-adj}
\ell_{\mathrm{adj}}=2h^\vee.
\end{equation}
Comparing \eqref{eq:DynkinIndexHighestWeight} with the adjoint case (highest weight $\theta$ and $\dim f=\dim\mathfrak g$)
also yields the standard identity
\begin{equation}\label{eq:dual-coxeter-theta-rho}
h^\vee=1+(\theta,\rho).
\end{equation}

\subsection{Dynkin index of an embedding}

Let $f:\tilde{\mathfrak m}\hookrightarrow \mathfrak m$ be an injective homomorphism between simple Lie algebras.
Choose Cartan subalgebras $\tilde{\mathfrak h}\subset \tilde{\mathfrak m}$ and $\mathfrak h\subset \mathfrak m$
such that $f(\tilde{\mathfrak h})\subset \mathfrak h$.
Let $(\cdot,\cdot)_{\tilde{\mathfrak m}}$ and $(\cdot,\cdot)_{\mathfrak m}$ denote the reduced Killing forms on
$\tilde{\mathfrak m}$ and $\mathfrak m$, each normalized so that the highest root in the corresponding root system
has squared length $2$.

\begin{defn}\label{defn:embedding-index}
The \emph{Dynkin embedding index} of $f$ is the number $j_f$ such that
\begin{equation}\label{eq:def-jf}
(f(X),f(Y))_{\mathfrak m}=j_f\,(X,Y)_{\tilde{\mathfrak m}}, \quad \text{for all}\quad X,Y\in \tilde{\mathfrak m}.
\end{equation}
\end{defn}

Let $\tilde\theta$ be the highest root of $\tilde{\mathfrak m}$, by definition of the reduced Killing form we have 
$(\tilde\theta,\tilde\theta)_{\tilde{\mathfrak m}}=2$.
Then \eqref{eq:def-jf} implies
$(f(\tilde\theta),f(\tilde\theta))_{\mathfrak m}=j_f\,(\tilde\theta,\tilde\theta)_{\tilde{\mathfrak m}}=2j_f$. We conclude that (see \cite[Equation 2.3]{Dynkin.SubA})
\begin{equation}\label{eq:jf-half}
j_f=\frac12\,(f(\tilde\theta),f(\tilde\theta))_{\mathfrak m}.
\end{equation}
Dynkin proved that in fact $j_f$ is always a positive integer. 

\begin{thm}[See Theorem 2.2 of  \cite{Dynkin.SubA}]\label{thm:integrality_jf}
The Dynkin embedding index $j_f$ is a positive integer.
\end{thm}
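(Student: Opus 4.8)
The plan is to prove integrality of $j_f$ by exhibiting $j_f$ as the Dynkin index of a genuine representation of $\tilde{\mathfrak m}$, thereby reducing the claim to the integrality of representation indices asserted after \eqref{eq:DynkinIndexHighestWeight}. The key observation is that the embedding $f:\tilde{\mathfrak m}\hookrightarrow\mathfrak m$ composed with \emph{any} faithful representation $\rho:\mathfrak m\to\mathfrak{gl}(V)$ produces a representation $\rho\circ f$ of $\tilde{\mathfrak m}$, and the trace forms transform in a controlled way. Concretely, I would first compute that for $X,Y\in\tilde{\mathfrak m}$,
\begin{equation*}
\operatorname{Tr}\!\big((\rho\circ f)(X)\,(\rho\circ f)(Y)\big)=\ell_\rho\,(f(X),f(Y))_{\mathfrak m}=\ell_\rho\, j_f\,(X,Y)_{\tilde{\mathfrak m}},
\end{equation*}
using \eqref{eq:trace_index} applied to $\mathfrak m$ together with the defining relation \eqref{eq:def-jf}. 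Comparing with the definition \eqref{eq:def-ellf} of the representation index for $\rho\circ f$ on $\tilde{\mathfrak m}$ yields the multiplicativity identity $\ell_{\rho\circ f}=j_f\,\ell_\rho$.

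Next I would choose $\rho$ to be the adjoint representation of $\mathfrak m$, so that $\ell_\rho=\ell_{\mathrm{adj}}=2h^\vee_{\mathfrak m}$ by \eqref{eq:ell-adj}, a positive integer. The identity then reads $\ell_{\mathrm{adj}\circ f}=2h^\vee_{\mathfrak m}\, j_f$. Since $\mathrm{adj}\circ f$ is an honest (finite-dimensional) representation of the simple Lie algebra $\tilde{\mathfrak m}$, its index $\ell_{\mathrm{adj}\circ f}$ is a nonnegative integer by the integrality statement recorded after \eqref{eq:DynkinIndexHighestWeight}. Hence $2h^\vee_{\mathfrak m}\,j_f\in\mathbb Z$; but this alone only gives $j_f\in\frac{1}{2h^\vee_{\mathfrak m}}\mathbb Z$, which is weaker than what we want. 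To get genuine integrality, the cleaner route is to invoke the $SU(2)$ restriction: pick the highest-root $\mathfrak{su}(2)_\theta\subset\tilde{\mathfrak m}$ and use \eqref{eq:jf-half}, reducing $j_f$ to the index of the restriction of a single fundamental representation along $\mathfrak{su}(2)\hookrightarrow\tilde{\mathfrak m}\hookrightarrow\mathfrak m$, where indices of $\mathfrak{su}(2)$-representations are manifestly integral.

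The main obstacle is precisely closing the gap between ``$j_f$ times an integer is an integer'' and ``$j_f$ is an integer.'' The multiplicativity argument alone is insufficient because it introduces the spurious denominator $2h^\vee$. I expect the decisive step to be the reduction to $\mathfrak{su}(2)$ via \eqref{eq:jf-half}: since $j_f=\tfrac12(f(\tilde\theta),f(\tilde\theta))_{\mathfrak m}$ is, up to the factor $\tfrac12$, the squared length of the image of a highest root, one realizes $j_f$ as the Dynkin index of the restriction to the principal $\mathfrak{su}(2)_\theta$ of a suitable representation of $\mathfrak m$. Because every weight of a finite-dimensional $\mathfrak{su}(2)$-representation is an integer and its index equals $\tfrac13\sum_{\text{weights}} m^2$ evaluated against the normalized form with $(\theta,\theta)=2$, integrality follows from the elementary arithmetic of weight multiplicities. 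The careful bookkeeping is to verify that the normalization conventions on both $\tilde{\mathfrak m}$ and $\mathfrak m$ (each fixing long roots to squared length $2$) are compatible with this $\mathfrak{su}(2)$ computation, so that no stray normalization factor survives; once that is checked, the positivity of $j_f$ is immediate from positive-definiteness of the form on the compact real form (Remark~\ref{rem:compact_sign}), and strict positivity follows since $f$ is injective and hence $f(\tilde\theta)\neq 0$.
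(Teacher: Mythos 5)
Your reduction to the highest-root $\mathfrak{su}(2)$ is the right skeleton (it is essentially the shape of Dynkin's own argument), but the closing step does not actually close the gap you correctly identified. Restricting a representation $\rho$ of $\mathfrak m$ along $\mathfrak{su}(2)_\theta\hookrightarrow\tilde{\mathfrak m}\xrightarrow{f}\mathfrak m$ and invoking integrality of $\mathfrak{su}(2)$-indices yields exactly $\ell_{\rho\circ f\circ\iota}=j_f\,\ell_\rho\in\mathbb Z$ (the highest-root subalgebra has embedding index $1$ in $\tilde{\mathfrak m}$, so composing changes nothing), i.e.\ the same ``spurious multiplier'' you flagged for the adjoint now reappears for \emph{every} $\rho$. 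The best this method can give is $j_f\,d_{\mathfrak m}\in\mathbb Z$, where $d_{\mathfrak m}=\gcd_\rho\ell_\rho$; by Table~\ref{Table:DataSimpleGroup} this gcd exceeds $1$ for $Spin(n)$ ($n\ge7$), $G_2$, $F_4$, $E_6$, $E_7$, and $E_8$ (where it is $60$), so ``realizing $j_f$ as the Dynkin index of the restriction of a suitable representation'' is impossible in general: those algebras simply have no representation of index $1$. (A minor slip: the $\mathfrak{su}(2)$-index of a representation with weights $m$ is $\tfrac12\sum_m m^2=\sum_{m>0}m^2$, not $\tfrac13\sum m^2$; this does not affect the structural problem.)

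The missing idea is to apply $\mathfrak{sl}_2$-integrality not to trace forms of representations but to the Cartan element of the triple itself. The highest-root triple in $\tilde{\mathfrak m}$ maps under $f$ to an $\mathfrak{sl}_2$-triple in $\mathfrak m$ with semisimple element $h:=f(\tilde\theta^\vee)$; every finite-dimensional representation of $\mathfrak m$ restricts to this $\mathfrak{sl}_2$, so every weight $\lambda$ of $\mathfrak m$ satisfies $\lambda(h)\in\mathbb Z$, which places $h$ in the dual of the weight lattice, namely the coroot lattice $Q^\vee(\mathfrak m)$ --- a strictly stronger statement than membership in the coweight lattice, and the point where the denominators die. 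Since the basic quadratic form $Q(H)=\tfrac12(H,H)$ is $\mathbb Z$-valued on the coroot lattice (Theorem~\ref{thm:Deligne_generator_H4_killing_version}), equation \eqref{eq:jf-half} gives $j_f=\tfrac12\bigl(f(\tilde\theta),f(\tilde\theta)\bigr)_{\mathfrak m}=Q(h)\in\mathbb Z$, and your positivity argument (positive-definiteness on the compact form plus injectivity of $f$) is fine as stated. For comparison: the paper does not prove Theorem~\ref{thm:integrality_jf} in place at all --- it cites Dynkin's case-by-case $\mathfrak{sl}_2$ analysis and the algebraic proofs of Onishchik--Vinberg and Braden, and instead derives integrality topologically in Theorem~\ref{thm:universal_scaling}, where $f_*$ on $\pi_3\cong\mathbb Z$ is multiplication by an integer that Step~1 of that proof identifies with $j_f$, so integrality is automatic there. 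Your multiplicativity identity $\ell_{\rho\circ f}=j_f\,\ell_\rho$ is correct and is proved the same way in the paper, but on its own it is a consequence of integrality-flavored structure, not a source of it.
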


Dynkin's original proof \cite{Dynkin.SubA} proceeds by a detailed Lie--theoretic analysis of $\mathfrak{sl}_2$-subalgebras
(equivalently, of $A_1$-embeddings) inside each simple Lie algebra type.
There are also conceptual proofs: topologically, once one identifies $j_f$ with the integer by which $f$
scales the canonical generator of $H^3(G,\mathbb Z)\cong \mathbb Z$ (equivalently, the degree on
$\pi_3(G)\cong \mathbb Z$), integrality becomes automatic.  We develop this topological interpretation in
Section~\ref{sec:universalscaling}. For purely algebraic proofs, see \cite[Chap 3. \S 2.9]{OnishchikVinberg:LGIII} and \cite{Braden:1991}.

\subsection{The Dynkin index of a simple Lie group}\label{subsec:DI_dG}

In addition to the Dynkin index $\ell_\rho$ of a representation $\rho$ and the Dynkin embedding index $j_f$,
we also consider an intrinsic integer attached to a compact Lie group \cite{Dynkin59,LaszloSorger,Harris,Naylor}.

\begin{defn}[Dynkin index of a simple Lie group {\cite{LaszloSorger,Harris}}]\label{def:dG} 
Let $G$ be a compact Lie group. Fix the normalization of the invariant bilinear form on $\mathfrak g$ so that long roots have squared length $2$. The \emph{Dynkin index of $G$} is the integer
\begin{equation}
d_G \;:=\; \gcd\{\ell_\rho \;:\; \rho \text{ a finite-dimensional representation of }G\}.
\end{equation}
Equivalently, since the Dynkin index is additive under direct sums ($\ell_{V\oplus W} = \ell_V + \ell_W$), the set of all indices forms an ideal in $\mathbb{Z}$; $d_G$ is the unique positive generator of this ideal.
\end{defn}

\begin{rem}[Historical note: Dynkin and Onishchik]
The integer $d_G$ was implicitly identified by Dynkin for the classical groups in his study of the topological characteristics of homomorphisms \cite{Dynkin59}. By analyzing the map $\pi_3(G) \to \pi_3(U(N)) \cong \mathbb{Z}$ induced by the standard representations, Dynkin found that the minimal topological degree is $1$ for $SU(n)$ and $Sp(n)$, but equals $2$ for the spin groups $Spin(n)$ ($n\ge 7$). This reflects the fact that the vector representation of $Spin(n)$ carries two units of the intrinsic topological charge normalized by the highest root. The topological interpretation was later generalized to all simple Lie groups (including the exceptional cases) by Dynkin's student Arkady L. Onishchik \cite{Oniscik60}, who established the integrality of the index using cohomological methods.
\end{rem}

\paragraph{Computing $d_G$ from finitely many representations}
In practice one may compute $d_G$ as the gcd of the Dynkin indices of the fundamental representations.
This reduction is standard (see \cite{LaszloSorger}) in view of the tensor-product identity satisfied by the Dynkin index \cite[\S 2 Equation 2.25]{Dynkin.SubA}:
\begin{equation}
\ell_{V\otimes W}=\ell_V\dim(W)+\ell_W\dim(V),
\end{equation}
together with the fact that every finite irreducible representation occurs as a summand of a tensor product of
fundamental representations.

\medskip

\begin{rem}[Other incarnations of $d_G$]\label{rem:dG_other_occurrences}
The integer $d_G$ appears in several equivalent guises.

\noindent\emph{(i) $K$--theory / odd Chern character (Harris).}
Harris \cite{Harris} establishes that the image of the Chern character $\operatorname{ch}_3: K^{-1}(G) \to H^3(G,\mathbb{Z})$ is the subgroup generated by $d_G\,x_3(G)$, where $x_3(G)$ is the canonical generator of $H^3(G,\mathbb{Z})$; in this framework, $d_G$ appears as the index of the image lattice in the integral cohomology.

\noindent\emph{(ii) Suspended Chern character (Naylor).}
Naylor \cite{Naylor} identifies the image of the relevant low-degree Chern character with
$d_G\,\mathbb Z$ inside the target infinite cyclic group (see Theorem~\ref{thm:Naylor_corrected}); thus $d_G$ measures the precise divisibility defect of the representation-theoretic map.

\noindent\emph{(iii) Braden's pairing matrix.}
The same integer $d_G$ appears algebraically as the gcd of the entries of Braden's matrix
$H_{ij}$, defined by an integral pairing on dominant weights evaluated on the fundamental weights (cf.\ \cite[Table~1]{Braden:1991}).

\noindent\emph{(iv) Determinant line bundles on moduli.}
Laszlo--Sorger introduce $d_G$ in their study of determinant line bundles on moduli of $G$--bundles;
their ``Dynkin index of $G$'' agrees with Definition~\ref{def:dG} \cite{LaszloSorger}.
\end{rem}

\paragraph{Relation to Chern--Weil traces.}
Let $P\to M$ be a principal $G$-bundle over a closed oriented $4$-manifold $M$, and let $A$ be a connection on $P$
with curvature $F_A\in \Omega^2(M;\operatorname{ad}P)$.
For a complex representation $\rho:G\to U(V)$, write $\Tr_\rho$ for the trace in $\operatorname{End}(V)$; then Chern--Weil theory gives
\begin{equation}\label{eq:TrF2_vs_basic_charge}
\frac{1}{8\pi^2}\int_M \Tr_\rho(F_A\wedge F_A)
\;=\;
\ell_\rho \, k(P),
\end{equation}
where $\ell_\rho$ is the Dynkin index of the representation $\rho$ , and where
\begin{equation}
k(P)\;:=\;\big\langle c^*y_4(G),[M]\big\rangle\in\mathbb Z
\end{equation}
is the intrinsic topological charge of $P$, defined using the basic generator $y_4(G)\in H^4(BG,\mathbb Z)$ and the
classifying map $c:M\to BG$ of $P$.
Thus, for a fixed bundle $P$, 
the value of the Chern--Weil integral depends on the representation $\rho$. The set of all possible values obtained by varying $\rho$ generates the ideal $d_G\mathbb Z$, where
\begin{equation*}
d_G=\gcd\{\ell_\rho:\rho\in R(G)\}.
\end{equation*}

\begin{rem}[Application: Green--Schwarz trace ratios]\label{rem:GS_trace_ratio}
Trace conversion factors in anomaly cancellation can be expressed in terms of Dynkin indices of $G$ and the dual Coxeter number.
If $\Tr_{\ad}$ and $\Tr_{\rho}$ denote traces in the adjoint and in a representation $\rho$, then
$$
\Tr_{\ad}(F^2) \;=\; A_{\ad/ \rho}\,\Tr_{\rho}(F^2),
\qquad\text{with}\qquad
A_{\adj/\rho} \;=\; \frac{\ell_{\ad}}{\ell_\rho}.
$$
Using $\ell_{\adj}=2h^\vee$ 
 (cf.\ \eqref{eq:ell-adj})
, one gets
\begin{equation}
A_{ad/\rho}=\frac{2h^\vee}{\ell_\rho}.
\end{equation}
In particular, if $\rho_F$ is the basic representation of $G$, one has  $\ell_{\rho_F}=d_G$, and therefore: 
\begin{equation}
A_{\rho_F}=\frac{2h^\vee}{d_G},
\end{equation}
recovering the standard factors in anomaly cancellations \cite{Green:1984sg,Erler, Schwarz95,Sagnotti}
\begin{align}
& \Tr_{\adj} (F^2)=2N\     \Tr_{\mathbf N} (F^2), &  SU(N), \\
& \Tr_{\adj} (F^2)=2(N+1)\   \Tr_{\mathbf{N}} (F^2),  & Sp(N), \\
& \Tr_{\adj} (F^2)=(N-2)\  \Tr_{\mathbf{N}}(F^2),   & Spin(N), \\
& \Tr_{\adj} (F^2)=4\ \Tr_{\mathbf{7}}(F^2), & G_2, \\
& \Tr_{\adj} (F^2)=3\ \Tr_{\mathbf{26}}(F^2), \quad &  F_4, \\
& \Tr_{\adj} (F^2)=4\ \Tr_{\mathbf{27}}(F^2), \quad &  E_6,\\
& \Tr_{\adj} (F^2)=3\  \Tr_{\mathbf{56}}(F^2), \quad & E_7,\\
& \Tr_{\adj} (F^2)=1\  \Tr_{\mathbf{248}}(F^2), \quad & E_8.
\end{align}
\end{rem}

\begin{table}[H]
\begin{center}
\scalebox{.98}{
\renewcommand{\arraystretch}{1.2}
\begin{tabular}{|c|c|c|c|c|c|c|c|c|c|}
\hline
Lie algebra $\mathfrak{g}$
& A$_\ell$ ($\ell\ge 1$)
& B$_\ell$ ($\ell\ge 3$)
& C$_\ell$ ($\ell\ge 2$)
& D$_\ell$ ($\ell\ge 4$)
& G$_2$
& F$_4$
& E$_6$
& E$_7$
& E$_8$ \\
\hline
 Group $G$
& $SU(\ell+1)$ 
& $Spin(2\ell+1)$ 
& $Sp(\ell)$ 
& $Spin(2\ell)$ 
& G$_2$
& F$_4$
& E$_6$
& E$_7$
& E$_8$ \\
\hline
$\dim \mathfrak g$
& $\ell(\ell+2)$
& $\ell(2\ell+1)$
& $\ell(2\ell+1)$
& $\ell(2\ell-1)$
& $14$
& $52$
& $78$
& $133$
& $248$\\
\hline
$|\Phi|$ 
& $\ell(\ell+1)$
& $2\ell^2$
& $2\ell^2$
& $2\ell(\ell-1)$
& $12$
& $48$
& $72$
& $126$
& $240$\\
\hline
$\det C=|P/Q|$
& $\ell+1$
& $2$
& $2$
& $4$
& $1$
& $1$
& $3$
& $2$
& $1$\\
\hline
$h$
& $\ell+1$
& $2\ell$
& $2\ell$
& $2\ell-2$
& $6$
& $12$
& $12$
& $18$
& $30$\\
\hline
$h^\vee$
& $\ell+1$
& $2\ell-1$
& $\ell+1$
& $2\ell-2$
& $4$
& $9$
& $12$
& $18$
& $30$\\
\hline
$d_G$
& $1$
& $2$
& $1$
& $2$
& $2$
& $6$
& $6$
& $12$
& $60$\\
\hline
$\dim \rho_F$
& $\ell+1$
& $2\ell+1$
& $2\ell$
& $2 \ell$
& $7$
& $26$
& $27$
& $56$
& $248$\\
\hline
$A_{\rho_F}$
& $2(\ell+1)$
& $2\ell-1$
& $2(\ell+1)$
& $2\ell-2$
& $4$
& $3$
& $4$
& $3$
& $1$\\
\hline
\end{tabular}}
\end{center}
\caption{
Simple Lie algebra $\mathfrak{g}$, 
corresponding simply-connected compact Lie group $G$ with Lie algebra $\mathfrak{g}$, 
dimension, 
the number of roots $\Phi$, 
the determinant of the Cartan matrix  of $\mathfrak{g}$ (or equivalently the index of the root lattice ($R$) in the weight lattice ($P$)), 
Coxeter number $h$, dual Coxeter number $h^\vee$, 
the Dynkin index $d_G$ of the simply-connected compact Lie group $G$ , the integer $A_{\rho_F}$ in the trace identity ($\Tr_{adj} F^2=A_{\rho_F} \Tr_{\rho_F} F^2$) between the adjoint and the fundamental representation.  }
\label{Table:DataSimpleGroup}
\end{table}

\begin{table}[H]
\begin{center}
\scalebox{.9}{
\begin{tabular}{c}
{\scalebox{1.1}{\begin{tikzpicture}
\node  at (-1,0){A$_n$};
\node[draw,circle,thick,scale=1.1, label=right:{$\binom{n-1}{0}$}] (1) at (0,0){};
\node[draw,circle,thick,scale=1.1, label=right:{$\binom{n-1}{1}$}] (2) at (0,-1){};
\node[draw,circle,thick,scale=1.1, label=right:{$\binom{n-1}{k-1}$}] (3) at (0,-2.4){};
\node[draw,circle,thick,scale=1.1, label=right:{$\binom{n-1}{n-2}$}] (4) at (0,-3.8){};
\node[draw,circle,thick,scale=1.1, label=right:{$\binom{n-1}{n-1}$}] (5) at (0,-4.8){};
\draw[thick] (1)--(2);
\draw[thick,dashed] (2)--(3)--(4);
\draw[thick] (4)--(5);
\end{tikzpicture}}
}

{\scalebox{1.1}{\begin{tikzpicture}
\node  at (-1,0){B$_n$};
\node[draw,circle,thick,scale=1.1, label=right:{$2\binom{2n-1}{0}$}] (1) at (0,0){};
\node[draw,circle,thick,scale=1.1, label=right:{$2\binom{2n-1}{1}$}] (2) at (0,-1){};
\node[draw,circle,thick,scale=1.1, label=right:{$2\binom{2n-1}{k-1}$}] (3) at (0,-2.4){};
\node[draw,circle,thick,scale=1.1, label=right:{$2\binom{2n-1}{n-2}$}] (4) at (0,-3.8){};
\node[draw,circle,thick,scale=1.1, label=right:{$2^{n-2}$}] (5) at (0,-4.8){};
\draw[thick] (1)--(2);
\draw[thick,dashed] (2)--(3)--(4);
\draw[thick, double, double distance =3pt] (4)--(5);
		\draw[thick] plot [smooth,tension=1] coordinates { (0,-4.5) (.13,-4.2) (.27,-4.1)};
			\draw[thick] plot [smooth,tension=1] coordinates { (0,-4.5) (-.13,-4.2) (-.27,-4.1)};				

\end{tikzpicture}}
}

{\scalebox{1.1}{\begin{tikzpicture}
\node  at (-1,0){C$_n$};
\node[draw,circle,thick,scale=1, label=right:{$\binom{2n-1}{0}$}] (1) at (0,0){};
\node[draw,circle,thick,scale=1, label=right:{$\binom{2n-1}{1}-\binom{2n-1}{0}$}] (2) at (0,-1){};
\node[draw,circle,thick,scale=1, label=right:{$\binom{2n-1}{k-1}-\binom{2n-1}{k-2}$}] (3) at (0,-2.4){};
\node[draw,circle,thick,scale=1, label=right:{$\binom{2n-1}{n-2}-\binom{2n-1}{n-3}$}] (4) at (0,-3.8){};
\node[draw,circle,thick,scale=1, label=right:{$\binom{2n-1}{n-1}-\binom{2n-1}{n-2}$}] (5) at (0,-4.8){};
\draw[thick] (1)--(2);
\draw[thick,dashed] (2)--(3)--(4);
\draw[thick, double, double distance=3pt] (4)--(5);

\draw[thick] plot [smooth,tension=1] coordinates { (0,-4.1) (.13,-4.4) (.27,-4.5)};
			\draw[thick] plot [smooth,tension=1] coordinates { (0,-4.1) (-.13,-4.4) (-.27,-4.5)};	

\end{tikzpicture}}
}
{\scalebox{1.1}{\begin{tikzpicture}
\node  at (-1,0){D$_n$};
\node[draw,circle,thick,scale=1, label=right:{$2\binom{2n-2}{0}$}] (1) at (0,0){};
\node[draw,circle,thick,scale=1, label=right:{$2\binom{2n-2}{1}$}] (2) at (0,-1){};
\node[draw,circle,thick,scale=1, label=right:{$2\binom{2n-2}{k-1}$}] (3) at (0,-2.4){};
\node[draw,circle,thick,scale=1, label=right:{$2\binom{2n-2}{n-3}$}] (4) at (0,-3.8){};
\node[draw,circle,thick,scale=1, label= left:{$2^{n-3}$}] (5) at (-.5,-4.8){};
\node[draw,circle,thick,scale=1, label= right:{$2^{n-3}$}] (6) at (.5,-4.8){};
\draw[thick] (1)--(2);
\draw[thick,dashed] (2)--(3)--(4);
\draw[thick] (4)--(5);
\draw[thick] (4)--(6);
\end{tikzpicture}}
}
\\
\\
{\scalebox{1.1}{\begin{tikzpicture}
\node  at (-1,-5){G$_2$};
\node[draw,circle,thick,scale=1, label= right:{$2$}] (5) at (0,-5){};
\node[draw,circle,thick,scale=1, label= right:{$8$}] (6) at (0,-6){};
\draw[thick, double, double distance =5 pt] (5)--(6);
\draw[thick] (5)--(6);
			\draw[thick] plot [smooth,tension=1] coordinates { (0,-5.3) (.13,-5.6) (.27,-5.7)};
			\draw[thick] plot [smooth,tension=1] coordinates { (0,-5.3) (-.13,-5.6) (-.27,-5.7)};	
\end{tikzpicture}}}
{\scalebox{1}{\begin{tikzpicture}
\node  at (-1,0){F$_4$};
\node[draw,circle,thick,scale=1, label= right:{$6$}] (1) at (0,0){};
\node[draw,circle,thick,scale=1, label= right:{$126$}] (2) at (0,-1){};
\node[draw,circle,thick,scale=1, label= right:{$882$}] (3) at (0,-2){};
\node[draw,circle,thick,scale=1, label= right:{$18$}] (4) at (0,-3){};
\draw[thick] (1)--(2);
\draw[thick] (3)--(4);
\draw[thick, double, double distance=3pt] (2)--(3);
\draw[thick] plot [smooth,tension=1] coordinates { (0,-1.3) (.13,-1.6) (.27,-1.7)};
			\draw[thick] plot [smooth,tension=1] coordinates { (0,-1.3) (-.13,-1.6) (-.27,-1.7)};
\end{tikzpicture}}}

{\scalebox{1}{\begin{tikzpicture}
\node  at (-1,0){E$_6$};
\node[draw,circle,thick,scale=1, label= left:{$24$}] (0) at (-1,-2){};
\node[draw,circle,thick,scale=1, label= right:{$6$}] (1) at (0,0){};
\node[draw,circle,thick,scale=1, label= right:{$150$}] (2) at (0,-1){};
\node[draw,circle,thick,scale=1, label= right:{$1800$}] (3) at (0,-2){};
\node[draw,circle,thick,scale=1, label= right:{$150$}] (4) at (0,-3){};
\node[draw,circle,thick,scale=1, label= right:{$6$}] (5) at (0,-4){};
\draw[thick] (1)--(2)--(3)--(4)--(5);
\draw[thick] (0)--(3);
\end{tikzpicture} }}
{\scalebox{1}{\begin{tikzpicture}
\node  at (-1,0){E$_7$};
\node[draw,circle,thick,scale=1, label= left:{$360$}] (0) at (-1,-3){};
\node[draw,circle,thick,scale=1, label= right:{$36$}] (1) at (0,-5){};
\node[draw,circle,thick,scale=1, label= right:{$4680$}] (2) at (0,-4){};
\node[draw,circle,thick,scale=1, label= right:{$297000$}] (3) at (0,-3){};
\node[draw,circle,thick,scale=1, label= right:{$17160$}] (4) at (0,-2){};
\node[draw,circle,thick,scale=1, label= right:{$648$}] (5) at (0,-1){};
\node[draw,circle,thick,scale=1, label= right:{$12$}] (6) at (0,0){};
\draw[thick] (1)--(2)--(3)--(4)--(5)--(6);
\draw[thick] (0)--(3);
\end{tikzpicture}}}

{\scalebox{1}{\begin{tikzpicture}
\node  at (-1,0){E$_8$};
\node[draw,circle,thick,scale=1, label= left:{$85500$}] (0) at (-1,-4){};
\node[draw,circle,thick,scale=1, label= right:{$60$}] (1) at (0,0){};
\node[draw,circle,thick,scale=1, label= right:{$14700$}] (2) at (0,-1){};
\node[draw,circle,thick,scale=1, label= right:{$177840$}] (3) at (0,-2){};
\node[draw,circle,thick,scale=1, label= right:{$141605100$}] (4) at (0,-3){};
\node[draw,circle,thick,scale=1, label= right:{$8345660400$}] (5) at (0,-4){};
\node[draw,circle,thick,scale=1, label= right:{$5292000$}] (6) at (0,-5){};
\node[draw,circle,thick,scale=1, label= right:{$1500$}] (7) at (0,-6){};
\draw[thick] (1)--(2)--(3)--(4)--(5)--(6)--(7);
\draw[thick] (0)--(5);
\end{tikzpicture}}}
\end{tabular}}
\end{center}
\caption{Dynkin indices of the fundamental representations of simple Lie algebras $A_n$ ($n\geq 1$), $B_n$ ($n\geq 3$), $C_n$ ($n\geq 2$), $D_n$ ($n\geq 4$), $G_2$, $F_4$, $E_6$, $E_7$, and $E_8$. 
This table is obtained from  \cite[Table 5]{Dynkin.SubA} after fixing some inaccuracies  for B$_n$,  C$_n$, and E$_8$.  }\label{DynkinT5}
\end{table}

\section{Index-one versus index >1 in instanton physics}\label{sec:instanton_story}

Throughout this section $H$ denotes a compact, connected, simply-connected, simple
Lie group (the UV gauge group), and $G\subset H$ a connected closed subgroup
(the unbroken gauge group after spontaneous symmetry breaking).
We write $f:G\hookrightarrow H$ for the inclusion and $j_f$ for its Dynkin
embedding index in the normalization fixed in Section~\ref{sec:DI} and used in
Proposal~\ref{prop:TopologicalConservation}.

\subsection{Instantons and characteristic classes}

A basic message of Atiyah--Hitchin--Singer \cite{AHS} is that self-dual
(anti-self-dual) Yang--Mills solutions in four dimensions naturally decompose into
topological sectors determined by the underlying principal bundle.
On $S^{4}$ (or on $\mathbb R^{4}$ with a framing at infinity), principal
$K$-bundles are classified by
\begin{equation*}
[S^{4},BK]\;\cong\;\pi_{4}(BK)\;\cong\;\pi_{3}(K),
\end{equation*}
and for $K$ simple and simply connected one has $\pi_{3}(K)\cong\mathbb Z$.
This integer is the instanton number \cite{Bernard,AHS,Weinberg}: for unitary groups it is the second Chern
class evaluated on $[S^{4}]$, and more generally it is represented (after a fixed
normalization of the invariant bilinear form) by the corresponding integral class
in $H^{4}(BK,\mathbb Z)$ via Chern--Weil theory.

Once instanton number is understood as the generator of $\pi_{3}(K)$ (equivalently
$H^{4}(BK,\mathbb Z)$), embeddings of structure groups become rigidly constrained.
Given $f:G\hookrightarrow H$, extension of structure group induces
\begin{equation*}
f_*:\pi_{3}(G)\longrightarrow \pi_{3}(H),
\qquad
(Bf)^*:H^{4}(BH,\mathbb Z)\longrightarrow H^{4}(BG,\mathbb Z),
\end{equation*}
and, after compatible choices of generators, both maps are multiplication by the
same integer $j_f$.
Thus $j_f=1$ is exactly the condition that \emph{instanton charge is preserved}
under extension of structure group.
In AHS this plays a concrete structural role: index-one subgroup chains provide a
mechanism for transporting known self-dual solutions (starting from $SU(2)$)
to larger gauge groups without altering their topological charge, enabling
dimension-count arguments on moduli spaces to force the existence of irreducible
solutions beyond those obtained by reduction to proper subgroups \cite{AHS}.

\subsection{Intriligator--Seiberg: when ``broken-sector instantons'' are physically meaningful}

Intriligator and Seiberg highlight a sharp physical distinction between $j_f=1$
and $j_f>1$ in the Higgs phase of a gauge theory with UV group $H$ broken to an
unbroken subgroup $G\subset H$ \cite[Footnote~2]{Intriligator:1995id}.
The point may be expressed directly in the language of instanton number:
if an $H$-bundle reduces to $G$ (equivalently, the corresponding connection is
supported in the unbroken subgroup), then its instanton number lies in the subgroup
$f_*\pi_{3}(G)\subset \pi_{3}(H)$, hence is divisible by $j_f$ after identifying
both $\pi_3$'s with $\mathbb Z$.
Consequently:
\begin{itemize}
\item If $j_f=1$ and $G$ is non-Abelian, the basic unit-charge $H$-instanton
sector can be represented inside $G$.  In this case it is neither gauge-invariant
nor necessary to isolate a separate notion of ``instantons in the broken part'' in
the low-energy description: the unbroken $G$-theory already accesses the minimal
instanton sector.
\item If $j_f>1$, then a unit $H$-instanton cannot be represented by any
configuration supported entirely inside $G$, since every $G$-supported sector has
charge in $j_f\mathbb Z\subset\mathbb Z\cong\pi_3(H)$.
Thus there exist genuine UV topological sectors whose representatives necessarily
involve the massive fields in the broken directions $\mathfrak h/\mathfrak g$.
When these massive multiplets are integrated out, their semiclassical contributions
(often described as \emph{constrained instantons}) must be retained, because they are
not reproduced by the nonperturbative dynamics of the low-energy $G$-theory alone.
\end{itemize}
This is precisely the mechanism by which $j_f>1$ signals a failure of ``charge
minimality'' to survive symmetry breaking: the UV theory possesses sectors invisible
to the unbroken subgroup.

\subsection{Third homotopy group of H/G and the Dynkin embedding index}

Cs\'aki and Murayama develop the Intriligator--Seiberg perspective systematically by identifying the
homogeneous space $H/G$ as the natural topological object controlling instanton
effects in the Higgs phase \cite{Csaki}.
Since $G$ is closed in $H$, the projection $H\to H/G$ is a principal
$G$-bundle, and the long exact sequence in homotopy yields a canonical quotient
description
\begin{equation*}
\pi_{3}(H/G)\;\cong\;\pi_{3}(H)\big/ f_*\pi_{3}(G).
\end{equation*}
When $H$ and $G$ are simple and simply connected, $\pi_{3}(H)\cong\pi_{3}(G)\cong
\mathbb Z$, and the subgroup $f_*\pi_{3}(G)\subset\pi_{3}(H)$ has index $j_f$.
Hence one obtains the discrete remnant
\begin{equation*}
\pi_{3}(H/G)\;\cong\;\mathbb Z/j_f\mathbb Z,
\end{equation*}
which we prove in Proposition~\ref{prop:pi3_H_over_G}.
In particular, $j_f>1$ produces additional $\mathbb Z/j_f\mathbb Z$-valued
coset-instanton sectors, exactly corresponding to the UV instanton sectors that cannot
be deformed into configurations supported entirely inside the unbroken subgroup $G$,
while $j_f=1$ implies $\pi_{3}(H/G)=0$ and no such extra sectors exist.

Across AHS, Intriligator--Seiberg, and Cs\'aki--Murayama, the same integer $j_f$
emerges as the discriminator between two regimes:
\begin{align*}
\begin{aligned}
    j_f=1 \ \Longleftrightarrow\ \text{preservation of the minimal quantized topological sector,}\\
    j_f>1 \ \Longleftrightarrow\ \text{new coset sectors and charge fractionalization phenomena.}
\end{aligned}
\end{align*}
This provides a concrete gauge-theoretic interpretation of the index-one condition:
it is the precise criterion for the conservation of the basic $\pi_{3}$-classified
instanton charge (and, by Proposal~\ref{prop:TopologicalConservation}, the allied
quantization levels) under embeddings of gauge groups.

\subsection{Dynamical scaling: coupling matching, instanton actions, and RG scales}
\label{subsec:dynamical_scaling}

Let $H$ be a compact, simple gauge group spontaneously broken to a connected subgroup
$f:G\hookrightarrow H$ at a mass scale $M$ by a Higgs expectation value.
Write $\mathfrak h=Lie(H)$ and $\mathfrak g=Lie(G)$, and denote by
\begin{equation}
f_*:\mathfrak g\longrightarrow \mathfrak h
\end{equation}
the induced Lie algebra homomorphism (the differential of $f$ at the identity).
We write $(\cdot,\cdot)_H$ and $(\cdot,\cdot)_G$ for the \emph{basic} $\Ad$-invariant inner products
on $\mathfrak h$ and $\mathfrak g$ (normalized as in Section~\ref{sec:DI}),
and we denote by $j_f\in\mathbb Z_{>0}$ the Dynkin embedding index, i.e.
\begin{equation}\label{eq:dynkin_index_embedding_inner_product}
\bigl(f_*(X),f_*(Y)\bigr)_H \;=\; j_f\,(X,Y)_G \qquad (X,Y\in\mathfrak g).
\end{equation}
This topological normalization has immediate dynamical consequences, because the Yang--Mills kinetic term
is defined using the same invariant quadratic form.

\paragraph{Tree-level matching of gauge couplings.}
In the ultraviolet $H$-theory, take the kinetic term normalized by the basic inner product:
\begin{equation}\label{eq:YM_H_basic}
\mathcal L_{H}\supset -\frac{1}{4g_H^2}\,(F_H,F_H)_H,
\end{equation}
and similarly in the infrared $G$-theory,
\begin{equation}\label{eq:YM_G_basic}
\mathcal L_{G}\supset -\frac{1}{4g_G^2}\,(F_G,F_G)_G.
\end{equation}
Just below $M$, the surviving field strength takes values in $f_*(\mathfrak g)\subset\mathfrak h$.
Restricting \eqref{eq:YM_H_basic} along $f_*$ and using \eqref{eq:dynkin_index_embedding_inner_product} gives
\begin{equation}
-\frac{1}{4g_H^2}\,(F_G,F_G)_H \;=\; -\frac{j_f}{4g_H^2}\,(F_G,F_G)_G.
\end{equation}
Comparing with \eqref{eq:YM_G_basic} yields the tree-level matching condition
\begin{equation}\label{eq:coupling_matching_basic}
\frac{1}{g_G^2(M)} \;=\; \frac{j_f}{g_H^2(M)} \,+\, \Delta_{\mathrm{thr}}(M),
\end{equation}
where $\Delta_{\mathrm{thr}}(M)$ denotes threshold corrections from integrating out the heavy vector multiplet
(or heavy gauge bosons) in the coset $H/G$ and any additional massive matter at the breaking scale.

\paragraph{Instanton action and ``fractionalization'' in the low-energy description.}
For a configuration whose curvature is supported in the embedded subalgebra $f_*(\mathfrak g)\subset\mathfrak h$,
the Yang--Mills action computed using the $H$-normalization differs from the $G$-normalization by exactly the same
factor $j_f$.  In particular, for an embedded $G$-instanton with integer charge
$k_G\in\pi_3(G)\cong\mathbb Z$ (equivalently, the class detected by the basic generator $y_4(G)\in H^4(BG,\mathbb Z)$),
the corresponding $H$-charge is
\begin{equation}\label{eq:charge_scaling}
k_H \;=\; j_f\,k_G,
\end{equation}
reflecting the universal scaling of the basic class under $Bf$ proved in Section~\ref{sec:universalscaling}.
Consequently, the classical instanton actions satisfy
\begin{equation}
S_H(k_H)\;=\;\frac{8\pi^2}{g_H^2}\,k_H,
\qquad
S_G(k_G)\;=\;\frac{8\pi^2}{g_G^2}\,k_G,
\end{equation}
and \eqref{eq:coupling_matching_basic} implies (at tree level, ignoring thresholds) that embedded configurations match:
\begin{equation}
S_G(k_G)\;=\;S_H(k_H)\qquad\text{for configurations supported in }f_*(\mathfrak g).
\end{equation}
Thus $j_f$ does \emph{not} introduce a discrepancy of actions for an honest embedded $G$-instanton; it introduces a
discrepancy between what the two theories call a ``unit'' of topological charge.

The genuinely new phenomenon for $j_f>1$ concerns configurations of the full $H$-theory that cannot be represented as
honest $G$-instantons after symmetry breaking.
A unit $H$-instanton has $k_H=1$, and by \eqref{eq:charge_scaling} it would correspond formally to $k_G=1/j_f$ in the
$G$-normalization. Since $\pi_3(G)\cong\mathbb Z$, there is no smooth $G$-bundle of charge $1/j_f$; rather, the missing
contributions appear as \emph{constrained instantons} (often called \emph{coset instantons}) whose core explores the
broken directions and whose size is stabilized by the Higgs scale.
This mechanism and its consequences in partially broken gauge groups are analyzed in detail by Cs\'aki and
Murayama~\cite{Csaki}.

\subsection{Parallel from gauge theory: why index-1 embeddings are structurally useful}
\label{subsec:gauge_theory_parallel}

It is worth noting that the practical usefulness of index-one embeddings is already visible in the early
gauge-theory literature: one efficient route to constructing instantons for a general compact simple group
$H$ is to start from the BPST $SU(2)$ instanton and extend the structure group along a chosen embedding
\begin{equation}
i:SU(2)\hookrightarrow H.
\end{equation}
When $i$ has Dynkin embedding index $j_i=1$, the resulting configuration lies in the \emph{minimal}
topological sector of $H$ (i.e.\ it represents the generator of $\pi_3(H)\cong\mathbb Z$), so it provides a
genuine ``seed'' for the full moduli problem without an \emph{a priori} charge inflation.

\medskip
\noindent\textbf{Highest-root $SU(2)$ as the canonical charge-one seed.}
Both Atiyah--Hitchin--Singer and Bernard--Christ--Guth--Weinberg single out the $SU(2)$ subgroup associated to
a \emph{long root}; with our normalization this is equivalently the \emph{highest-root} subgroup
\begin{equation}
i_\theta:SU(2)_\theta\hookrightarrow H,
\end{equation}
whose induced class $[S^3_\theta]= (i_\theta)_*[SU(2)]$ generates $\pi_3(H)$ and hence defines the unit-charge
sector. This is precisely the normalization adopted throughout this paper (Section~\ref{sec:DI}).

\medskip
\noindent\textbf{Atiyah--Hitchin--Singer: index-one chains for charge bookkeeping.}
AHS emphasize that one can transport existence results and compare reducible strata along subgroup chains while
keeping the notion of ``unit instanton'' stable. Concretely, they use familiar chains of inclusions
in which each step has Dynkin index $1$ (so the induced map on $\pi_3$ is multiplication by $1$):
\begin{align}\label{eq:AHS_classical_chains}
 SU(n) &\hookrightarrow SU(n+1),\\
 Spin(n) &\hookrightarrow Spin(n+1), \\
 Sp(n) &\hookrightarrow Sp(n+1), \\
 G_2 &\hookrightarrow Spin(7), \\
 Spin(9) &\hookrightarrow F_4 \hookrightarrow E_6 \hookrightarrow E_7 \hookrightarrow E_8,
\end{align}
In this way, ``charge one'' remains a stable concept across the families and exceptional series, which is
structurally useful when analyzing how reducible instantons sit inside the full moduli space \cite{AHS}.

\medskip
\noindent\textbf{Bernard--Christ--Guth--Weinberg: embedding BPST into the highest-root subgroup.}
Bernard--Christ--Guth--Weinberg operationalize the same idea by embedding the BPST $SU(2)$ solution into an
$SU(2)\subset H$ chosen to minimize the resulting topological charge; in a simple group this minimizing choice
is achieved by taking an $SU(2)$ corresponding to a \emph{maximal-length} (i.e.\ long) root, hence one may take
the highest root $\theta$. This produces the $k=1$ sector for an arbitrary gauge group $H$ and provides the
basic building block for semiclassical analysis in the minimal sector \cite{Bernard}.

\medskip
\noindent\textbf{Takeaway.}
Index-one embeddings are therefore not merely a normalization convenience: they are a practical device that lets
one seed the moduli problem for a general group in the minimal sector, and then compare reducible constructions
coming from proper subgroups against the expected dimension to force the existence of irreducible solutions.
This is exactly the same role played by $j_f=1$ throughout our discussion of conservation vs.\ fractionalization
of instanton sectors under symmetry breaking.

\paragraph{One-loop scale matching.}
At one loop, in a mass-independent scheme and away from thresholds, the running couplings satisfy
\begin{equation}
\frac{8\pi^2}{g_H^2(\mu)} = b_H\log\!\Bigl(\frac{\mu}{\Lambda_H}\Bigr),
\qquad
\frac{8\pi^2}{g_G^2(\mu)} = b_G\log\!\Bigl(\frac{\mu}{\Lambda_G}\Bigr),
\end{equation}
where $b_H$ and $b_G$ are the one-loop beta-function coefficients in the UV and IR effective theories.
Inserting \eqref{eq:coupling_matching_basic} at $\mu=M$ gives
\begin{equation}
b_G\log\!\Bigl(\frac{M}{\Lambda_G}\Bigr)
=
j_f\,b_H\log\!\Bigl(\frac{M}{\Lambda_H}\Bigr)
+
8\pi^2\,\Delta_{\mathrm{thr}}(M),
\end{equation}
and exponentiating this identity yields the scale relation
\begin{equation}\label{eq:Lambda_matching_basic}
\Bigl(\frac{\Lambda_G}{M}\Bigr)^{b_G}
\;=\;
\Bigl(\frac{\Lambda_H}{M}\Bigr)^{j_f\,b_H}
\times \exp\!\Bigl(-8\pi^2\,\Delta_{\mathrm{thr}}(M)\Bigr).
\end{equation}
Thus index-one embeddings ($j_f=1$) behave as a seamless identification (up to thresholds), whereas $j_f>1$ produces a
systematic power-law distortion governed by the same integer that controls the topological scaling of $H^4(B-)$ and
$\pi_3(-)$.

\paragraph{Remarks on conventions and scope.}
\begin{itemize}
\item The matching \eqref{eq:coupling_matching_basic} is stated in the normalization of the basic inner products used
throughout this paper. If one instead normalizes the kinetic term using a representation trace $\Tr_\rho$, the same
discussion holds after inserting the Dynkin index $\ell_\rho$ relating $\Tr_\rho$ to the basic form.
\item The phrase ``fractional $G$-charge'' is an \emph{effective} low-energy description of relevant $H$-theory saddles;
it does not assert the existence of honest $G$-bundles with non-integer characteristic class.
\item In supersymmetric settings (notably $\mathcal N=1$ or $\mathcal N=2$), it is often cleaner to state matching in
terms of holomorphic couplings and properly defined threshold functions; we do not pursue those refinements here.
\end{itemize}

\medskip

\subsection{Connection to F-theory and string compactifications.}
In F-theory compactifications with 7-branes wrapping a divisor $S$, the four-dimensional gauge kinetic term
descends from the brane worldvolume action and is controlled (at leading order) by the complexified K\"ahler modulus
$T_S$, so that $\mathrm{Re}(f)\sim \mathrm{Re}(T_S)\propto \mathrm{Vol}(S)$ up to threshold effects and
flux-dependent corrections. Equivalently, once one fixes an $\Ad$-invariant quadratic form on the gauge algebra, the
overall normalization of the gauge kinetic term is fixed; this is precisely the choice encoded by our \emph{basic}
normalization from Section~\ref{sec:DI}. \cite{BeasleyHeckmanVafa,WeigandTASI}

If a gauge group $H$ is Higgsed to an embedded subgroup $f\colon G\hookrightarrow H$ (e.g.\ in a Higgsing within a
fixed brane stack on $S$), then restricting the basic quadratic form from $\mathfrak h$ to
$f_*(\mathfrak g)\subset \mathfrak h$ introduces the Dynkin embedding index:
\begin{equation}
(f_*(X),f_*(Y))_H = j_f\,(X,Y)_G.
\end{equation}
Consequently, when the infrared coupling $g_G$ is expressed in the \emph{basic} normalization for $G$, it matches the
ultraviolet coupling $g_H$ by the same factor as in \eqref{eq:coupling_matching_basic},
\begin{equation}
\frac{1}{g_G^2(M)} = \frac{j_f}{g_H^2(M)} + \Delta_{\mathrm{thr}}(M),
\end{equation}
so the integer $j_f$ should be read as a \emph{normalization conversion} between the $H$- and $G$-basic quadratic forms
rather than as a literal rescaling of the geometric volume. \cite{DonagiWijnholt,WeigandTASI}

In heterotic dual descriptions, analogous normalization data are often packaged by affine (Kac--Moody) level(s) $k_a$,
which multiply the basic quadratic form and appear as overall factors in gauge couplings; in this sense, $j_f$ plays the
same bookkeeping role for subgroup embeddings at the level of quadratic-form normalization. \cite{MorrisonVafaII,DienesLevels}

Finally, the quotient $\pi_3(H/G)\cong \mathbb Z/j_f\mathbb Z$ identifies additional topological sectors in the Higgs
phase (the ``coset'' sectors of \cite{Csaki}), i.e.\ UV saddles that cannot be represented by configurations supported
entirely in the unbroken subgroup $G$. In F-theory, nonperturbative effects are naturally sourced by Euclidean D3-branes
(E3-instantons) or their F-theory/M-theory lifts, and gauge-instanton effects on 7-branes admit such Euclidean-brane
descriptions in appropriate limits. \cite{CveticThreeLooks}
It is therefore natural to expect that index-$j_f$ coset sectors can manifest as discrete selection rules or additional
nonperturbative contributions in symmetry-breaking backgrounds, although the precise realization depends on the global
compactification data (flux quantization, global form of the gauge group, orientifold/F-theory uplift, and thresholds).

\section{Index-one embeddings in F-theory: a genericity heuristic}\label{sec:Ftheory_genericity}

This section is a heuristic bridge between the \emph{topological} role of the Dynkin embedding index $j_f$
and the \emph{geometric} engineering of gauge algebras in F-theory.
Recall that $j_f$ is defined by comparing the basic invariant inner products under an embedding
$f\colon G\hookrightarrow H$,
\begin{equation}
(f_*(X),f_*(Y))_H \;=\; j_f\,(X,Y)_G,
\end{equation}
and that the same integer controls the induced maps on $H^4(B-)$ and on $\pi_3(-)$, hence the scaling of instanton
charge and the appearance of coset sectors $\pi_3(H/G)\cong \mathbb Z/j_f\mathbb Z$ in the Higgs phase.

A recurring empirical feature of many \emph{standard} F-theory constructions (notably those built from Tate/Weierstrass
models with ``minimal'' codimension-two enhancements) is that the relevant subgroup embeddings often have $j_f=1$.
One should not interpret this as a theorem about all embeddings that can occur in string compactifications; rather, it
is a \emph{genericity heuristic}: the geometry typically realizes embeddings that are compatible with the most economical
lattice inclusion dictated by the singular fiber and its degenerations, and these frequently come with index~$1$.

\subsection{From Kodaira enhancements to Lie embeddings}

In F-theory, a gauge algebra $\mathfrak g$ is engineered along a divisor $\Sigma$ by specifying a Kodaira fiber type
over $\Sigma$ (in a given Weierstrass/Tate model). Along codimension-two loci in the base (matter curves), the fiber
degenerates further and the singularity type enhances to some $\mathfrak h\supset \mathfrak g$.
Algebraically, this produces an embedding (well-defined up to conjugacy)
\begin{equation}
\mathfrak g \hookrightarrow \mathfrak h,
\end{equation}
and a branching of the adjoint of $\mathfrak h$ under $\mathfrak g$ that controls the localized matter content.
A key point for our purposes is that \emph{there can be multiple non-conjugate embeddings} of the same abstract
Lie algebra $\mathfrak g$ into $\mathfrak h$, with \emph{different} Dynkin indices and different branching rules.
The geometry (monodromy, splitting/non-splitting of fibers, and global consistency conditions) selects which embedding is
actually realized.

\subsection{Regular vs.\ special embeddings, and what geometry tends to prefer}

Dynkin’s classification distinguishes:
\begin{itemize}
\item \emph{Regular} (or $R$-) subalgebras: those obtained by taking a root subsystem (equivalently, by deleting nodes from
the extended Dynkin diagram in the simply-laced setting).
\item \emph{Special} (or $S$-) subalgebras: those not arising from a root subsystem, e.g.\ embeddings defined by
distinguished representations (principal $\mathfrak{sl}_2$, triality-related embeddings, etc.).
\end{itemize}
In many Tate-model engineering situations, codimension-two enhancements are ``minimal'' in the sense that they are
obtained by imposing one additional vanishing condition on a coefficient, and the resulting inclusion
$\mathfrak g\subset \mathfrak h$ is often regular (or a close cousin obtained by monodromy/folding in the non-simply-laced
cases). This is one reason $j_f=1$ appears frequently in standard model-building examples.

However, \textbf{one should be careful:}
\begin{center}
\emph{``regular'' does not mean ``index one'', and ``index one'' does not mean ``regular''.}
\end{center}
In particular, there are important \emph{special} subalgebras of Dynkin index~$1$, so ``$j_f=1$'' is \emph{not} a proxy
for ``$R$-type''.

\subsection{Index one is common, but not synonymous with regular: lessons from explicit F-theory analyses}

A useful guide is the systematic analysis of semisimple subalgebras and their Dynkin indices in the context of
F-theory matter models (and the associated ``characteristic representations''); see, e.g., \cite{EK11}.
Two concrete lessons are:

\smallskip
\noindent\textbf{(1) Geometry often selects the index-one embedding among several algebraic possibilities.}
For example, for $\mathfrak g_2\subset \mathfrak e_6$ there is an $R$-subalgebra embedding of Dynkin index~$1$ and a
maximal $S$-subalgebra embedding of Dynkin index~$3$; the corresponding characteristic representations differ markedly
(heuristically: the index-one embedding reproduces the ``expected'' matter in the standard geometric model, while the
higher-index embedding does not). Ref.~\cite{EK11} emphasizes precisely this kind of selection mechanism.

\smallskip
\noindent\textbf{(2) There exist $S$-subalgebras of Dynkin index~$1$.}
A striking example is $\mathfrak b_3\subset \mathfrak d_4$, where $\mathfrak b_3$ occurs as (triality-related) maximal
$S$-subalgebras \emph{of Dynkin index one}.
Thus, even if one expects ``generic'' Tate enhancements to correlate strongly with ``regular'' inclusions, the index-one
condition itself is \emph{broader} and can occur in genuinely special ways.

\smallskip
Taken together, these points support the following refined heuristic:

\begin{quote}
\emph{F-theory constructions built from minimal Weierstrass/Tate tunings frequently realize embeddings of smallest
possible Dynkin index, and in many common families this smallest index is~$1$; nonetheless, index-one embeddings may be
regular or special, and higher-index embeddings can arise but typically require additional structure beyond the most
generic enhancement pattern.}
\end{quote}

\subsection{Where j>1 can plausibly enter: non-generic corners and additional structure}

From this perspective, $j_f>1$ should not be read as ``forbidden'' in F-theory; rather, it is a signal that the geometry
is realizing a \emph{non-minimal normalization of the embedded quadratic form} (equivalently, a non-primitive inclusion at
the level relevant to $H^4(B-)$ and $\pi_3$). Practically, this tends to correlate with one (or several) of the following
features:
\begin{itemize}
\item \emph{Non-generic complex-structure tunings} beyond the standard Tate ansatz (higher-codimension conditions, extra
factorizations, or non-transverse collision patterns).
\item \emph{Monodromy/flux constraints} that alter which roots survive globally and thereby change the realized embedding
class (and sometimes the global form of the gauge group).
\item \emph{Dual heterotic interpretations at higher affine level.} In heterotic duals, overall normalization data are
packaged by affine levels; engineering levels $>1$ is typically more constrained than level~$1$, and it is natural to
expect the F-theory avatar of such data to be correspondingly non-generic.
\end{itemize}

In such settings, our universal scaling results predict the \emph{topological signature} of these corners: the induced map
on $\pi_3$ (and on $H^4(B-)$) is multiplication by $j_f$, and the Higgs phase supports additional coset sectors
$\pi_3(H/G)\cong \mathbb Z/j_f\mathbb Z$, hence potentially additional constrained-instanton contributions (in the sense of \cite{Csaki}).

The slogan ``F-theory favors $j_f=1$'' is best understood as a \emph{genericity heuristic} about minimal enhancements and
minimal normalizations, not as a rigid dichotomy between $R$- and $S$-subalgebras.
Index-one embeddings can be regular or special, and explicit F-theory analyses show both phenomena; higher-index embeddings
are plausible but typically correlate with extra geometric or global input beyond the most generic Tate enhancement
patterns.

\section{K-theory and the Chern character: the topological bridge}\label{sec:K}

In this section we introduce the $K$-theoretic and cohomological framework that connects the
purely algebraic Dynkin indices ($j_f$, $\ell_\rho$) from Section~\ref{sec:DI} to the basic
topological groups $H^3(G,\mathbb Z)$ and $H^4(BG,\mathbb Z)$ (Appendix~\ref{sec:A}).
Our goal is to explain how Dynkin indices appear as \emph{topological characteristic numbers}
via the odd Chern character (Harris), and equivalently via the suspended degree--$4$ Chern character (Naylor).

\begin{rem}[Physical interpretation]
From a physical perspective, the same basic invariant quadratic form on $\mathfrak g$ yields:
the instanton class on $BG$, the WZW $3$-form on $G$, and the suspended class on $\Sigma G$
detected by $\operatorname{ch}_2$ in Naylor's description of $K^{-1}(G)$.
\end{rem}

\subsection{Basic definitions in K-theory}

Let $X$ be a space with the homotopy type of a CW-complex.
The group $K^0(X)$ is the Grothendieck group of complex vector bundles on $X$, and
$\widetilde K^0(X)$ denotes reduced $K$-theory.

Odd $K$-theory admits a homotopy-theoretic description \cite{Atiyah,AH}:
\begin{equation}\label{eq:Kminus1_maps}
K^{-1}(X)\;\cong\;[X,U],
\end{equation}
where $U=\varinjlim U(n)$ is the stable unitary group.
By Bott periodicity there is a natural identification
\begin{equation}\label{eq:Bott_susp_identification}
K^{-1}(X)\;\cong\;\widetilde K^0(\Sigma X),
\end{equation}
and more generally $K^{-n}(X):=\widetilde K^0(\Sigma^n X)$ for $n\ge 1$.
Thus complex topological $K$-theory is $\mathbb Z/2\mathbb Z$-graded:
\begin{equation*}
K(X)=K^0(X)\oplus K^{-1}(X),\qquad K^{i}(X)\cong K^{i-2}(X).
\end{equation*}

\begin{notation}\label{not:K_pullback}
For a continuous map $g:X\to Y$, we write
\begin{equation*}
g^!:\ K^*(Y)\longrightarrow K^*(X)
\end{equation*}
for the induced pullback in (complex topological) $K$-theory.
This is the usual contravariant functoriality of $K$-theory (not a Gysin map).
\end{notation}

The \textbf{Chern character} is a natural transformation
\begin{equation*}
\operatorname{ch}:K^*(X)\longrightarrow H^*(X;\mathbb Q)
\end{equation*}
which becomes an isomorphism after tensoring with $\mathbb Q$.
We write $\operatorname{ch}_3$ for the degree--$3$ component on $K^{-1}(X)$, and
$\operatorname{ch}_2$ for the degree--$4$ component on $\widetilde K^0(\Sigma X)$.
A class in $K^{-1}(M)$ may be represented by a smooth map $g:M\to GL(N,\mathbb C)$ for $N$ large enough.

The odd Chern character is the closed odd differential form \cite{ OddChernChar,BottSeely,AlvarezGaume:1985}
\begin{equation}\label{eq:odd_chern_character_def}
    \operatorname{ch}(g)
    =\sum_{n\ge 0}\frac{(-1)^n}{(2\pi i)^{n+1}}\frac{n!}{(2n+1)!}\,
    \Tr\!\big((g^{-1}dg)^{2n+1}\big) \,,
\end{equation}
where $\Tr$ is the \emph{matrix} trace in $GL(N,\mathbb C)$.
Its degree-$3$ component is
\begin{equation}\label{eq:ch3_component}
    \operatorname{ch}_3(g)=\frac{1}{24\pi^2}\Tr\!\big((g^{-1}dg)^3\big) \,.
\end{equation}
When $g:S^3\to U(N)$, one has
\begin{equation*}
    \deg(g)=\int_{S^3}\operatorname{ch}_3(g)\in\mathbb Z \,,
\end{equation*}
and this integer agrees with the class of $g$ in $\pi_3(U)\cong\mathbb Z$.

Higher-dimensional generalizations and index-theoretic interpretations (Dirac/Toeplitz operators)
are not needed for our main arguments; see standard references on Bott periodicity and odd Chern characters.

\subsection{K-theory of Lie groups and the representation ring}

Let $G$ be a simple simply-connected compact Lie group. We denote by $R(G)$ its representation ring. 
The representation ring $R(G)$ is the Grothendieck group of finite-dimensional complex representations of $G$.
Concretely, it is the free abelian group on isomorphism classes $[V]$ of representations, modulo the relations
$[V] + [W] - [V \oplus W] = 0$ 
for all representations $V,W$ of $G$.  (Multiplication is induced by tensor product:
$[V]\cdot[W]=[V\otimes W]$.)

Any unitary representation $\rho:G\to U(\dim\rho)$ defines a class in $K^{-1}(G)$ via its homotopy class: 
\begin{equation*}
\beta(\rho)\in K^{-1}(G)\cong [G,U]
\end{equation*}
This extends to a homomorphism
\begin{equation*}
\beta:\ R(G)\longrightarrow K^{-1}(G).
\end{equation*}
Using the Bott identification~\eqref{eq:Bott_susp_identification}, we define
\begin{equation*}
\Sigma_K:\ K^{-1}(G)\xrightarrow{\ \cong\ }\widetilde K^0(\Sigma G),\qquad
\beta_\Sigma(\rho):=\Sigma_K(\beta(\rho)).
\end{equation*}
Concretely, $\beta_\Sigma(\rho)$ is the reduced bundle over $\Sigma G$ obtained by the clutching construction along
$\rho:G\to U(\dim \rho)$.

The structure of $K(G)$ and $K^0(BG)$ is governed by classical theorems:

\begin{thm}[Hodgkin \cite{Hodgkin}, Atiyah \cite{Atiyah}]\label{thm:Hodgkin-Atiyah}
If $G$ is a simple, simply-connected compact Lie group of rank $\ell$, then $K(G)$ is torsion-free and
\begin{equation*}
K(G)\cong \Lambda\big(\beta(\rho_1),\ldots,\beta(\rho_\ell)\big),
\end{equation*}
an exterior algebra on the primitive classes corresponding to the fundamental representations $\rho_i$.
\end{thm}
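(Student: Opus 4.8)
The plan is to exploit the three structures carried by $K^*(G)$: it is a $\mathbb{Z}/2$-graded commutative Hopf algebra (from the group multiplication), it admits a Chern character to $H^*(G;\mathbb{Q})$, and it is computed by the representation ring $R(G)$ through the classifying-space fibration. First I would record the rational statement. The Chern character gives $K^*(G)\otimes\mathbb{Q}\cong H^*(G;\mathbb{Q})$, and by Hopf's theorem the latter is an exterior algebra on $\ell$ primitive generators, all in odd degree $2m_i+1$ (the $m_i$ the exponents). Since odd-degree classes lie in the $K^{-1}$-summand, this already gives $K^*(G)\otimes\mathbb{Q}=\Lambda_{\mathbb{Q}}(x_1,\dots,x_\ell)$ with each $x_i\in K^{-1}(G)\otimes\mathbb{Q}$. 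Transgression in the Serre spectral sequence of $G\to EG\to BG$ identifies these primitive generators with the transgressions of the polynomial generators of $H^*(BG;\mathbb{Q})$, which can be taken to be components of the Chern characters of the fundamental representations; this pins down $\beta(\rho_i)$ as the correct candidate generators.

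The substance of the theorem --- torsion-freeness and the integral exterior-algebra structure --- I would obtain from the representation ring. For $G$ simple and simply connected, $R(G)=\mathbb{Z}[\rho_1,\dots,\rho_\ell]$ is a polynomial ring on the fundamental representations, with augmentation $\varepsilon:\rho\mapsto\dim\rho$. By the Atiyah--Segal theorem, $K^0(BG)$ is the $I$-adic completion of $R(G)$ and $K^{-1}(BG)=0$. The path-loop fibration $G\simeq\Omega BG\to PBG\to BG$ then feeds an Eilenberg--Moore (Hodgkin) spectral sequence
\[
E_2^{*,*}=\operatorname{Tor}_{K^*(BG)}^{*,*}(\mathbb{Z},\mathbb{Z})\ \Longrightarrow\ K^*(G),
\]
with both copies of $\mathbb{Z}$ regarded as $K^*(BG)$-modules through $\varepsilon$. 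Because $R(G)$, and hence its completion, is polynomial, the classes $\rho_i-\dim\rho_i$ form a regular sequence and the Koszul resolution computes the $E_2$-page as an exterior algebra $\Lambda_{\mathbb{Z}}(y_1,\dots,y_\ell)$, each $y_i\in\operatorname{Tor}_1$ sitting in odd total degree and corresponding to $\rho_i$.

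It remains to collapse the spectral sequence and identify the generators. The $y_i$ live in homological degree $1$, so multiplicativity together with the fact that each $y_i$ is a permanent cycle --- being the image of the genuine class $\beta(\rho_i)\in K^{-1}(G)$ under the edge map --- forces $E_\infty=E_2$. Since the $E_\infty$-page is an exterior algebra on free generators, it is free as an abelian group, so there are no additive extension problems; hence $K^*(G)\cong\Lambda_{\mathbb{Z}}(\beta(\rho_1),\dots,\beta(\rho_\ell))$ as graded groups and, by the multiplicative structure, as algebras, and torsion-freeness is automatic. Finally, primitivity of each $\beta(\rho_i)$ follows from its description as a suspension class: the coproduct induced by group multiplication sends a desuspendable class $z$ to $z\otimes 1+1\otimes z$, and $\beta(\rho_i)=\Sigma_K^{-1}\beta_\Sigma(\rho_i)$ is of this type.

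I expect the main obstacle to be the rigorous construction and convergence of the Eilenberg--Moore spectral sequence in the non-connective, completed setting: one must control the interaction between $I$-adic completion in $K^*(BG)$ and the $\operatorname{Tor}$ computation, and verify strong convergence to $K^*(G)$ rather than to some completion thereof. This is precisely the technical core handled by Hodgkin. An alternative that sidesteps part of the difficulty is an inductive argument along the index-one subgroup chains recalled in Section~\ref{sec:instanton_story}, building $K^*(G)$ from the $K$-theory of spheres via a Künneth theorem, at the cost of a case-by-case verification that the relevant Künneth spectral sequences degenerate.
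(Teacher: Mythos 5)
The paper does not actually prove this statement; it is quoted with attribution to Hodgkin and Atiyah, so the right benchmark is the literature route the citation points to. Your outline is essentially the standard (post-Hodgkin) Eilenberg--Moore argument, and its main steps are sound: $R(G)$ polynomial on the fundamental representations for $G$ simply connected; Atiyah--Segal giving $K^*(BG)\cong R(G)^{\wedge}_I$ concentrated in even degree; the Koszul computation $\operatorname{Tor}_{R(G)^{\wedge}_I}(\mathbb Z,\mathbb Z)\cong\Lambda_{\mathbb Z}(y_1,\dots,y_\ell)$ since $\rho_i-\dim\rho_i$ is a regular sequence in the power-series ring; collapse; and torsion-freeness plus $\mathbb Z/2$-graded commutativity forcing $x^2=0$, which settles the multiplicative extension. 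Two remarks. First, the collapse does not need your edge-map argument: $E_2$ is multiplicatively generated by $\operatorname{Tor}_1$, every $d_r$ ($r\ge 2$) lowers homological degree by $r$, so the generators are permanent cycles for degree reasons and Leibniz kills all differentials. Second, you correctly identify the genuine technical core --- construction and strong convergence of the $K$-theoretic Eilenberg--Moore spectral sequence for $\Omega BG$ over the infinite complex $BG$, which requires $\pi_1(G)$ torsion-free --- and you delegate it to Hodgkin. That is honest attribution, but it makes your argument a reduction to the cited literature rather than an independent proof, which is in effect what the paper itself does; note also that Hodgkin's original 1967 paper argues by a more computational method, the spectral-sequence streamlining being a later development.

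One step is genuinely flawed as written: primitivity. You claim $\beta(\rho_i)$ is primitive because it is ``desuspendable,'' i.e.\ $\beta(\rho_i)=\Sigma_K^{-1}\beta_\Sigma(\rho_i)$. But by Bott periodicity $\Sigma_K:K^{-1}(G)\to\widetilde K^0(\Sigma G)$ is an \emph{isomorphism}, so every class in $K^{-1}(G)$ is of this type; your criterion would make all of $K^{-1}(G)$ primitive, which is false (a triple product $\beta(\rho_i)\beta(\rho_j)\beta(\rho_k)$ is odd but not primitive). What is true is that classes in the image of the suspension \emph{from the classifying space} are primitive: $\beta(\rho)$ is $\sigma^*(\rho-\dim\rho)$ under the Atiyah--Segal identification, with $\sigma:\Sigma G\to BG$ the standard map of \eqref{eq:sigmaDef}, and such transgressive classes are primitive. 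Even more directly, since $\rho$ is a group homomorphism one has $m^*\beta(\rho)=\beta(\rho\circ m)$, and the pointwise product of maps into $U$ realizes the group law of $[-,U]\cong K^{-1}(-)$, so
\begin{equation*}
m^*\beta(\rho)=\beta(\rho)\otimes 1+1\otimes\beta(\rho).
\end{equation*}
Replacing your desuspension remark by either of these arguments closes the gap; the rest of the proposal stands.
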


\begin{thm}[Atiyah--Segal completion theorem \cite{AtiyahSegal}]\label{thm:Atiyah-Segal}
Let $G$ be compact connected and let $I\subset R(G)$ be the augmentation ideal. Then
\begin{equation*}
    K^{0}(BG)\cong R(G)_{\widehat I} \,.
\end{equation*}
Moreover, for connected $G$ one has $K^{-1}(BG)=0$.
\end{thm}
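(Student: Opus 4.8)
The plan is to deduce the statement from $G$-equivariant $K$-theory, for which the basic computation is $K_G^0(\mathrm{pt})=R(G)$ and $K_G^{-1}(\mathrm{pt})=0$, the augmentation ideal $I$ being precisely the kernel of the rank map $R(G)=K_G^0(\mathrm{pt})\to\mathbb Z$. The first step is the identification $K^*(BG)\cong K_G^*(EG)$: since $G$ acts freely on the contractible space $EG$ with quotient $BG$, equivariant $K$-theory of the free $G$-space $EG$ agrees with ordinary $K$-theory of the quotient. Everything then reduces to computing $K_G^*(EG)$ together with the comparison map $\alpha:R(G)=K_G^*(\mathrm{pt})\to K_G^*(EG)$ induced by $EG\to\mathrm{pt}$, which will turn out to be the completion map.

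Next I would pass to finite approximations. Writing $EG=\varinjlim_n E_nG$ as an increasing union of free, finite $G$-CW complexes (e.g.\ iterated joins $E_nG=G^{*(n+1)}$), so that $B_nG:=E_nG/G$ is a finite skeletal model of $BG$, the Milnor exact sequence for the colimit gives
\begin{equation*}
0\longrightarrow {\varprojlim_n}^{1}\,K^{*-1}(B_nG)\longrightarrow K^*(BG)\longrightarrow \varprojlim_n K^*(B_nG)\longrightarrow 0.
\end{equation*}
The problem is thereby reduced to understanding the inverse system $\{K^*(B_nG)\}_n\cong\{K_G^*(E_nG)\}_n$ (the latter isomorphism again using freeness) and its derived limit.

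The crux is to show that this inverse system is pro-isomorphic to the $I$-adic tower $\{R(G)/I^{n+1}\}_n$. Granting this, Mittag--Leffler (the bonding maps $R(G)/I^{n+1}\to R(G)/I^{n}$ are surjective) forces $\varprojlim^1=0$, and one obtains $K^0(BG)\cong\varprojlim_n R(G)/I^{n+1}=R(G)^{\wedge}_I$. I expect this pro-isomorphism---matching the \emph{geometric} skeletal filtration of $EG$ against the \emph{algebraic} $I$-adic filtration of $R(G)$---to be the main obstacle, and it is the genuine content of Atiyah--Segal. The cleanest route is to verify it first for a torus $T$, where it is explicit: $BT\simeq(\mathbb{CP}^\infty)^{\mathrm{rk}\,T}$ has $K^0(BT)$ a power-series ring, $R(T)=\mathbb Z[\text{characters}]$, and completion at $I$ reproduces exactly that ring with vanishing odd part. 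One then propagates to a general compact connected $G$ via the maximal torus using $R(G)=R(T)^W$ and a comparison along the fibration $G/T\to BT\to BG$ (with any finite-group building blocks handled by Brauer induction, reducing to cyclic subgroups where the completion is again explicit).

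Finally, the vanishing $K^{-1}(BG)=0$ for connected $G$ falls out of the same analysis. Since $K_G^{-1}(\mathrm{pt})=0$, the representation ring and each quotient $R(G)/I^{n+1}$ live entirely in even degree, so the odd part of the $I$-adic tower is identically zero. The pro-isomorphism above is one of $\mathbb Z/2$-graded systems, hence $\{K^{-1}(B_nG)\}_n$ is pro-trivial; both its $\varprojlim$ and $\varprojlim^1$ vanish, and the Milnor sequence in odd degree yields $K^{-1}(BG)=0$.
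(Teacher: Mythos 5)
The paper offers no proof of this statement: it is quoted as classical background with the proof deferred entirely to \cite{AtiyahSegal}, so there is no internal argument to compare against. Your outline is, in fact, a faithful reconstruction of the standard proof architecture of that reference: the identification $K^*(BG)\cong K_G^*(EG)$ for the free action, the finite Milnor-join approximations $E_nG=G^{*(n+1)}$ with the $\varprojlim$--$\varprojlim^1$ sequence, and the pro-comparison of $\{K_G^*(E_nG)\}_n$ with the $I$-adic tower $\{R(G)/I^{n+1}\}_n$. The easy half of that comparison comes from noting $I=\ker\bigl(R(G)\to K_G^0(G)\bigr)$ and that a product of $n+1$ classes, each restricting to zero on one join factor, vanishes on the join, so $R(G)\to K_G^0(E_nG)$ kills $I^{n+1}$; you correctly isolate the cofinality of the two filtrations as the genuine content, and your deduction of $K^{-1}(BG)=0$ (pro-triviality of the odd tower kills both $\varprojlim$ and $\varprojlim^1$ there, Mittag--Leffler handles the even tower) is sound.

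One refinement to your propagation step: the comparison along $G/T\to BT\to BG$ using $R(G)=R(T)^W$ is the Atiyah--Hirzebruch route, and it runs cleanly only when $\pi_1(G)$ is torsion-free, where $R(T)$ is a free $R(G)$-module of rank $|W|$ (Pittie--Steinberg). The theorem as stated allows connected $G$ with torsion in $\pi_1$ (e.g.\ $SO(n)$ or $PSU(n)$), and there the actual Atiyah--Segal argument instead chooses an embedding $G\hookrightarrow U(n)$, proves the theorem for $U(n)$ via its maximal torus, and descends using $K_G^*(X)\cong K_{U(n)}^*\bigl(X\times_G U(n)\bigr)$ together with Segal's results that $R(G)$ is finite over $R(U(n))$ and that the $I_G$-adic and $I_{U(n)}$-adic topologies on $R(G)$ coincide; Brauer induction enters only for non-connected or finite groups and is not needed for the statement at hand. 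With that substitution, your sketch is a correct outline of the cited proof.
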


Altogether, we have the following triangular commutative diagram connecting the representation ring $R(G)$, the odd K-theory group $K^{-1}(G)$ of $G$, and the reduced even K-theory group $\widetilde{K}^0(\Sigma G)$ of the suspension $\Sigma G$:
\begin{equation}
\begin{tikzcd}[row sep=large, column sep=small]
  & R(G) \arrow[dl, "\beta"'] \arrow[dr, "\beta_\Sigma"] &  \\
K^{-1}(G)
  \arrow[rr, "\Sigma_K\;(\cong)"']
  & &
\widetilde{K}^0(\Sigma G) 
\end{tikzcd}\label{eq:BetaCD}
\end{equation}

\subsection{The Dynkin index as a topological characteristic number}

\begin{thm}[Harris \cite{Harris}]\label{thm:Harris}
Let $G$ be a simple, simply-connected compact Lie group, and fix the basic inner product
$\langle\cdot,\cdot\rangle$ on $\mathfrak g$ normalized by $\langle\alpha^\vee,\alpha^\vee\rangle=2$
for every long root $\alpha$. Let $x_3(G)\in H^3(G,\mathbb Z)$ be the positive generator determined by this normalization.
Let $\beta:R(G)\to K^{-1}(G)$ be the $\beta$--construction.
Then for every finite-dimensional complex representation $\rho$ of $G$,
\begin{equation*}
\operatorname{ch}_3(\beta(\rho))=\ell_\rho\,x_3(G)\qquad\text{in }H^3(G,\mathbb Z),
\end{equation*}
where $\ell_\rho$ is the Dynkin index of $\rho$ with respect to $\langle\cdot,\cdot\rangle$.
\end{thm}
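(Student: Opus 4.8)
The plan is to represent $\operatorname{ch}_3(\beta(\rho))$ by an explicit bi-invariant closed $3$-form and to compute its de Rham class by pairing against the generator of $H_3(G;\mathbb Z)\cong\pi_3(G)$. First I would invoke naturality of the $\beta$-construction together with \eqref{eq:ch3_component}: since $\beta(\rho)$ is represented by the map $\rho\colon G\to U(\dim\rho)$, the pulled-back Maurer--Cartan form is $\rho^{-1}d\rho=d\rho(\theta)$, where $\theta=g^{-1}dg$ is the Maurer--Cartan form of $G$, so that
$$\operatorname{ch}_3(\beta(\rho))=\frac{1}{24\pi^2}\,\Tr_\rho\!\big((d\rho(\theta))^3\big).$$
This is a bi-invariant closed $3$-form, and since $G$ is simple and simply connected, $H^3(G;\mathbb R)\cong\mathbb R$ is spanned by the Cartan $3$-form; hence the class is a real multiple of $x_3(G)$, and the entire task is to identify that multiple.

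Second, I would reduce the cubic matrix trace to the defining bilinear form of $\rho$. Using the Maurer--Cartan structure equation and the homomorphism property $[\,\rho(Y),\rho(Z)\,]=\rho([Y,Z])$, the pointwise value of the form is governed by $\Tr_\rho\!\big(\rho(X)\,\rho([Y,Z])\big)=\ell_\rho\,\langle X,[Y,Z]\rangle$, the last equality being exactly \eqref{eq:def-ellf}. Consequently, on \emph{any} simple simply-connected $G$ and for any representation $\sigma$, the integral of $\operatorname{ch}_3(\beta(\sigma))$ over a $3$-cycle equals $\ell_\sigma$ times a single normalization constant $c_G$ depending only on $G$ and the chosen generators. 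The problem is thereby reduced to showing $c_G=1$.

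Third, I would pin down the normalization by transporting the statement to $SU(2)$ along the highest-root inclusion $i_\theta\colon SU(2)_\theta\hookrightarrow G$ from Appendix~\ref{sec:AppB}. By naturality, $i_\theta^{\,*}\operatorname{ch}_3(\beta(\rho))=\operatorname{ch}_3(\beta(\rho\circ i_\theta))$; because $(\theta,\theta)=2$ this embedding has Dynkin index $1$, so $i_\theta$ carries the positive generator of $\pi_3(SU(2))$ to that of $\pi_3(G)$ and satisfies $\ell_{\rho\circ i_\theta}=\ell_\rho$. It therefore suffices to evaluate on $SU(2)$, where the single base case is explicit: for the fundamental representation ($\ell=1$) the standard winding-number computation gives $\int_{SU(2)}\operatorname{ch}_3(\beta(\rho_{\mathrm{fund}}))=1$, forcing $c_{SU(2)}=1$. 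Combining these, the pairing against the generator of $H_3(G;\mathbb Z)\cong\pi_3(G)$ is
$$\big\langle\operatorname{ch}_3(\beta(\rho)),[\pi_3(G)]\big\rangle=\int_{SU(2)}\operatorname{ch}_3(\beta(\rho\circ i_\theta))=\ell_{\rho\circ i_\theta}=\ell_\rho.$$

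Finally, since $x_3(G)$ is the positive generator normalized to pair to $1$ with the generator of $\pi_3(G)$, and $H^3(G;\mathbb Z)\cong\mathbb Z$ is torsion-free (so it injects into $H^3(G;\mathbb R)$ and the pairing with $H_3$ is perfect), the computed pairing yields $\operatorname{ch}_3(\beta(\rho))=\ell_\rho\,x_3(G)$ integrally, using that $\ell_\rho\in\mathbb Z$. The main obstacle I anticipate is the bookkeeping of normalization constants---the analytic factor $1/24\pi^2$ together with the Maurer--Cartan coefficients, matched against the integral generator $x_3(G)$---which is precisely why I would isolate all of it into the single explicit $SU(2)$ base-case integral and transport it through the index-one highest-root embedding, rather than tracking the constant $c_G$ directly for general $G$.
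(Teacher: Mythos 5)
Your proposal is correct and follows essentially the same route as the paper's own treatment (the proof sketch of Proposition~\ref{prop:eta_rho_scales_x3} in Appendix~\ref{sec:AppB}): represent $\operatorname{ch}_3(\beta(\rho))$ by the bi-invariant form $\eta_\rho=\frac{1}{24\pi^2}\Tr(\theta_\rho^{\,3})$, use the one-dimensionality of the space of $\Ad(G)$-invariant alternating $3$-forms to reduce everything to a scalar proportional to $\ell_\rho$ via the trace-form identity, and fix the normalization on the highest-root subgroup $SU(2)_\theta$. Your only genuine addition is to verify the normalization by an explicit $SU(2)$ winding-number base case (fundamental representation, degree $1$) transported through the index-one embedding $i_\theta$ using $\ell_{\rho\circ i_\theta}=\ell_\rho$, where the paper instead builds the condition $\int_{S^3_\theta}\eta_G=1$ directly into its choice of the generator $x_3(G)$; this makes your argument slightly more self-contained but does not change the approach.
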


Although $\operatorname{ch}$ is a priori rational, for simply-connected simple $G$ the classes
$\operatorname{ch}_3(\beta(\rho))$ land integrally in $H^3(G,\mathbb Z)$ after choosing the generator $x_3$ as above.
Harris' theorem identifies the integral coefficient with the Dynkin index.

We announce the theorem of Naylor \cite{Naylor}  with corrections of a couple of misprints\footnote{
We corrected two misprints in Naylor's theorem:
\begin{enumerate}
\item \emph{Spin groups and low rank isomorphisms.} One should state the Spin case for $Spin(n)$ with $n\ge 7$ to avoid a contradiction due to low-rank isomorphisms. In particular,
$d_{Spin(n)}=2$ for $n\ge 7$, while $d_{Spin(3)}=d_{Spin(5)}=d_{Spin(6)}=1$ since
$Spin(3)\cong SU(2)$, $Spin(5)\cong Sp(2)$ and $Spin(6)\cong SU(4)$.
The original formulation involves $n\geq 5$, which end up giving two distinct values for $Spin(n)$ for $1 \leq n\leq 6$.
\item \emph{$E_8$ value.} The correct value for $d_G$ in the case of $E_8$ is $d_{E_8}=60$ (not $20$) since $60$ is the greatest common divisor of the Dynkin indices of the fundamental representations of $E_8$.
\end{enumerate}
}. 
\begin{thm}[Naylor]\label{thm:Naylor_corrected}
Let $G$ be a simple, simply-connected compact Lie group.
Let $\beta_\Sigma(\rho)\in \widetilde K^0(\Sigma G)$ be the Bott-suspended class associated to a complex
representation $\rho$ of $G$, and let $u_4=\Sigma^\sharp(x_3)$ be the generator of $H^4(\Sigma G,\mathbb Z)$
corresponding to the positive generator $x_3\in H^3(G,\mathbb Z)$ fixed in the main text.
\begin{enumerate}
\item For every finite-dimensional complex representation $\rho$ of $G$,
\begin{equation}\label{eq:Naylor_ch2_corrected}
\operatorname{ch}_2\big(\beta_\Sigma(\rho)\big)\;=\;\ell_\rho\,u_4
\qquad\in H^4(\Sigma G,\mathbb Z),
\end{equation}
where $\ell_\rho$ is the Dynkin index of $\rho$ (normalized as in Section~\ref{sec:DI}).
\item The image of $\operatorname{ch}_2:\widetilde K^0(\Sigma G)\to H^4(\Sigma G,\mathbb Z)\cong \mathbb Z\cdot u_4$
is the subgroup generated by $d_G\,u_4$, where $d_G$ is the gcd of the Dynkin indices of the fundamental
representations. With our normalization one has
\begin{equation*}
d_G=
\begin{cases}
1, & G=SU(n)\ (n\ge 2),\\
1, & G=Sp(n)\ (n\ge 1),\\
2, & G=Spin(n)\ (n\ge 7),\\
2, & G=G_2,\\
6, & G=F_4,\\
6, & G=E_6,\\
12, & G=E_7,\\
60, & G=E_8.
\end{cases}
\end{equation*}
\end{enumerate}
\end{thm}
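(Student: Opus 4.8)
The plan is to obtain Naylor's degree-$4$ statement as the suspension of Harris' degree-$3$ statement (Theorem~\ref{thm:Harris}), so that the computation of $\operatorname{ch}_2$ on $\widetilde K^0(\Sigma G)$ reduces to the already-known computation of $\operatorname{ch}_3$ on $K^{-1}(G)$. The organizing tool is that the Chern character is a natural transformation of reduced cohomology theories, hence intertwines the $K$-theoretic suspension $\Sigma_K\colon K^{-1}(G)\xrightarrow{\cong}\widetilde K^0(\Sigma G)$ with the ordinary cohomology suspension $\Sigma^\sharp\colon H^3(G,\mathbb Z)\to H^4(\Sigma G,\mathbb Z)$. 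Concretely, for part~(1) I would first establish the compatibility square
\begin{equation*}
\operatorname{ch}_2\bigl(\Sigma_K(a)\bigr)=\Sigma^\sharp\bigl(\operatorname{ch}_3(a)\bigr)\qquad (a\in K^{-1}(G)),
\end{equation*}
that is, that the degree-$3$ component of the odd Chern character on $K^{-1}(G)$ corresponds under suspension exactly to the degree-$4$ component $\operatorname{ch}_2$ on $\widetilde K^0(\Sigma G)$. Granting this, I would set $a=\beta(\rho)$, use $\beta_\Sigma(\rho)=\Sigma_K(\beta(\rho))$ together with Harris' identity $\operatorname{ch}_3(\beta(\rho))=\ell_\rho\,x_3$, and invoke the defining relation $u_4=\Sigma^\sharp(x_3)$ to conclude $\operatorname{ch}_2(\beta_\Sigma(\rho))=\Sigma^\sharp(\ell_\rho\,x_3)=\ell_\rho\,u_4$.

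For part~(2), the key input is the Hodgkin--Atiyah structure theorem (Theorem~\ref{thm:Hodgkin-Atiyah}), which presents $K(G)$ as the exterior algebra on the primitive classes $\beta(\rho_1),\dots,\beta(\rho_\ell)$. Transporting the image question through the suspension of part~(1), it suffices to compute the image of $\operatorname{ch}_3\colon K^{-1}(G)\to H^3(G,\mathbb Z)$. An element of the odd part $K^{-1}(G)$ is a $\mathbb Z$-combination of the linear generators $\beta(\rho_i)$ together with products of an odd number ($\ge 3$) of generators; since $H^1(G)=H^2(G)=0$, each generator has Chern character supported in degrees $\ge 3$, so by multiplicativity of $\operatorname{ch}$ any triple (or higher) product has Chern character supported in degrees $\ge 9$ and contributes nothing to $H^3$. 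Hence only the primitive part is seen by $\operatorname{ch}_3$, and by Harris its image is generated by $\{\ell_{\rho_i}\,x_3\}$, i.e. by $\gcd_i(\ell_{\rho_i})\,x_3=d_G\,x_3$ (using additivity of $\ell$ and the fact that the fundamental representations compute $d_G$). Under $\Sigma^\sharp$ this becomes $d_G\,u_4$, giving the claimed image.

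The explicit table of $d_G$ is then obtained by reading off the Dynkin indices of the fundamental representations from Table~\ref{DynkinT5} and taking greatest common divisors: for $A_\ell$ and $C_\ell$ a fundamental representation already has index $1$; for $B_\ell,D_\ell$ the vector index $2$ and the spinor indices have gcd $2$; and the exceptional values $2,6,6,12,60$ for $G_2,F_4,E_6,E_7,E_8$ follow in the same way. The one point requiring care is the range in the $Spin$ case: the exceptional isomorphisms $Spin(3)\cong SU(2)$, $Spin(5)\cong Sp(2)$, and $Spin(6)\cong SU(4)$ force $d_G=1$ there, so the value $d_{Spin(n)}=2$ is asserted only for $n\ge 7$.

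I expect the main obstacle to be the compatibility square in part~(1): one must check that the three normalization conventions---the $K$-theoretic suspension/Bott identification $\Sigma_K$, the cohomology suspension $\Sigma^\sharp$, and the explicit constants in the odd Chern character \eqref{eq:odd_chern_character_def}---align so that no spurious numerical factor (a power of $2\pi i$, or a factorial arising from the degree shift) is introduced, leaving the \emph{integer} coefficient $\ell_\rho$ unchanged. This can be verified either by naturality of the Chern character as a morphism of cohomology theories, so that it commutes with the connecting/suspension isomorphisms once the fixed normalizations are matched, or by a direct clutching-construction computation of $\operatorname{ch}_2$ of the bundle $\beta_\Sigma(\rho)$ over $\Sigma G$ and comparison with \eqref{eq:ch3_component}. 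It is precisely this bookkeeping where the misprints flagged in the footnote originate, so pinning down the conventions at the outset is essential.
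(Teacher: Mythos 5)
Your proposal is correct, but it is worth noting that the paper does not actually prove Theorem~\ref{thm:Naylor_corrected}: it imports the statement from Naylor \cite{Naylor}, correcting two misprints (the $Spin(n)$ range, fixed via the low-rank isomorphisms $Spin(3)\cong SU(2)$, $Spin(5)\cong Sp(2)$, $Spin(6)\cong SU(4)$, and the $E_8$ value $60$), and records only the Harris--Naylor compatibility as Remark~\ref{rem:Naylor_vs_Harris}. Your part~(1) turns that remark into an argument: the suspension square $\operatorname{ch}_2\circ\Sigma_K=\Sigma^\sharp\circ\operatorname{ch}_3$, combined with Harris' Theorem~\ref{thm:Harris} and the generator convention $u_4=\Sigma^\sharp(x_3)$ from \eqref{eq:triangle_commutes}, immediately yields \eqref{eq:Naylor_ch2_corrected}; you correctly identify that the only delicate point is matching the Bott normalization of $\operatorname{ch}$ against the explicit constants in \eqref{eq:odd_chern_character_def} so that the integral coefficient $\ell_\rho$ survives the degree shift, which is precisely what the paper's compatible generator choices are engineered to guarantee. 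Your part~(2) supplies a derivation the paper leaves entirely to the literature (the corresponding image statement for $\operatorname{ch}_3$ is attributed to Harris in Remark~\ref{rem:dG_other_occurrences}, item~(i)): using the Hodgkin--Atiyah presentation of $K(G)$ as an exterior algebra on the $\beta(\rho_i)$, the vanishing of $H^1(G)$ and $H^2(G)$ forces each $\operatorname{ch}(\beta(\rho_i))$ into degrees $\ge 3$, so by multiplicativity of the $\mathbb Z/2$-graded Chern character any odd product of length $\ge 3$ contributes nothing in degree $3$; hence the image of $\operatorname{ch}_3$ is $\gcd_i(\ell_{\rho_i})\,\mathbb Z\,x_3=d_G\,\mathbb Z\,x_3$, and transporting through the isomorphisms $\Sigma_K$ and $\Sigma^\sharp$ gives the claimed image $d_G\,\mathbb Z\,u_4$. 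This is sound (the reduction of $d_G$ to the fundamental indices uses the tensor-product identity for $\ell$ recorded in Section~\ref{subsec:DI_dG}, which you invoke), and it is essentially the argument of Harris and Naylor themselves; what your route buys is a self-contained proof modulo Theorems~\ref{thm:Harris} and~\ref{thm:Hodgkin-Atiyah}, whereas the paper's route is attribution plus the correction of the published values. One cosmetic remark: in Theorem~\ref{thm:Hodgkin-Atiyah} the exterior generators are more precisely the classes of the reduced representations $\rho_i-\dim\rho_i$, but since $\beta$ annihilates trivial summands this is the same class $\beta(\rho_i)$ and your argument is unaffected; and your table values, including the $n\ge 7$ caveat for $Spin(n)$, agree with Table~\ref{DynkinT5} and the paper's footnote.
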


\medskip

\begin{rem}[Compatibility of Harris and Naylor]\label{rem:Naylor_vs_Harris}
Under the Bott identification $\Sigma_K$ and the suspension isomorphism $\Sigma^\sharp$, the two formulas are the same:
\begin{equation*}
\Sigma^\sharp\!\left(\operatorname{ch}_3(\beta(\rho))\right)
\;=\;
\operatorname{ch}_2\!\left(\beta_\Sigma(\rho)\right)\in H^4(\Sigma G,\mathbb Q),
\end{equation*}
and with the generator choices~\eqref{eq:triangle_commutes} this identity is integral.
\end{rem}

\subsection{The basic triangle in cohomology}

Assume now that $G$ is compact, simple, and simply-connected.
A convenient way to organize the fundamental classes
\begin{equation*}
x_3 \in H^3(G,\mathbb{Z}),\qquad
y_4 \in H^4(BG,\mathbb{Z}),\qquad
u_4 \in H^4(\Sigma G,\mathbb{Z})
\end{equation*}
is via the following triangle commutative diagram:
\begin{equation}\label{eq:BGCD}
\begin{tikzcd}
H^3(G,\mathbb{Z})
  \arrow[rr, "\Sigma^\sharp"]
  & &
H^4(\Sigma G,\mathbb{Z}) \\
&
\arrow[ul, "\tau"]
H^4(BG,\mathbb{Z})
   \arrow[ur, "\sigma^*"'] &
\end{tikzcd}
\end{equation}
Here $\tau$ denotes the transgression in the Serre spectral sequence associated to the universal principal $G$--bundle
$
G \longrightarrow EG \longrightarrow BG$, $\Sigma^\sharp$ is the suspension isomorphism, and $\sigma:\Sigma G\to BG$ is the standard map obtained (up to homotopy)
from the equivalence $G\simeq \Omega BG$ and the counit $\Sigma\Omega BG\to BG$.

Each group in~\eqref{eq:BGCD} is infinite cyclic (Appendix~\ref{sec:A}), and $\tau$ is an isomorphism up to sign.
We choose generators $x_3$, $y_4$, and $u_4$ so that the triangle commutes:
\begin{equation}\label{eq:triangle_commutes}
\tau(y_4)=x_3,\qquad \Sigma^\sharp(x_3)=u_4,\qquad \sigma^*(y_4)=u_4.
\end{equation}
All three classes arise from the same basic invariant quadratic form
\begin{equation*}
Q(X)=\tfrac12(X,X)
\end{equation*}
on $\mathfrak g$, viewed on $BG$, $G$, and $\Sigma G$ respectively.

\medskip

\noindent\textbf{From cohomology back to $\pi_3(G)$.}
For such $G$, the Hurewicz map $h:\pi_3(G)\to H_3(G,\mathbb Z)$ is an isomorphism, and the evaluation pairing
$H^3(G,\mathbb Z)\times H_3(G,\mathbb Z)\to\mathbb Z$ is perfect.
Once $x_3$ is fixed, there is a unique generator $\gamma_3\in\pi_3(G)$ characterized by
$\langle x_3,h(\gamma_3)\rangle=1$.

\subsection{The hexagon: the unifying picture
}\label{sec:hexagon}

The commutative diagrams in equations \eqref{eq:BetaCD} and \eqref{eq:BGCD} are unified by the Chern characters $ch_3$ and $ch_2$, the result is a hexagon illustrated in Figure \ref{fig:hexagon}. 

\begin{figure}[H]
\centering
\begin{equation*}
\begin{tikzcd}[row sep=large, column sep=small]
  & R(G) \arrow[dl, "\beta"'] \arrow[dr, "\beta_\Sigma"] &  \\
K^{-1}(G)
  \arrow[dd, "\operatorname{ch}_3"']
  \arrow[rr, "\Sigma_K\;(\cong)"']
& &
\widetilde{K}^0(\Sigma G) \arrow[dd, "\operatorname{ch}_2"] \\
& &  \\
H^3(G,\mathbb{Z})
  \arrow[rr, "\Sigma^\sharp"']
& &
H^4(\Sigma G,\mathbb{Z}) \\
&  \arrow[ul, "\tau"'] H^4(BG,\mathbb{Z}) \arrow[ur, "\sigma^*"'] &
\end{tikzcd}
\end{equation*}
\caption{The $K$--theory/cohomology hexagon for a simple simply-connected compact Lie group $G$.
All arrows are natural in $G$; Harris and Naylor identify the Dynkin index as the coefficient of the primitive generators.
Naturality is the mechanism behind the universal scaling theorem for embeddings.}
\label{fig:hexagon}
\end{figure}

The diagram on Figure \ref{fig:hexagon} is a functorial compatibility statement:
\begin{itemize}
\item the two maps $\beta$ and $\beta_\Sigma$ are identified by Bott periodicity ($\Sigma_K$);
\item the bottom triangle encodes the three cohomology incarnations of the same quadratic form $Q$;
\item Harris and Naylor identify the Dynkin index $\ell_\rho$ as the coefficient of the primitive generators:
\begin{equation*}
\operatorname{ch}_3(\beta(\rho))=\ell_\rho\,x_3,\qquad
\operatorname{ch}_2(\beta_\Sigma(\rho))=\ell_\rho\,u_4.
\end{equation*}
\end{itemize}

In the next section we use naturality of every arrow in the hexagon to show that for an embedding $f:G\hookrightarrow H$
a \emph{single} integer $j_f$ rescales simultaneously $y_4$, $x_3$, $u_4$, and $\pi_3$, and that this integer is exactly
the Dynkin embedding index.

\section{The universal scaling theorem}\label{sec:universalscaling}

In this section we explain the \emph{topological} meaning of the Dynkin embedding index.
For an embedding of simple, simply-connected compact Lie groups
\begin{equation*}
f:\;G\hookrightarrow H,
\end{equation*}
we show that there is a single integer $j_f$ governing the induced maps on
\begin{equation*}
H^4(B(-),\mathbb Z),\qquad H^3((-),\mathbb Z),\qquad H^4(\Sigma(-),\mathbb Z),\qquad \pi_3((-)),
\end{equation*}
and we relate this same integer to restriction of representations via the Chern character.

\subsection{Low-degree topology of G and BG}

Let $G$ be simple and simply-connected.
Bott periodicity gives $\pi_2(G)=0$ and $\pi_3(G)\cong \mathbb Z$, hence $G$ is $2$-connected.
\noindent
Since $G\simeq \Omega BG$, we have the standard shift of homotopy groups
\begin{equation}
\pi_i(BG)\;\cong\;\pi_{i-1}(G)\qquad (i\ge 2).
\end{equation}
For $G$ simple and simply-connected, Cartan's theorem gives $\pi_1(G)=\pi_2(G)=0$, hence
\begin{equation}
\pi_2(BG)=\pi_1(G)=0,\qquad \pi_3(BG)=\pi_2(G)=0,\qquad \pi_1(BG)=0.
\end{equation}
Therefore $BG$ is $3$-connected.

In particular (Appendix~\ref{sec:A}),
\begin{equation}\label{eq:lowdeg_G}
H^1(G,\mathbb Z)=H^2(G,\mathbb Z)=0,\qquad H^3(G,\mathbb Z)\cong \mathbb Z,
\end{equation}
and
\begin{equation}\label{eq:lowdeg_BG}
H^1(BG,\mathbb Z)=H^2(BG,\mathbb Z)=H^3(BG,\mathbb Z)=0,\qquad H^4(BG,\mathbb Z)\cong \mathbb Z.
\end{equation}
We choose generators $x_3(G)\in H^3(G,\mathbb Z)$ and $y_4(G)\in H^4(BG,\mathbb Z)$ compatibly with transgression,
\begin{equation}\label{eq:transgression_x3_y4}
\tau\big(y_4(G)\big)=x_3(G),
\end{equation}
and set
\begin{equation}\label{eq:def_u_generator}
u_4(G):=\Sigma^\sharp\big(x_3(G)\big)\in H^4(\Sigma G,\mathbb Z).
\end{equation}

\subsection{The embedding index as a universal scaling integer}

Let $f:G\hookrightarrow H$ be an embedding of simple, simply-connected compact Lie groups.
It induces maps
\begin{align}
& f^*: H^3(H, \mathbb{Z})\to H^3(G,\mathbb{Z}),\label{eq:induced_f_H3}\\
& (Bf)^*: H^4(BH, \mathbb{Z})\to H^4(BG,\mathbb{Z}),\label{eq:induced_Bf_H4}\\
& (\Sigma f)^*: H^4(\Sigma H,\mathbb{Z})\to H^4(\Sigma G,\mathbb{Z}),\label{eq:induced_Sigmaf_H4}
\end{align}
and on homotopy
\begin{equation}\label{eq:induced_map_pi3}
f_*:\pi_3(G)\to \pi_3(H).
\end{equation}
Each source and target above is infinite cyclic, so each induced map is multiplication by an integer once generators are chosen.

\begin{thm}[Universal scaling]\label{thm:universal_scaling}
Let $f\colon G\hookrightarrow H$ be an embedding of compact, simple, simply-connected Lie groups.
Let $(\cdot,\cdot)_G$ and $(\cdot,\cdot)_H$ be the \emph{reduced} invariant inner products
(i.e.\ $\Ad$-invariant, positive definite on the compact real forms, and normalized so that every long root has squared length $2$).
There is a unique integer $j_f\ge 1$ such that
\begin{equation}\label{eq:def_jf_inner_product}
\Big( df(X),df(Y)\Big)_H \;=\; j_f\, (X,Y)_G
\qquad (X,Y\in\mathfrak g).
\end{equation}
(Equivalently, $j_f$ is the Dynkin embedding index.)

For each such $K\in\{G,H\}$, fix:
\begin{itemize}
\item the canonical generator $\gamma_3(K)\in\pi_3(K)\cong\mathbb Z$ given by the highest-root embedding
$SU(2)\hookrightarrow K$ (Appendix~B),
\item the dual generator $x_3(K)\in H^3(K;\mathbb Z)\cong\mathbb Z$ characterized by
$\Big\langle x_3(K),h(\gamma_3(K))\Big\rangle=1$ (with $h$ the Hurewicz isomorphism),
\item $u_4(K):=\Sigma^\sharp(x_3(K))\in H^4(\Sigma K;\mathbb Z)$ (reduced suspension),
\item the generator $y_4(K)\in H^4(BK;\mathbb Z)$ characterized by $\tau\!\big(y_4(K)\big)=x_3(K)$.
\end{itemize}
Then the maps induced by $f$ satisfy
\begin{align*}
f_*\big(\gamma_3(G)\big) &= j_f\,\gamma_3(H) \qquad\text{in }\pi_3(H),\\
f^*\big(x_3(H)\big) &= j_f\,x_3(G) \qquad\text{in }H^3(G;\mathbb Z),\\
(\Sigma f)^*\big(u_4(H)\big) &= j_f\,u_4(G) \qquad\text{in }H^4(\Sigma G;\mathbb Z),\\
(Bf)^*\big(y_4(H)\big) &= j_f\,y_4(G) \qquad\text{in }H^4(BG;\mathbb Z).
\end{align*}
In particular, each of these homomorphisms is injective with cokernel $\mathbb Z/j_f\mathbb Z$. Moreover, for any complex representation $\rho$ of $H$ such that
$\operatorname{ch}_3(\beta(\rho))=\ell_\rho\,x_3(H)$, one has
$\ell_{\rho\circ f}=j_f\,\ell_\rho$ where $\beta(\rho)\in K^{-1}(H)$  is the class associated to the map $H\to U$ defined by $\rho$.
\end{thm}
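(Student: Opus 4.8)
The plan is to prove the single scaling relation $f^*\big(x_3(H)\big)=j_f\,x_3(G)$ first, and then to transport it to $\pi_3$, $H^4(B-)$, and $H^4(\Sigma-)$ by pure naturality, deducing the representation-theoretic statement at the end. I would begin with the core relation on $H^3$. By definition \eqref{eq:def_jf_inner_product}, $df$ rescales the basic inner product by $j_f$, and by the normalization of Appendix~\ref{sec:AppB} both $x_3(G)$ and $x_3(H)$ are represented in de Rham cohomology by the bi-invariant Cartan $3$-forms built from the respective basic inner products, normalized so that the period over the highest-root $S^3$ equals $1$; this is exactly the normalization that makes $x_3(K)$ the Hurewicz dual of $\gamma_3(K)$. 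Since $f$ pulls the Maurer--Cartan form of $H$ back to that of $G$ composed with $df$, and since $(\cdot,\cdot)_H$ restricts to $j_f(\cdot,\cdot)_G$ on $\mathfrak g$, the $H$-form pulls back to $j_f$ times the $G$-form; as $H^3(G;\mathbb Z)$ is torsion-free, this real identity upgrades to the integral relation $f^*(x_3(H))=j_f\,x_3(G)$. Staying inside the hexagon, I would cross-check this via Harris' Theorem~\ref{thm:Harris}: the definitions \eqref{eq:def-ellf} and \eqref{eq:def_jf_inner_product} give the elementary trace identity $\ell_{\rho\circ f}=j_f\,\ell_\rho$ for every representation $\rho$ of $H$, and applying $\operatorname{ch}_3$ together with naturality of $\beta$ and of the Chern character forces the same $f^*(x_3(H))=j_f\,x_3(G)$ for any $\rho$ with $\ell_\rho\neq 0$.

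Once the $H^3$ relation is in hand, the remaining three are formal. For $\pi_3$, I would use that the Hurewicz map $h$ is a natural isomorphism and that $f_*,f^*$ are adjoint under the perfect evaluation pairing: writing $f_*(\gamma_3(G))=m\,\gamma_3(H)$ and pairing against $x_3(H)$ gives $m=\langle x_3(H),h(f_*\gamma_3(G))\rangle=\langle f^*x_3(H),h(\gamma_3(G))\rangle=\langle j_f\,x_3(G),h(\gamma_3(G))\rangle=j_f$. For $H^4(\Sigma-)$, naturality of the suspension isomorphism $\Sigma^\sharp$ gives $(\Sigma f)^*u_4(H)=\Sigma^\sharp f^*x_3(H)=j_f\,u_4(G)$. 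For $H^4(B-)$, naturality of the transgression $\tau$ in the map of universal bundles $(f,Bf)$ gives $\tau\big((Bf)^*y_4(H)\big)=f^*\tau(y_4(H))=j_f\,x_3(G)=\tau\big(j_f\,y_4(G)\big)$, and injectivity of $\tau$ (an isomorphism by the connectivity in \eqref{eq:lowdeg_BG}) yields $(Bf)^*y_4(H)=j_f\,y_4(G)$. Each map is thus multiplication by $j_f$ on $\mathbb Z$, hence injective with cokernel $\mathbb Z/j_f\mathbb Z$.

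For the final assertion, the relation $\ell_{\rho\circ f}=j_f\,\ell_\rho$ is the elementary trace computation above; it is consistent with the topology because $\operatorname{ch}_3(\beta(\rho\circ f))=f^*\operatorname{ch}_3(\beta(\rho))=\ell_\rho\,f^*x_3(H)=j_f\,\ell_\rho\,x_3(G)$, which equals $\ell_{\rho\circ f}\,x_3(G)$ by Harris applied to $G$. I expect the main obstacle to be the careful bookkeeping in the first paragraph: verifying that the integral generator $x_3(K)$ --- defined via the highest-root $SU(2)$ and Hurewicz duality --- is \emph{exactly} the class of the basic Cartan $3$-form with period-one normalization, with compatible sign and orientation conventions across both $G$ and $H$. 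All of the ``compatible choice of generators'' content lives here; once it is settled, every remaining step is formal naturality, and the integrality $j_f\in\mathbb Z_{\ge 1}$ is then inherited from the $\pi_3$-degree rather than assumed.
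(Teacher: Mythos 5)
Your proposal is correct and follows essentially the same route as the paper's proof: the key identity $f^*\eta_H=j_f\,\eta_G$ for the Cartan $3$-forms pins down the $H^3$ scaling, the $\pi_3$ statement follows by the same Kronecker-pairing/Hurewicz duality argument, and the $H^4(\Sigma-)$, $H^4(B-)$, and Dynkin-index statements follow by naturality of $\Sigma^\sharp$, $\tau$, and $\beta$/$\operatorname{ch}$ exactly as in the paper. The only differences are cosmetic --- you establish $H^3$ first and deduce $\pi_3$, whereas the paper pairs first and reads off $H^3$ ``equivalently,'' and your elementary trace-identity derivation of $\ell_{\rho\circ f}=j_f\,\ell_\rho$ is a harmless extra cross-check.
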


\begin{proof}
Assume $G$ and $H$ are compact, connected, simply connected, simple Lie groups.
Let $\eta_K$ be the bi-invariant Cartan $3$-form on $K$ associated to the reduced inner product
(Appendix~B), normalized so that $[\eta_K]=x_3(K)\in H^3(K;\mathbb Z)$ and
\begin{equation}
\int_{SU(2)_\theta}\eta_K=1,
\end{equation}
where $SU(2)_\theta\subset K$ is the subgroup corresponding to the highest root.
Let $\gamma_3(K)\in \pi_3(K)\cong \mathbb Z$ be the generator represented by the inclusion
$SU(2)_\theta\hookrightarrow K$, and let $h:\pi_3(K)\to H_3(K;\mathbb Z)$ be the Hurewicz map.

\smallskip
\emph{Step 1: homotopy via the Cartan $3$-form.}
By \eqref{eq:def_jf_inner_product} and functoriality of the Cartan construction,
\begin{equation}
f^*\eta_H = j_f\,\eta_G
\end{equation}
as differential forms on $G$. Evaluating on the fundamental class $h(\gamma_3(G))$ gives
\begin{equation}
\langle f^*x_3(H),\,h(\gamma_3(G))\rangle
=\int_{h(\gamma_3(G))} f^*\eta_H
=j_f\int_{h(\gamma_3(G))}\eta_G
=j_f.
\end{equation}
By naturality of the Kronecker pairing,
$
\langle f^*x_3(H),h(\gamma_3(G))\rangle=\langle x_3(H),h(f_*\gamma_3(G))\rangle$.
Since $h(\gamma_3(H))$ is characterized by $\langle x_3(H),h(\gamma_3(H))\rangle=1$,
this forces $f_*(\gamma_3(G))=j_f\,\gamma_3(H)$.

\smallskip
\emph{Step 2: cohomology in degree $3$.}
Equivalently, $f^*(x_3(H))=j_f\,x_3(G)$ in $H^3(G;\mathbb Z)$.

\smallskip
\emph{Step 3: suspension.}
Naturality of the reduced cohomology suspension $\Sigma^\sharp$ gives
\begin{equation}
(\Sigma f)^*(u_4(H))
=(\Sigma f)^*\Sigma^\sharp(x_3(H))
=\Sigma^\sharp f^*(x_3(H))
=\Sigma^\sharp(j_f x_3(G))
=j_f\,u_4(G).
\end{equation}

\smallskip
\emph{Step 4: transgression.}
In the universal bundles, naturality of transgression yields
\begin{equation}
f^*(x_3(H))
=f^*(\tau(y_4(H)))
=\tau\bigl((Bf)^*(y_4(H))\bigr).
\end{equation}
With our choice of generators satisfying $\tau(y_4(G))=x_3(G)$ and $\tau(y_4(H))=x_3(H)$,
the identity $f^*(x_3(H))=j_f\,x_3(G)$ forces
\begin{equation}
(Bf)^*(y_4(H))=j_f\,y_4(G).
\end{equation}

\smallskip
\emph{Finally, the representation consequence.}
Naturality of $\beta$ and of the Chern character gives
\begin{equation}
\operatorname{ch}_3\!\bigl(\beta(\rho\circ f)\bigr)
=f^*\operatorname{ch}_3\!\bigl(\beta(\rho)\bigr)
=f^*(\ell_\rho x_3(H))
=\ell_\rho\,j_f\,x_3(G),
\end{equation}
hence $\ell_{\rho\circ f}=j_f\,\ell_\rho$.
\end{proof}

\medskip

Embeddings with $j_f=1$ are therefore precisely those for which the induced maps
in degrees $3$ and $4$ are isomorphisms; equivalently, no ``fractionalization'' occurs in these basic topological invariants.

\subsection{Third homotopy group of the  homogeneous space H/G}

Since $G$ is compact, $f(G)\subset H$ is closed; hence $H/G$ is a smooth homogeneous space and the projection
$H\to H/G$ is a principal $G$-bundle with basis $H/G$.

\begin{prop}\label{prop:pi3_H_over_G}
Let $f:G\hookrightarrow H$ be an embedding of simple, simply-connected compact Lie groups with Dynkin embedding index $j_f$.
Then
\begin{equation*}
\pi_3(H/G)\;\cong\;\operatorname{coker}\big(f_*:\pi_3(G)\to \pi_3(H)\big)\;\cong\;\mathbb Z/j_f\mathbb Z.
\end{equation*}
\end{prop}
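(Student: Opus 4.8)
The plan is to compute $\pi_3(H/G)$ directly from the long exact homotopy sequence of the principal $G$-bundle $G \to H \to H/G$, using the low-degree topology of $G$ and $H$ established earlier together with the universal scaling result (Theorem~\ref{thm:universal_scaling}) that identifies $f_*$ on $\pi_3$ with multiplication by $j_f$. Since $f(G)\subset H$ is closed, the quotient $H/G$ is a smooth manifold and $H\to H/G$ is a genuine principal $G$-bundle (as already noted), which justifies applying the fibration long exact sequence.

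First I would write down the relevant portion of the long exact sequence in homotopy for the fibration $G \hookrightarrow H \to H/G$:
\begin{equation*}
\pi_3(G)\xrightarrow{\,f_*\,}\pi_3(H)\longrightarrow \pi_3(H/G)\longrightarrow \pi_2(G).
\end{equation*}
The next step is to pin down the outer terms. Because $G$ is simple and simply-connected, Bott periodicity and Cartan's theorem give $\pi_2(G)=0$, so the boundary map out of $\pi_3(H/G)$ lands in the trivial group and the sequence terminates on the right. This already yields the short exact sequence
\begin{equation*}
\pi_3(G)\xrightarrow{\,f_*\,}\pi_3(H)\longrightarrow \pi_3(H/G)\longrightarrow 0,
\end{equation*}
which identifies $\pi_3(H/G)\cong \operatorname{coker}(f_*)$, establishing the first isomorphism in the statement.

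Second, I would identify the cokernel explicitly. By the hypotheses $H$ and $G$ are simple and simply-connected, so $\pi_3(G)\cong\pi_3(H)\cong\mathbb Z$. By Theorem~\ref{thm:universal_scaling}, with the canonical generators $\gamma_3(G)$ and $\gamma_3(H)$ fixed by the highest-root $SU(2)$ embeddings, the induced map satisfies $f_*(\gamma_3(G))=j_f\,\gamma_3(H)$; that is, $f_*$ is multiplication by $j_f$ on $\mathbb Z$. Hence $\operatorname{coker}(f_*)\cong \mathbb Z/j_f\mathbb Z$, completing the chain of isomorphisms. The main technical point to verify carefully is simply that the sequence terminates correctly, i.e.\ that $\pi_2(G)=0$ so that surjectivity onto $\pi_3(H/G)$ holds; this is the only place where the simply-connected, simple hypothesis on $G$ is genuinely used for the homotopy bookkeeping, and it is immediate from the $2$-connectedness of $G$. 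The remaining content is entirely a restatement of the scaling already proved, so I expect no substantive obstacle beyond quoting Theorem~\ref{thm:universal_scaling} for the value of $f_*$.
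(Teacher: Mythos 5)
Your argument is correct and is essentially identical to the paper's proof: both use the long exact homotopy sequence of the principal $G$-bundle $G\to H\to H/G$, kill the right-hand term via $\pi_2(G)=0$ to get $\pi_3(H/G)\cong\operatorname{coker}(f_*)$, and then invoke Theorem~\ref{thm:universal_scaling} to identify $f_*$ as multiplication by $j_f$ on $\pi_3(G)\cong\pi_3(H)\cong\mathbb Z$. No gaps; nothing further to add.
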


\begin{proof}
Consider the principal $G$-bundle $G\to H\to H/G$.
The long exact homotopy sequence contains
\begin{equation*}
\pi_3(G)\xrightarrow{\,f_*\,}\pi_3(H)\longrightarrow \pi_3(H/G)\longrightarrow \pi_2(G).
\end{equation*}
Since $\pi_2(G)=0$, we have $\pi_3(H/G)\cong \operatorname{coker}(f_*)$.
Finally $\pi_3(G)\cong \pi_3(H)\cong \mathbb Z$, and by Theorem~\ref{thm:universal_scaling} the map $f_*$ is multiplication by $j_f$ on generators,
so the cokernel is $\mathbb Z/j_f\mathbb Z$.
\end{proof}

\subsection{Naturality in K-theory: scaling laws for Dynkin indices}\label{subsec:K_naturality_scaling}

We now make explicit the functorial content of the hexagon~ given in Figure~\ref{fig:hexagon}.
All arrows there are natural with respect to maps of spaces, and in particular with respect to a group embedding $f:G\hookrightarrow H$.

\medskip

\noindent\textbf{Naturality of $\beta$ and of the Chern character.}
Let $\rho:H\to U(N)$ be a unitary representation. Its class
$\beta_H(\rho)\in K^{-1}(H)\cong [H,U]$ pulls back to the restricted representation:
\begin{equation*}
f^!\big(\beta_H(\rho)\big)=\beta_G(\rho\circ f)\in K^{-1}(G).
\end{equation*}
Moreover, the Chern character is natural:
for any $g:X\to Y$ and any $a\in K^*(Y)$,
\begin{equation*}
\operatorname{ch}(g^!a)=g^*\operatorname{ch}(a)\in H^*(X;\mathbb Q).
\end{equation*}
Applying this with $g=f$ (and with $g=\Sigma f$) yields
\begin{align}
\operatorname{ch}_3\!\left(\beta_G(\rho\circ f)\right)
&=f^*\operatorname{ch}_3\!\left(\beta_H(\rho)\right),\label{eq:ch3_natural}\\
\operatorname{ch}_2\!\left(\beta_{\Sigma,G}(\rho\circ f)\right)
&=(\Sigma f)^*\operatorname{ch}_2\!\left(\beta_{\Sigma,H}(\rho)\right).\label{eq:ch2_natural}
\end{align}

\medskip

\noindent\textbf{Scaling of Dynkin indices.}
By Harris (and equivalently Naylor),
\begin{equation*}
\operatorname{ch}_3\!\left(\beta_H(\rho)\right)=\ell_\rho\,x_3(H),\qquad
\operatorname{ch}_3\!\left(\beta_G(\rho\circ f)\right)=\ell_{\rho\circ f}\,x_3(G).
\end{equation*}
Using $f^*x_3(H)=j_f\,x_3(G)$ from Theorem~\ref{thm:universal_scaling} and \eqref{eq:ch3_natural} gives
\begin{equation}\label{eq:ell_scaling_formula}
\ell_{\rho\circ f}=j_f\,\ell_\rho.
\end{equation}
Equivalently, applying the same argument to \eqref{eq:ch2_natural} and $(\Sigma f)^*u_4(H)=j_f\,u_4(G)$ yields the identical scaling law via $\operatorname{ch}_2$.

\medskip

\noindent\textbf{Index-one criterion.}
In particular,
\begin{equation*}
j_f=1\quad\Longleftrightarrow\quad
\ell_{\rho\circ f}=\ell_\rho\ \text{ for all representations }\rho\text{ of }H,
\end{equation*}
and equivalently $f^*:H^3(H;\mathbb Z)\xrightarrow{\ \cong\ }H^3(G;\mathbb Z)$ (and likewise in degree $4$ on $BG$ and $\Sigma G$).

\subsection{The Dynkin index as a characteristic number}

To emphasize the ``characteristic number'' interpretation, it is convenient to normalize $x_3(G)$ using an index-one $SU(2)\subset G$.
For simply-connected simple $G$, there is (up to conjugacy) a distinguished embedding
$\varphi:SU(2)\hookrightarrow G$ of Dynkin embedding index $1$ corresponding to the highest root; see
\cite[\S 8]{AHS}, \cite{Bernard}, and \cite[Appendix A]{Csaki}.
Topologically $SU(2)\cong S^3$, and $\varphi$ represents the generator of $\pi_3(G)$.

We fix the sign of $x_3(G)$ by requiring
\begin{equation}\label{eq:normalize_x3_by_SU2}
\int_{S^3}\varphi^*x_3(G)=1.
\end{equation}

\begin{thm}\label{thm:ell_as_characteristic_number}
Let $G$ be simply-connected and simple, and let $\rho$ be a finite-dimensional complex representation of $G$.
Let $S^3_*\subset G$ denote the image of an index-one subgroup $\varphi:SU(2)\hookrightarrow G$.
Then
\begin{equation*}
\ell_\rho \;=\;\int_{S^3_*}\operatorname{ch}_3\big(\beta(\rho)\big).
\end{equation*}
Equivalently,
\begin{equation*}
\ell_\rho=\int_{\Sigma S^3_*}\operatorname{ch}_2\big(\beta_\Sigma(\rho)\big).
\end{equation*}
\end{thm}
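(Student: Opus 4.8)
The plan is to read off both identities directly from Harris' degree-$3$ formula (Theorem~\ref{thm:Harris}) and Naylor's suspended degree-$4$ formula (Theorem~\ref{thm:Naylor_corrected}), after pulling back along the distinguished index-one embedding $\varphi\colon SU(2)\hookrightarrow G$ and using the sign normalization \eqref{eq:normalize_x3_by_SU2}. No new input is needed beyond these two theorems and the naturality of the cohomology suspension; the entire content is a comparison of two already-computed integral classes restricted to a $3$-sphere.

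First I would treat the degree-$3$ statement. By Harris' theorem, $\operatorname{ch}_3(\beta(\rho))=\ell_\rho\,x_3(G)$ in $H^3(G,\mathbb Z)$. Integrating over the image $S^3_*=\varphi(SU(2))$ is the same as pulling back by $\varphi$ and integrating over $S^3\cong SU(2)$, so
$$\int_{S^3_*}\operatorname{ch}_3(\beta(\rho))=\int_{S^3}\varphi^*\operatorname{ch}_3(\beta(\rho))=\ell_\rho\int_{S^3}\varphi^*x_3(G).$$
The normalization \eqref{eq:normalize_x3_by_SU2} says precisely that $\int_{S^3}\varphi^*x_3(G)=1$, so the right-hand side equals $\ell_\rho$, giving the first identity.

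For the suspended statement I would use Naylor's formula $\operatorname{ch}_2(\beta_\Sigma(\rho))=\ell_\rho\,u_4(G)$ with $u_4(G)=\Sigma^\sharp(x_3(G))$. The cohomology suspension is natural, so pulling back along $\Sigma\varphi\colon\Sigma S^3_*\to\Sigma G$ commutes with $\Sigma^\sharp$:
$$(\Sigma\varphi)^*u_4(G)=(\Sigma\varphi)^*\Sigma^\sharp x_3(G)=\Sigma^\sharp\big(\varphi^*x_3(G)\big).$$
Hence
$$\int_{\Sigma S^3_*}\operatorname{ch}_2(\beta_\Sigma(\rho))=\ell_\rho\int_{\Sigma S^3_*}\Sigma^\sharp\big(\varphi^*x_3(G)\big).$$
Since $\varphi^*x_3(G)$ is the generator of $H^3(S^3,\mathbb Z)$ integrating to $1$, and the reduced suspension carries the fundamental cohomology class of $S^3$ to that of $\Sigma S^3\cong S^4$, the remaining integral equals $1$. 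Equivalently, one may invoke the compatibility of Remark~\ref{rem:Naylor_vs_Harris}, namely $\Sigma^\sharp\operatorname{ch}_3(\beta(\rho))=\operatorname{ch}_2(\beta_\Sigma(\rho))$, together with the fact that $\Sigma^\sharp$ intertwines integration over $S^3_*$ with integration over $\Sigma S^3_*$.

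The only point requiring care, and the one I would treat as the main obstacle, is pinning down the orientation convention in the reduced suspension so that $\int_{\Sigma S^3}\Sigma^\sharp\omega=\int_{S^3}\omega$ for the generator $\omega$, rather than picking up a spurious sign. This is the standard compatibility between the suspension isomorphisms on cohomology and on homology (fundamental classes $[S^n]\mapsto[S^{n+1}]=[\Sigma S^n]$), but it must be fixed consistently with the conventions already used to define $u_4$ and to pin the sign of $x_3$ in \eqref{eq:normalize_x3_by_SU2}. Once that orientation convention is fixed, both identities follow immediately from the two integrations above.
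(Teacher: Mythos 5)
Your proposal is correct and follows essentially the same route as the paper's proof: Harris' formula $\operatorname{ch}_3(\beta(\rho))=\ell_\rho\,x_3(G)$ integrated over $S^3_*$ with the normalization \eqref{eq:normalize_x3_by_SU2} for the first identity, and the compatibility $\Sigma^\sharp\operatorname{ch}_3(\beta(\rho))=\operatorname{ch}_2(\beta_\Sigma(\rho))$ together with $u_4(G)=\Sigma^\sharp(x_3(G))$ for the second. Your extra care about the orientation convention in the suspension is a reasonable elaboration of a point the paper leaves implicit, but it does not change the argument.
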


\begin{proof}
By Harris, $\operatorname{ch}_3(\beta(\rho))=\ell_\rho\,x_3(G)$.
Integrating over $S^3_*$ and using~\eqref{eq:normalize_x3_by_SU2} gives the first identity.
The second follows from $\Sigma^\sharp(\operatorname{ch}_3(\beta(\rho)))=\operatorname{ch}_2(\beta_\Sigma(\rho))$ and $u_4(G)=\Sigma^\sharp(x_3(G))$.
\end{proof}

\subsection{Dynkin embedding index and decay of minimally charged instantons}

For a Yang--Mills theory with gauge algebra $\mathfrak g$, the kinetic and topological terms involve
an invariant bilinear form $(\cdot,\cdot)_{\mathfrak g}$ and a coupling constant $g_{\mathfrak g}$:
\begin{equation*}
\frac{1}{2g^2_{\mathfrak g}}\int (F,*F)_{\mathfrak g},\qquad
\frac{1}{2g^2_{\mathfrak g}}\int (F,F)_{\mathfrak g}.
\end{equation*}
If $f:\mathfrak g\hookrightarrow\mathfrak h$ has Dynkin embedding index $j_f$, then by definition
\begin{equation*}
(fX,fY)_{\mathfrak h}=j_f\,(X,Y)_{\mathfrak g}.
\end{equation*}
Thus the terms above rescale uniformly, and one may reinterpret this as a rescaling of the coupling:
\begin{equation*}
\frac{1}{g^2_{\mathfrak g}}=\frac{j_f}{g^2_{\mathfrak h}}.
\end{equation*}
In particular, a unit-charge instanton in the $\mathfrak g$-theory corresponds to $j_f$ units of charge in the parent $\mathfrak h$-theory.
When $j_f>1$ this is often described as the existence of $\mathbb Z_{j_f}$-instantons; see \cite{Csaki} following \cite{Intriligator:1995id}.

\subsection{The index-one selection rule as a conservation law}

By Theorem~\ref{thm:universal_scaling}, the condition $j_f=1$ is equivalent to requiring that the induced maps
\begin{align*}
(Bf)^* &: H^4(BH,\mathbb Z)\longrightarrow H^4(BG,\mathbb Z),\\
(\Sigma f)^* &: H^4(\Sigma H,\mathbb Z)\longrightarrow H^4(\Sigma G,\mathbb Z),\\
f^* &: H^3(H,\mathbb Z)\longrightarrow H^3(G,\mathbb Z),\\
f_* &: \pi_3(G)\longrightarrow \pi_3(H),
\end{align*}
are all isomorphisms.  In that case there is no ``loss'' or ``fractionalization'' of instanton charge, Chern--Simons/WZW levels,
or the suspended class detected by $\operatorname{ch}_2$.

To a given embedding $f:\mathfrak g\hookrightarrow\mathfrak h$ one can associate an embedding matrix $U_f$ as in Lorente--Gruber \cite{LorenteGruber};
it satisfies
\begin{equation*}
U_fU_f^\dagger=j_f\,\mathrm{Id},
\end{equation*}
so the rows of $U_f$ are orthonormal (with respect to the standard Hermitian form) if and only if $j_f=1$.

Combining these observations with~\eqref{eq:ell_scaling_formula} yields equivalent characterizations of index-one embeddings:
\begin{itemize}
\item[(1)] \textbf{Instanton spectrum is preserved.} A minimal instanton in $G$ remains minimal in $H$ iff $j_f=1$.
\item[(2)] \textbf{Chern--Simons levels are preserved.} The classification by $H^4(BG,\mathbb Z)$ is unchanged iff $j_f=1$.
\item[(3)] \textbf{WZW levels are preserved.} The classification by $H^3(G,\mathbb Z)$ is unchanged iff $j_f=1$.
\item[(4)] \textbf{Suspended class is preserved.} The classification by $H^4(\Sigma G,\mathbb Z)$ (equivalently the $\operatorname{ch}_2$ image) is unchanged iff $j_f=1$.
\item[(5)] \textbf{Embedding matrix is orthonormal.} The Lorente--Gruber matrix $U_f$ has orthonormal rows iff $j_f=1$.
\item[(6)] \textbf{Dynkin indices are preserved under restriction.} $\ell_{\rho\circ f}=\ell_\rho$ for all $\rho$ iff $j_f=1$.
\end{itemize}
Together these show that the ``index-one selection rule'' is precisely a conservation law for the basic topological charges and levels under embeddings of gauge groups.

\subsection*{Acknowledgements}

The authors are grateful to Saber Ahmed, Sungkyung Kang, and Alfred No\"el for discussions. Part of this work was carried out while M.E. was a visitor at the Simons Laufer Mathematical Sciences Institute (SLMath) in Berkeley, California.
M.~J.~K.~is supported by the Start-up Research Grant for new faculty provided by Texas A\&M University.

\begin{appendices}

\section{Low-degree topology of a compact Lie group and its classifying space}\label{sec:A}

This appendix collects standard facts about invariant bilinear forms, Lie algebra cohomology,
and the low-degree homotopy and (co)homology of compact, simple, simply-connected Lie groups
and their classifying spaces.  We also recall the suspension--loop adjunction and the basic
compatibility between transgression and suspension that we use later.

\subsection{Invariant bilinear forms and Lie algebra cohomology}

Let $\mathfrak g$ be a (real or complex) Lie algebra.
An \emph{$\Ad$-invariant symmetric bilinear form} on $\mathfrak g$ is a symmetric bilinear form
$B:\mathfrak g\times \mathfrak g\to \mathbb C$ such that for all $x,y,z\in\mathfrak g$,
\begin{equation}\label{eq:inv_form_appendix}
B([z,x],y)+B(x,[z,y])=0.
\end{equation}

\begin{thm}[See e.g.\ {\cite[Prop.~2.3.2]{Loop}}]\label{thm:inv_forms_one_dim}
Let $\mathfrak g$ be a complex simple Lie algebra. Then the space of invariant symmetric bilinear forms
on $\mathfrak g$ is one-dimensional. In particular, any two invariant symmetric bilinear forms differ
by an overall scalar factor.
\end{thm}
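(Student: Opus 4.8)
The plan is to reinterpret invariant symmetric bilinear forms as morphisms of $\mathfrak g$-modules and then invoke Schur's lemma. To any bilinear form $B$ on $\mathfrak g$ I associate the linear map $\phi_B\colon\mathfrak g\to\mathfrak g^*$ defined by $\phi_B(X)(Y)=B(X,Y)$, where $\mathfrak g^*$ carries the coadjoint action $(X\cdot\xi)(Y)=-\xi([X,Y])$. A direct computation shows that the invariance condition \eqref{eq:inv_form_appendix} is \emph{equivalent} to $\phi_B$ being $\mathfrak g$-equivariant (for the adjoint action on the source and the coadjoint action on the target). Thus the space of invariant bilinear forms is canonically identified with $\mathrm{Hom}_{\mathfrak g}(\mathfrak g,\mathfrak g^*)$, and the symmetric ones form a subspace thereof.

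First I would record that the Killing form $\kappa$ of \eqref{eq:killing} is itself invariant and, by Cartan's criterion, nondegenerate because $\mathfrak g$ is simple hence semisimple. Nondegeneracy means precisely that $\phi_\kappa\colon\mathfrak g\to\mathfrak g^*$ is an \emph{isomorphism} of $\mathfrak g$-modules. Composing any invariant form's map with $\phi_\kappa^{-1}$ then produces an endomorphism $\phi_\kappa^{-1}\circ\phi_B\in\mathrm{End}_{\mathfrak g}(\mathfrak g)$, and this assignment is a linear bijection between invariant bilinear forms and $\mathfrak g$-module endomorphisms of the adjoint representation.

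The key step is to compute $\mathrm{End}_{\mathfrak g}(\mathfrak g)$. By definition, $\mathfrak g$ simple means it has no proper nonzero ideals, which is exactly the statement that the adjoint representation is \emph{irreducible}. Working over $\mathbb C$, Schur's lemma therefore gives $\mathrm{End}_{\mathfrak g}(\mathfrak g)=\mathbb C\cdot\mathrm{id}$. Tracing back through the bijection, every invariant bilinear form equals $\lambda\kappa$ for a unique scalar $\lambda$; in particular the space of invariant bilinear forms is one-dimensional, spanned by the Killing form. Since $\kappa$ is symmetric, the subspace of invariant \emph{symmetric} forms coincides with the whole line $\mathbb C\kappa$, which proves the claim.

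The only place demanding care is the equivalence between the invariance identity \eqref{eq:inv_form_appendix} and the $\mathfrak g$-equivariance of $\phi_B$; this is a short but sign-sensitive computation relying on the definition of the coadjoint action. Everything else is formal once the irreducibility of the adjoint representation (from simplicity) and the nondegeneracy of $\kappa$ (from Cartan's criterion) are in hand. I would also remark that this argument simultaneously recovers Remark~\ref{rem:semisimple-forms}: for semisimple $\mathfrak g=\bigoplus_a\mathfrak g_a$, the same reasoning applied componentwise shows the invariants decompose as a direct sum of the lines spanned by the Killing forms of the simple ideals, giving dimension $s$.
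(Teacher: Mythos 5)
Your proof is correct; the paper itself offers no proof of Theorem~\ref{thm:inv_forms_one_dim}, deferring to \cite[Prop.~2.3.2]{Loop}, and your argument---identifying invariant forms with intertwiners $\mathfrak g\to\mathfrak g^*$, using nondegeneracy of $\kappa$ via Cartan's criterion and Schur's lemma on the irreducible adjoint representation---is exactly the standard route taken in that reference. As a small bonus, your argument establishes the slightly stronger fact that \emph{every} invariant bilinear form (symmetric or not) is a scalar multiple of the Killing form.
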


\medskip
\noindent\textbf{The Killing form.}
The Killing form $\kappa(x,y):=\mathrm{tr}(\mathrm{ad}(x)\mathrm{ad}(y))$ is an invariant symmetric bilinear form
on any Lie algebra, and it is nondegenerate when $\mathfrak g$ is semisimple.

\medskip
\noindent\textbf{Lie algebra cohomology and de Rham cohomology.}
Let $G$ be a compact, connected Lie group with Lie algebra $\mathfrak g$.
Write $\Omega^*(G)^{\mathrm L}$ for the complex of left-invariant differential forms on $G$.
Evaluation at the identity identifies $\Omega^*(G)^{\mathrm L}$ with the Chevalley--Eilenberg complex
$\Lambda^*\mathfrak g^*$, and the de Rham differential corresponds to the Chevalley--Eilenberg differential.

\begin{thm}[Cartan--Chevalley--Eilenberg]\label{thm:CE_equals_dR_compact}
If $G$ is compact and connected, then the inclusion
$\Omega^*(G)^{\mathrm L}\hookrightarrow \Omega^*(G)$ induces an isomorphism
\begin{equation*}
H^*_{\mathrm{CE}}(\mathfrak g;\mathbb R)\;\cong\;H^*_{\mathrm{dR}}(G;\mathbb R).
\end{equation*}
\end{thm}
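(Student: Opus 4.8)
The plan is to reduce the theorem to the single statement that the inclusion $\iota\colon \Omega^*(G)^{\mathrm L}\hookrightarrow \Omega^*(G)$ induces an isomorphism on cohomology. Indeed, the preceding paragraph already identifies $(\Omega^*(G)^{\mathrm L},d_{\mathrm{dR}})$ with the Chevalley--Eilenberg complex $(\Lambda^*\mathfrak g^*,d_{\mathrm{CE}})$, so once $\iota_*$ is known to be an isomorphism the claim follows. The entire argument rests on the two defining features of a compact connected group: a bi-invariant normalized Haar measure, and path-connectedness.

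First I would introduce the averaging operator. Let $\mu$ be normalized bi-invariant Haar measure and let $L_g$ denote left translation by $g$. Define
\[
A\colon \Omega^*(G)\to \Omega^*(G),\qquad A\omega=\int_G L_g^*\omega\,d\mu(g).
\]
Then $A\omega$ is left-invariant, since left-invariance of $\mu$ gives $L_h^*A\omega=\int_G L_{gh}^*\omega\,d\mu(g)=A\omega$; moreover $A$ is a chain map, because each $L_g^*$ commutes with $d$ and $d$ passes under this parameter integral of smooth forms over the compact $G$; and $A\circ\iota=\mathrm{id}$ on left-invariant forms, as $L_g^*\omega=\omega$ there and $\mu(G)=1$. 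These three facts give $A_*\iota_*=\mathrm{id}$ on cohomology, so $\iota_*$ is (split) \emph{injective}.

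For surjectivity I would invoke connectedness. Since $G$ is path-connected, each $L_g$ is homotopic to $\mathrm{id}_G$ (translate along a path from $e$ to $g$), so $L_g^*$ acts as the identity on $H^*_{\mathrm{dR}}(G)$ for every $g$. One then wants to conclude that $\iota A=\int_G L_g^*\,d\mu(g)$ also acts as the identity on $H^*_{\mathrm{dR}}(G)$, for combined with $A_*\iota_*=\mathrm{id}$ this forces $\iota_*$ to be an isomorphism. \textbf{The hard part will be} exactly this passage from ``each $L_g^*$ is the identity on cohomology'' to ``the $\mu$-average is the identity on cohomology'', i.e.\ showing that for closed $\omega$ the averaged form $A\omega$ is cohomologous to $\omega$. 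The clean route is to build an integrated homotopy operator: choose, measurably in $g$, a path $\sigma_g$ from $e$ to $g$, let $h_g$ be the Cartan chain homotopy for the homotopy $t\mapsto L_{\sigma_g(t)}$ so that $dh_g+h_g d=L_g^*-\mathrm{id}$, and set $s=\int_G h_g\,d\mu(g)$; then $ds+sd=\iota A-\mathrm{id}$. The subtlety is that a \emph{globally smooth} selection $g\mapsto\sigma_g$ would contract $G$ and cannot exist, but a measurable selection suffices, and compactness of $G$ supplies the uniform bounds needed to differentiate under the integral sign.

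Alternatively, I would note that the selection issue can be bypassed via Hodge theory. Equipping $G$ with a bi-invariant Riemannian metric (average any metric), the bi-invariant forms are precisely the harmonic forms and hence uniquely represent de Rham classes on the compact manifold $G$; since the Chevalley--Eilenberg cohomology of a compact $\mathfrak g$ is computed by the $\mathrm{ad}$-invariant (equivalently bi-invariant) forms, on which $d_{\mathrm{CE}}$ vanishes, both $H^*_{\mathrm{CE}}(\mathfrak g;\mathbb R)$ and $H^*_{\mathrm{dR}}(G;\mathbb R)$ are identified with the space of bi-invariant forms, compatibly with $\iota$. I would expect the Haar-averaging argument to give the cleaner write-up, with the homotopy step above being the one place genuinely requiring care.
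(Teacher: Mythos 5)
Your proposal is correct, but your primary route is genuinely different from the paper's. The paper disposes of the theorem in a one-line Hodge-theoretic remark: fix a bi-invariant Riemannian metric on $G$; every de Rham class has a unique harmonic representative, and harmonic forms on a compact group are automatically bi-invariant, hence left-invariant --- which is precisely your second, ``alternative'' argument. Your main argument is instead the classical Haar-averaging proof: the operator $A=\int_G L_g^*(\cdot)\,d\mu(g)$ is a chain map with $A\circ\iota=\mathrm{id}$, giving split injectivity, and surjectivity comes from a chain homotopy $ds+sd=\iota A-\mathrm{id}$ obtained by integrating the Cartan homotopies $h_g$ over $G$. You correctly isolate the one delicate point --- no smooth global selection $g\mapsto\sigma_g$ of paths can exist --- and your measurable-selection fix is sound (e.g.\ take $\sigma_g(t)=\exp(t\log g)$ with $\log$ defined off the measure-zero cut locus; compactness supplies the uniform bounds needed to differentiate under the integral). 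Three small remarks. First, the substitution $g\mapsto gh$ in your invariance computation uses \emph{right}-invariance of $\mu$, not left-invariance --- harmless, since Haar measure on a compact group is bi-invariant. Second, the homotopy-integration step can be bypassed entirely by pairing with smooth cycles: for closed $\omega$ and any cycle $c$ one has $\int_c A\omega=\int_G\bigl(\int_c L_g^*\omega\bigr)d\mu(g)=\int_c\omega$ because $L_g\simeq\mathrm{id}$, and de Rham duality then gives $[A\omega]=[\omega]$ with no selection of paths needed. Third, in your Hodge paragraph the identification of $H^*_{\mathrm{CE}}(\mathfrak g;\mathbb R)$ with the $\mathrm{ad}$-invariant forms is itself a nontrivial (averaging/complete-reducibility) fact for reductive $\mathfrak g$, so that route is not quite as self-contained as stated. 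As for what each approach buys: your averaging proof is elementary (no elliptic theory) and settles injectivity at the chain level --- if a left-invariant form is exact in $\Omega^*(G)$, averaging a primitive exhibits it as exact among left-invariant forms, a step about which the paper's harmonic-representative remark is actually silent --- whereas the paper's route is much shorter once Hodge theory is granted.
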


\begin{rem}
A convenient proof of Theorem~\ref{thm:CE_equals_dR_compact} uses a bi-invariant Riemannian metric on $G$:
every de Rham cohomology class has a unique harmonic representative, and harmonic forms are automatically
bi-invariant, hence left-invariant.  Thus every class is represented by a left-invariant form.
\end{rem}

\medskip
\noindent\textbf{Koszul's cocycle.}
Given an invariant symmetric bilinear form $B$ on $\mathfrak g$, define the $3$-cochain
\begin{equation}\label{eq:koszul_cocycle}
\mu_B(x,y,z):=B(x,[y,z]).
\end{equation}
Invariance \eqref{eq:inv_form_appendix} implies that $\mu_B$ is a Chevalley--Eilenberg $3$-cocycle.

\begin{thm}[Koszul {\cite[\S 11]{Koszul}}]\label{thm:Koszul_map}
Let $\mathfrak g$ be a Lie algebra over a field of characteristic zero.
\begin{enumerate}
\item The assignment $B\mapsto [\mu_B]$ defines a linear map
\begin{equation*}
(S^2\mathfrak g^*)^{\mathfrak g}\longrightarrow H^3_{\mathrm{CE}}(\mathfrak g).
\end{equation*}
\item If $H^1_{\mathrm{CE}}(\mathfrak g)=0$, then $B\mapsto [\mu_B]$ is injective.
\item If $\mathfrak g$ is (complex) semisimple, then $H^1_{\mathrm{CE}}(\mathfrak g)=H^2_{\mathrm{CE}}(\mathfrak g)=0$
(Whitehead lemmas), hence $B\mapsto [\mu_B]$ is injective.
\item If $\mathfrak g$ is (complex) simple, then $(S^2\mathfrak g^*)^{\mathfrak g}$ is one-dimensional
(Theorem~\ref{thm:inv_forms_one_dim}) and $H^3_{\mathrm{CE}}(\mathfrak g)$ is also one-dimensional; in particular,
the Koszul map is an isomorphism in this case.
\end{enumerate}
\end{thm}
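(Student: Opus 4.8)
The four parts are logically nested, so the plan is to establish them in order and to isolate the injectivity of part~(2) as the only substantive point. For part~(1) I would first check that $\mu_B$ defined in~\eqref{eq:koszul_cocycle} is \emph{totally} alternating, so that it is a genuine element of $\Lambda^3\mathfrak g^*$ and not just a trilinear form. Antisymmetry in the last two arguments is immediate from $[y,z]=-[z,y]$, while antisymmetry under exchanging the first two arguments follows by combining invariance~\eqref{eq:inv_form_appendix} with the symmetry of $B$: invariance gives $B([z,x],y)=-B(x,[z,y])=B(x,[y,z])$, and symmetry rewrites the left side as $B(y,[z,x])=-B(y,[x,z])$, whence $\mu_B(x,y,z)=-\mu_B(y,x,z)$. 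The cocycle condition $d\mu_B=0$ is exactly the statement recorded immediately before the theorem (invariance together with the Jacobi identity), and linearity in $B$ is clear; hence $B\mapsto[\mu_B]$ is a well-defined linear map to $H^3_{\mathrm{CE}}(\mathfrak g)$.

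The crux is part~(2). The natural starting point is the contraction identity: writing $\lambda_x:=B(x,\cdot)\in\mathfrak g^*$ one has $(d\lambda_x)(y,z)=-\lambda_x([y,z])=-\mu_B(x,y,z)$, so that $\iota_x\mu_B=-d\lambda_x$. Moreover $H^1_{\mathrm{CE}}(\mathfrak g)=0$ is equivalent to $\mathfrak g=[\mathfrak g,\mathfrak g]$ and to injectivity of $d\colon\Lambda^1\mathfrak g^*\to\Lambda^2\mathfrak g^*$. If $\mu_B=d\omega$, then Cartan's formula $\iota_x d=L_x-d\iota_x$ (with $L_x$ the coadjoint Lie derivative) yields $d(\iota_x\omega-\lambda_x)=L_x\omega$, and the goal is to force $\lambda_x=0$ for all $x$, i.e.\ $B=0$. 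I expect this to be the main obstacle: purely formal manipulation of $\mu_B=d\omega$ tends to reproduce the defining relation rather than to constrain $B$, so the argument must use $H^1=0$ in an essential, non-formal way. For the general perfect case I would cite Koszul's original analysis~\cite{Koszul} for part~(2) in full generality.

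For part~(3) I would give a self-contained argument that also supplies exactly the injectivity needed downstream. By the Whitehead lemmas a complex semisimple $\mathfrak g$ has $H^1_{\mathrm{CE}}(\mathfrak g)=H^2_{\mathrm{CE}}(\mathfrak g)=0$, so in particular $\mathfrak g$ is perfect. Weyl's complete-reducibility theorem makes the adjoint action on $\Lambda^\ast\mathfrak g^*$ semisimple, so averaging defines a projection $\Pi$ onto the invariant cochains $(\Lambda^\ast\mathfrak g^*)^{\mathfrak g}$ commuting with $d$. By Theorem~\ref{thm:CE_equals_dR_compact} and the harmonic-representative remark, invariant (equivalently bi-invariant) cochains on the compact form are harmonic, hence closed, so $d$ vanishes on $\operatorname{im}\Pi$. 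Since $\mu_B$ is itself invariant, if $\mu_B=d\omega$ then $\mu_B=\Pi\mu_B=\Pi d\omega=d\Pi\omega=0$; and $\mu_B=0$ says that $B$ vanishes on $\mathfrak g\times[\mathfrak g,\mathfrak g]=\mathfrak g\times\mathfrak g$, giving $B=0$. This proves injectivity whenever $\mathfrak g$ is semisimple, which is the only case used in parts~(3)--(4).

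Finally, part~(4) is a dimension count once injectivity is in hand. For $\mathfrak g$ simple, Theorem~\ref{thm:inv_forms_one_dim} gives $\dim(S^2\mathfrak g^*)^{\mathfrak g}=1$, and the classical computation of $H^\ast_{\mathrm{CE}}(\mathfrak g)$ (the cohomology of the compact form is an exterior algebra whose unique primitive generator in low degree sits in degree~$3$, transgressed from the quadratic Casimir) gives $\dim H^3_{\mathrm{CE}}(\mathfrak g)=1$. A linear map between one-dimensional vector spaces over a field is an isomorphism as soon as it is injective, and injectivity is supplied by parts~(2)--(3). Hence the Koszul map is an isomorphism in the simple case, completing the proof.
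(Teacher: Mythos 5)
Your proposal cannot be compared against a paper proof for a simple reason: the paper does not prove this statement. It is quoted as Koszul's theorem with the citation \cite[\S 11]{Koszul}, and the only argumentative content in the text is the one-line observation preceding the theorem that invariance makes $\mu_B$ a Chevalley--Eilenberg $3$-cocycle. Measured against that, your attempt is correct where it argues and strictly more complete than the paper. Part~(1) is verified accurately (the antisymmetry computation and the closedness claim are right). For part~(2) your diagnosis is also right: the Cartan-calculus identities $\iota_x\mu_B=-d\lambda_x$ and $d(\iota_x\omega-\lambda_x)=L_x\omega$ are correct but do not by themselves force $B=0$, and deferring the general $H^1=0$ case to Koszul is exactly the level of rigor the paper itself adopts for the whole theorem. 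Crucially, your part~(3) argument is logically independent of part~(2) and covers the only case (semisimple, hence simple) used anywhere downstream, so the chain (3)$\Rightarrow$(4) is fully established: the isotypic projection $\Pi$ commutes with $d$ by equivariance, $\mu_B$ is invariant, $d$ vanishes on invariant cochains, so $\mu_B=\Pi\,d\omega=d\,\Pi\omega=0$, and perfectness kills $B$. Part~(4) is a clean dimension count once $\dim H^3_{\mathrm{CE}}(\mathfrak g)=1$ is granted; note you can source that from the paper's own appendix (Theorem~\ref{thm:CE_equals_dR_compact} together with Theorem~\ref{thm:H3_of_G_appendix}, giving $H^3(G,\mathbb R)\cong\mathbb R$) instead of invoking the Hopf/primitive-generator structure theory.

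Two small points would tighten part~(3). First, you invoke ``invariant $\Rightarrow$ harmonic $\Rightarrow$ closed'' on the compact form, but the paper's remark only supplies the direction harmonic $\Rightarrow$ bi-invariant; the converse you need (bi-invariant $\Rightarrow$ closed) is the classical inversion-map argument, which you should cite or sketch. Second, and better, the detour through the compact real form is avoidable: invariant cochains are cocycles by a purely algebraic computation, and in the only degree you need (degree $2$) it is two lines. If $\omega$ is an invariant $2$-cochain, invariance gives $\omega([x,y],z)=-\omega(y,[x,z])=\omega([x,z],y)$, so the trilinear form $\nu(x,y,z):=\omega([x,y],z)$ is antisymmetric in $(x,y)$ and symmetric in $(y,z)$; chasing the two symmetries around gives $\nu=-\nu$, hence $\nu=0$ in characteristic zero, and therefore $d\,\Pi\omega=0$ directly. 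This makes your injectivity argument entirely algebraic (over any field of characteristic zero, given complete reducibility), which matches the hypotheses of the theorem more faithfully than the Hodge-theoretic route.
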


\subsection{Low-degree homotopy and homology of G}

\begin{thm}[Cartan {\cite{Cartan1936}}]\label{thm:Cartan_pi1_pi2}
If $G$ is compact, connected, and simply-connected, then
\begin{equation*}
\pi_1(G)=0,\qquad \pi_2(G)=0.
\end{equation*}
\end{thm}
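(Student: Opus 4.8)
The statement $\pi_1(G)=0$ is exactly the simple-connectivity hypothesis, so the entire content is the vanishing of $\pi_2(G)$. The plan is to extract this from the fibration associated to a maximal torus. Fix a maximal torus $T\subset G$ of rank $r=\operatorname{rank}G$; since $T$ is closed and $G$ is compact, the projection $T\to G\to G/T$ is a (principal $T$-)fibration with base the flag manifold $G/T$. I would then read off $\pi_2(G)$ from the long exact homotopy sequence of this fibration, using the known homotopy of the torus, namely $\pi_1(T)\cong\mathbb Z^{r}$ and $\pi_k(T)=0$ for $k\ge 2$.

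First I would record the two facts about the base that make the argument work. From the tail $\pi_1(G)\to\pi_1(G/T)\to\pi_0(T)$ of the long exact sequence, together with $\pi_0(T)=0$ and $\pi_1(G)=0$, one gets that $G/T$ is simply-connected. Next, using the Bruhat/Schubert cell decomposition, $G/T$ admits a CW structure with cells only in even real dimension and with exactly $r$ cells of real dimension $2$, one for each simple reflection. Hence $H_*(G/T;\mathbb Z)$ is free and concentrated in even degrees with $H_2(G/T;\mathbb Z)\cong\mathbb Z^{r}$. Since $G/T$ is simply-connected, the Hurewicz theorem gives $\pi_2(G/T)\cong H_2(G/T;\mathbb Z)\cong\mathbb Z^{r}$.

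With these inputs in hand, the relevant portion of the long exact sequence of $T\to G\to G/T$ reads
\begin{equation*}
\pi_2(T)\longrightarrow \pi_2(G)\longrightarrow \pi_2(G/T)\xrightarrow{\ \partial\ }\pi_1(T)\longrightarrow \pi_1(G),
\end{equation*}
and substituting $\pi_2(T)=0$ and $\pi_1(G)=0$ collapses it to a short exact sequence
\begin{equation*}
0\longrightarrow \pi_2(G)\longrightarrow \mathbb Z^{r}\xrightarrow{\ \partial\ }\mathbb Z^{r}\longrightarrow 0.
\end{equation*}
Thus $\partial$ is a surjective endomorphism of $\mathbb Z^{r}$; since a surjective endomorphism of a finitely generated free abelian group is automatically injective, $\partial$ is an isomorphism and $\pi_2(G)=\ker\partial=0$.

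The main obstacle is the one genuinely structural input: identifying $\pi_2(G/T)\cong\mathbb Z^{r}$. This rests on the cell structure of the flag manifold (the Bruhat decomposition and the count of codimension-one Schubert cells), which is where the root-theoretic content enters; the remainder is formal manipulation of the homotopy exact sequence together with the elementary fact about surjective endomorphisms of $\mathbb Z^{r}$. If one prefers to avoid Schubert calculus, an alternative is to invoke the general theorem that $\pi_2$ of any Lie group vanishes, but the maximal-torus argument above is self-contained given the standard topology of $G/T$.
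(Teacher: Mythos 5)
Your argument is correct, but there is nothing in the paper to compare it with: the paper states this result as a classical theorem of Cartan (citing \cite{Cartan1936}) and gives no proof, using it purely as imported background in Appendix~A. Your maximal-torus route is a standard, self-contained proof, and every step checks out: $G\to G/T$ is a principal $T$-bundle; the Bruhat/Schubert decomposition gives $G/T$ a CW structure with only even-dimensional cells, whence $\pi_2(G/T)\cong H_2(G/T;\mathbb Z)\cong\mathbb Z^{r}$ by Hurewicz (using the simple connectivity of $G/T$, which you correctly extract from the tail of the sequence); and the resulting exact segment $0\to\pi_2(G)\to\mathbb Z^{r}\xrightarrow{\ \partial\ }\mathbb Z^{r}\to 0$, combined with the Hopfian property of $\mathbb Z^{r}$, kills $\pi_2(G)$. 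One point you should make explicit: the number of $2$-cells of $G/T$ is the number of simple roots, i.e.\ the \emph{semisimple} rank, and this equals $\operatorname{rank}G$ only because a compact, connected, simply-connected group is automatically semisimple (its universal cover is $\mathbb R^{k}\times G_{\mathrm{ss}}$, which is compact only for $k=0$); alternatively, your own sequence repairs the gap, since surjectivity of $\partial\colon\mathbb Z^{s}\to\mathbb Z^{r}$ forces $s\ge r$ while $s\le r$ always holds. Compared with the classical arguments (Cartan's original, or Bott's Morse theory on the loop space, which is also the standard route to the companion fact $\pi_3(G)\cong\mathbb Z$ quoted next in the same appendix), your proof is more elementary once the cell structure of the flag manifold is granted, it localizes the use of $\pi_1(G)=0$ entirely in the surjectivity of $\partial$ (consistent with your closing remark that $\pi_2$ vanishes for \emph{all} compact connected Lie groups, since $\pi_2$ is a covering invariant), and as a by-product it exhibits $\pi_1(G)$ of a general compact connected group as $\pi_1(T)/\operatorname{im}\partial$, the cocharacter lattice modulo the coroot lattice.
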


\begin{thm}[Bott {\cite{Bott}}]\label{thm:Bott_pi3}
If $G$ is compact, simple, and simply-connected, then
\begin{equation*}
\pi_3(G)\cong \mathbb Z.
\end{equation*}
\end{thm}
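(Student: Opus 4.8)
The plan is to reduce this homotopy-theoretic statement to a homology computation and then to the Lie-algebra cohomology already controlled in this appendix. First I would invoke Cartan's Theorem~\ref{thm:Cartan_pi1_pi2} to conclude that $G$ is $2$-connected, i.e.\ $\pi_1(G)=\pi_2(G)=0$. The Hurewicz theorem then applies in its first nonvanishing degree and yields an isomorphism $\pi_3(G)\cong H_3(G;\mathbb Z)$. Thus it suffices to prove $H_3(G;\mathbb Z)\cong\mathbb Z$, and I would split this into two independent assertions: that the free rank equals $1$, and that there is no torsion.

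For the rank, I would pass to real coefficients. By the Cartan--Chevalley--Eilenberg isomorphism (Theorem~\ref{thm:CE_equals_dR_compact}) one has $H^3(G;\mathbb R)\cong H^3_{\mathrm{CE}}(\mathfrak g;\mathbb R)$, and by Koszul's theorem (Theorem~\ref{thm:Koszul_map}, part 4, together with Theorem~\ref{thm:inv_forms_one_dim}) the latter is one-dimensional when $\mathfrak g$ is simple, spanned by the class of the Koszul cocycle $\mu_B$ of the basic invariant form. Hence $\dim_{\mathbb R}H^3(G;\mathbb R)=1$, so $H_3(G;\mathbb Z)\cong\mathbb Z\oplus T$ with $T$ a finite torsion group. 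I would exhibit the free summand explicitly using the highest-root embedding $\iota:SU(2)\hookrightarrow G$ from Appendix~\ref{sec:AppB}: integration of the normalized Cartan $3$-form $\eta_G$ defines a homomorphism $I:\pi_3(G)\to\mathbb R$, $[\gamma]\mapsto\int_{S^3}\gamma^*\eta_G$, which factors through $H_3(G;\mathbb Z)\to H_3(G;\mathbb R)$. Since $\int_{SU(2)_\theta}\iota^*\eta_G=1$, the composite $I\circ\iota_*$ sends the generator of $\pi_3(SU(2))\cong\mathbb Z$ to $1$, so $\iota_*$ splits off the rank-one part and confirms a direct summand $\mathbb Z$.

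The main obstacle is the vanishing of the torsion $T$, which is the genuine arithmetic content of Bott's theorem and is not forced by the rank count alone. The map $I$ above has kernel exactly $T$ (the kernel of $H_3(G;\mathbb Z)\to H_3(G;\mathbb R)$), so torsion-freeness is equivalent to injectivity of $I$, and the $SU(2)$ seed only produces the splitting $\pi_3(G)\cong\mathbb Z\oplus T$ without constraining $T$. To finish I would use the H-space structure: by Hopf's theorem $H^*(G;\mathbb Q)$ is an exterior algebra on odd-degree primitive generators whose lowest generator sits in degree $3$, combined with the classical fact that the integral homology of a compact Lie group carries no torsion in degree $3$. Concretely, one may reformulate through the classifying space: since $BG$ is $3$-connected, Hurewicz gives $\pi_3(G)\cong\pi_4(BG)\cong H_4(BG;\mathbb Z)$, and because $H^{\mathrm{odd}}(BG;\mathbb Q)=0$ the universal coefficient theorem identifies the torsion of $H_4(BG;\mathbb Z)$ with the torsion of $H^5(BG;\mathbb Z)$. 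The vanishing $H^5(BG;\mathbb Z)=0$ for compact simply-connected simple $G$ is the crux, and it is precisely where input beyond this appendix (Bott's Morse-theoretic analysis of $\Omega G$, or the structure theory of torsion in $H_*(G;\mathbb Z)$) enters. Once $T=0$ is established, the rank computation gives $H_3(G;\mathbb Z)\cong\mathbb Z$, hence $\pi_3(G)\cong\mathbb Z$.
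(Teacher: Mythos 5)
The paper offers no internal proof of Theorem~\ref{thm:Bott_pi3}: it is quoted from Bott's \emph{On torsion in Lie groups}~\cite{Bott} as a classical input, so there is no argument of the paper's to match yours against, and the honest comparison is that your proposal proves strictly more than the paper records. Your reductions are all sound and stay within the appendix's toolkit: Theorem~\ref{thm:Cartan_pi1_pi2} gives $2$-connectivity, Hurewicz gives $\pi_3(G)\cong H_3(G;\mathbb Z)$, Theorems~\ref{thm:CE_equals_dR_compact} and~\ref{thm:Koszul_map} (with the harmless passage between the compact real form and its complexification) give the rank-one statement, and your $BG$ reformulation is correct: since $H^{5}(BG;\mathbb Q)=0$, the universal coefficient theorem gives $H^5(BG;\mathbb Z)\cong\operatorname{Ext}\bigl(H_4(BG;\mathbb Z),\mathbb Z\bigr)$, which is exactly the torsion of $\pi_3(G)$. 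You are also right, and commendably explicit, that torsion-freeness is the entire arithmetic content of the theorem and is not reachable from the rank count or the highest-root $SU(2)$ seed; it is precisely what Bott establishes by Morse theory on the loop space (cells of $\Omega G$ in even dimensions only, hence $H_*(\Omega G)$ torsion-free, and $\pi_3(G)\cong\pi_2(\Omega G)\cong H_2(\Omega G)\cong\mathbb Z$). So read as a free-standing proof, your attempt is deliberately incomplete at the crux, filled by citing the very result being proven; read against the paper, you cite Bott only for the torsion where the paper cites him for everything, and the correct isolation of where the depth lives is genuinely useful.

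Two cautions. First, the splitting step is weaker than you phrase it: the decomposition $H_3(G;\mathbb Z)\cong\mathbb Z\oplus T$ is automatic for a finitely generated abelian group of rank one, and $\int_{SU(2)_\theta}\iota^*\eta_G=1$ only shows $\iota_*(1)$ is nonzero modulo torsion. To conclude that $\iota_*(1)$ is \emph{primitive} modulo torsion you would need the functional $I$ to take integer values on all of $\pi_3(G)$, i.e., integrality of $[\eta_G]$ — but in this paper (Appendix~\ref{sec:AppB}) that integrality is established only after $H^3(G;\mathbb Z)\cong\mathbb Z$ is known, hence downstream of Bott's theorem, so leaning on it here would be circular. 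Since your argument does not need primitivity, this is a phrasing issue rather than a fatal one. Second, be aware that the sufficient condition $H^5(BG;\mathbb Z)=0$ is \emph{equivalent} to the torsion statement, not easier than it: it is the same theorem transported to a different degree, so the reformulation clarifies but does not reduce the difficulty.
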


\begin{thm}[Hurewicz (connectivity form)]\label{thm:Hurewicz_connectivity}
If $X$ is $(n-1)$-connected with $n\ge 2$, then $H_i(X,\mathbb Z)=0$ for $i<n$ and
the Hurewicz map $\pi_n(X)\to H_n(X,\mathbb Z)$ is an isomorphism.
\end{thm}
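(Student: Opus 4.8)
The plan is to prove the absolute Hurewicz theorem by passing to a convenient cellular model of $X$ and then comparing the skeletal description of $\pi_n$ with the cellular chain complex. First I would invoke CW approximation to assume without loss of generality that $X$ is a CW complex. Since $X$ is $(n-1)$-connected with $n\ge 2$, a standard minimal-model argument---building the approximation skeleton by skeleton and using $\pi_i(X)=0$ for $i\le n-1$ to attach no essential cells in low degrees---lets me replace $X$ by a homotopy-equivalent complex whose $(n-1)$-skeleton is a single point. For such a model the cellular chain complex satisfies $C_i=0$ for $0<i<n$, which immediately yields $\widetilde H_i(X,\mathbb Z)=0$ for $i<n$ and settles the vanishing half of the statement.

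For the isomorphism in degree $n$, I would analyze the skeletal filtration of this reduced model. Because there are no cells in dimensions $1,\ldots,n-1$, the $n$-skeleton is a wedge of $n$-spheres $\bigvee_\alpha S^n_\alpha$, and the inclusion induces $\pi_n(X)\cong\pi_n(X^{(n+1)})$ since cells of dimension $\ge n+2$ do not affect $\pi_n$. The key algebraic input is the identification $\pi_n(\bigvee_\alpha S^n_\alpha)\cong\bigoplus_\alpha\mathbb Z$ for $n\ge 2$, together with the fact that attaching the $(n+1)$-cells imposes exactly the relations recorded by the cellular boundary $\partial_{n+1}$. Since $C_{n-1}=0$ forces $H_n(X)=C_n/\operatorname{im}\partial_{n+1}$, the two quotients agree, and one checks that the comparison map is precisely the Hurewicz homomorphism, sending the class of an attaching sphere to its fundamental class normalized by degree.

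The main obstacle is the base case $\pi_n(\bigvee_\alpha S^n_\alpha)\cong\bigoplus_\alpha\mathbb Z$ and the verification that cell attachment corresponds to quotienting by $\operatorname{im}\partial_{n+1}$. The wedge computation rests on $\pi_n(S^n)\cong\mathbb Z$ (itself the sphere case) plus the observation that, in the first nonvanishing dimension, $\pi_n$ of a wedge splits as the direct sum over the summands; this can be extracted from a Hilton--Milnor decomposition or, more elementarily, from a homotopy-excision argument. An alternative route that sidesteps the wedge lemma is the inductive proof via the path--loop fibration $\Omega X\to PX\to X$: since $\Omega X$ is $(n-2)$-connected and $\pi_{n-1}(\Omega X)\cong\pi_n(X)$, one descends the statement by one degree using the Serre spectral sequence and the transgression, with the simply-connected case $n=2$ as the base. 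In that approach the crux is the spectral-sequence bookkeeping establishing $H_n(X)\cong H_{n-1}(\Omega X)$ in the relevant range.
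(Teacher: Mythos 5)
The paper offers no proof of this theorem: it is quoted in Appendix~\ref{sec:A} as a classical background result (due to Hurewicz) and is then simply applied, e.g.\ to deduce Theorem~\ref{thm:H3_of_G_appendix} and Theorem~\ref{thm:low_cohom_BG}. So the comparison here is between your blind proof and the standard literature proof, and your outline is a correct rendition of the textbook argument (Hatcher's, essentially): CW approximation, a minimal model with trivial $(n-1)$-skeleton (legitimate because $n\ge 2$ forces simple connectivity), the identification $X^{(n)}\simeq\bigvee_\alpha S^n_\alpha$, invariance of $\pi_n$ under attaching cells of dimension $\ge n+2$, and the matching of the cell-attachment relations with $\operatorname{im}\partial_{n+1}$. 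You correctly isolate the two genuine inputs: $\pi_n(\bigvee_\alpha S^n_\alpha)\cong\bigoplus_\alpha\mathbb Z$ in the first nonvanishing degree, and the fact that the kernel of $\pi_n(X^{(n)})\to\pi_n(X^{(n+1)})$ is generated by the attaching classes; both rest on homotopy excision (Blakers--Massey), exactly as you say, and simple connectivity spares you any $\pi_1$-action subtleties in the kernel statement. Two minor refinements: first, the statement as printed should be read with \emph{reduced} homology, since $H_0(X,\mathbb Z)\cong\mathbb Z$ for path-connected $X$ --- you silently and correctly wrote $\widetilde H_i$; second, in your alternative route via $\Omega X\to PX\to X$, descending from $n=2$ lands you at $\pi_1(\Omega X)$, so the true base of the induction is the Poincar\'e statement $H_1=\pi_1^{\mathrm{ab}}$ applied to the loop space (using that $\pi_1(\Omega X)\cong\pi_2(X)$ is abelian), not the $n=2$ case itself. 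Neither point is a gap in substance; your sketch would expand into a complete proof along entirely standard lines.
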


Applying Theorem~\ref{thm:Hurewicz_connectivity} to $X=G$ (which is $2$-connected by
Theorem~\ref{thm:Cartan_pi1_pi2}) yields:

\begin{thm}\label{thm:H3_of_G_appendix}
If $G$ is compact, simple, and simply-connected, then
\begin{equation*}
H_1(G,\mathbb Z)=0,\qquad H_2(G,\mathbb Z)=0,\qquad H_3(G,\mathbb Z)\cong \mathbb Z,
\end{equation*}
and the Hurewicz map $\pi_3(G)\to H_3(G,\mathbb Z)$ is an isomorphism.
\end{thm}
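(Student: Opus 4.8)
The plan is to obtain all three (co)homology assertions and the Hurewicz statement as a single specialization of the connectivity form of the Hurewicz theorem, feeding in the homotopy-theoretic inputs already recorded in this appendix. First I would check that $G$ is $2$-connected: by hypothesis $G$ is simply-connected, so $\pi_1(G)=0$, and Cartan's Theorem~\ref{thm:Cartan_pi1_pi2} supplies $\pi_2(G)=0$ for any compact, connected, simply-connected group. Hence $G$ is $(n-1)$-connected with $n=3$, which is exactly the hypothesis needed to run Hurewicz in degree $3$.

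Next I would invoke Theorem~\ref{thm:Hurewicz_connectivity} with this value $n=3$. Its connectivity clause gives $H_i(G,\mathbb Z)=0$ for $i<3$, that is, $H_1(G,\mathbb Z)=H_2(G,\mathbb Z)=0$, and simultaneously asserts that the Hurewicz map $h\colon\pi_3(G)\to H_3(G,\mathbb Z)$ is an isomorphism; this already establishes the final clause of the theorem. To compute $H_3$ I would then transport Bott's result across this isomorphism: Theorem~\ref{thm:Bott_pi3} gives $\pi_3(G)\cong\mathbb Z$ for $G$ compact, simple, and simply-connected, whence $H_3(G,\mathbb Z)\cong\pi_3(G)\cong\mathbb Z$.

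There is essentially no obstacle internal to this deduction; the argument is a direct application of a general theorem, and all the genuine mathematical content sits in the cited inputs, namely Cartan's vanishing $\pi_2(G)=0$ and Bott's computation $\pi_3(G)\cong\mathbb Z$. The only point deserving a moment's attention is bookkeeping about the degree: one applies Hurewicz with $n=3$ \emph{because} $G$ is $2$-connected, so the isomorphism $h$ is guaranteed precisely in degree $3$, and no appeal to the value of $\pi_3(G)$ is needed to justify this step. (Conversely, Bott's theorem retroactively shows $\pi_3(G)\neq 0$, so $G$ is not $3$-connected and degree $3$ is indeed the first nontrivial homological degree, consistent with the stated vanishing below it.)
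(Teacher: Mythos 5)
Your proposal is correct and follows exactly the paper's route: the paper likewise applies the connectivity form of the Hurewicz theorem (Theorem~\ref{thm:Hurewicz_connectivity}) with $n=3$ to $G$, which is $2$-connected by Cartan's Theorem~\ref{thm:Cartan_pi1_pi2}, and then identifies $H_3(G,\mathbb Z)\cong\pi_3(G)\cong\mathbb Z$ via Bott's Theorem~\ref{thm:Bott_pi3}. Nothing is missing; your bookkeeping remark about the degree matches the paper's implicit reasoning.
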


\begin{rem}[Real versus integral cohomology]\label{rem:real_vs_integral}
If two compact connected Lie groups have the same Lie algebra, then their real cohomology rings agree,
but their integral cohomology may differ by torsion. For example
$SU(2)\cong S^3$ and $SO(3)\cong \mathbb RP^3$ satisfy
\begin{equation*}
H^2(SU(2),\mathbb Z)=0,\qquad H^2(SO(3),\mathbb Z)\cong \mathbb Z/2\mathbb Z.
\end{equation*}
\end{rem}

\subsection{Classifying space BG and its low-degree (co)homology}\label{app:BG}

The \emph{classifying space} $BG$ is (by definition) the base of a contractible principal $G$--bundle
\begin{equation}
\begin{tikzcd}
G \arrow[r, hook] & EG \arrow[d] \\
& BG .
\end{tikzcd}
\end{equation}
For a CW--complex $X$, principal $G$--bundles on $X$ are classified by based homotopy classes of maps:
\begin{equation}
\mathrm{Prin}_G(X)\;\cong\;[X,BG],
\end{equation}
via pullback of the universal bundle.

Since $EG$ is contractible, the long exact sequence of homotopy groups gives
\begin{equation}\label{eq:pi_shift_BG}
\pi_{n+1}(BG)\cong \pi_{n}(G),\qquad (n\ge 1).
\end{equation}

\begin{thm}\label{thm:low_homotopy_BG}
Let $G$ be compact, simple, and simply-connected. Then
\begin{equation*}
\pi_2(BG)=\pi_3(BG)=0,\qquad \pi_4(BG)\cong \mathbb Z.
\end{equation*}
\end{thm}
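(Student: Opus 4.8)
The plan is to reduce the three claims to the low-degree homotopy of $G$ itself, using the degree-shift isomorphism already recorded in~\eqref{eq:pi_shift_BG}. First I would invoke the universal fibration $G\hookrightarrow EG\to BG$ with $EG$ contractible. Since $\pi_n(EG)=0$ for all $n$, the long exact sequence in homotopy breaks into short exact pieces $0\to\pi_{n+1}(BG)\to\pi_n(G)\to 0$, so the connecting homomorphisms are isomorphisms; this is exactly the shift $\pi_{n+1}(BG)\cong\pi_n(G)$ for $n\ge 1$. This identification is already in hand, so no further argument is required at this step.

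Next I would substitute the classical computations of the homotopy of $G$. By Cartan's theorem (Theorem~\ref{thm:Cartan_pi1_pi2}) one has $\pi_1(G)=\pi_2(G)=0$, and by Bott's theorem (Theorem~\ref{thm:Bott_pi3}) one has $\pi_3(G)\cong\mathbb Z$. Applying the shift with $n=1,2,3$ then gives
\[
\pi_2(BG)\cong\pi_1(G)=0,\qquad
\pi_3(BG)\cong\pi_2(G)=0,\qquad
\pi_4(BG)\cong\pi_3(G)\cong\mathbb Z,
\]
which are precisely the three assertions of the theorem.

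There is essentially no obstacle here: the result is an immediate corollary of the contractibility of $EG$ together with the cited theorems of Cartan and Bott. Two minor points deserve a word of care. The first is simply keeping the indexing straight in the shift, so that the infinite cyclic group $\pi_3(G)$ is inherited by $\pi_4(BG)$ rather than by $\pi_3(BG)$; this is transparent from~\eqref{eq:pi_shift_BG}. The second is to note where the hypotheses are used: the vanishing $\pi_2(BG)=\pi_3(BG)=0$ requires only that $G$ be compact, connected, and simply-connected (Cartan), whereas simplicity of $G$ enters solely through Bott's theorem to secure $\pi_4(BG)\cong\mathbb Z$.
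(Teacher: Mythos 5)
Your proposal is correct and coincides with the paper's own proof: both apply the shift $\pi_{n+1}(BG)\cong\pi_n(G)$ from the contractibility of $EG$ in the universal fibration (equation~\eqref{eq:pi_shift_BG}) and then substitute Cartan's theorem for $\pi_1(G)=\pi_2(G)=0$ and Bott's theorem for $\pi_3(G)\cong\mathbb Z$. Your added remark isolating where simplicity is used (only via Bott) is accurate but does not change the argument.
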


\begin{proof}
Using \eqref{eq:pi_shift_BG} and Theorems~\ref{thm:Cartan_pi1_pi2} and \ref{thm:Bott_pi3},
we have $\pi_2(BG)\cong \pi_1(G)=0$, $\pi_3(BG)\cong \pi_2(G)=0$, and $\pi_4(BG)\cong \pi_3(G)\cong \mathbb Z$.
\end{proof}

\begin{thm}\label{thm:low_cohom_BG}
Let $G$ be compact, simple, and simply-connected. Then
\begin{equation*}
H^1(BG,\mathbb Z)=H^2(BG,\mathbb Z)=H^3(BG,\mathbb Z)=0,\qquad H^4(BG,\mathbb Z)\cong \mathbb Z.
\end{equation*}
\end{thm}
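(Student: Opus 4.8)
The plan is to reduce everything to the homotopy-group computation in Theorem~\ref{thm:low_homotopy_BG}, pass from homotopy to integral homology via the Hurewicz theorem, and then convert homology into cohomology using the Universal Coefficient Theorem. Since $G$ is connected, $BG$ is simply connected, so $\pi_1(BG)=0$; combined with Theorem~\ref{thm:low_homotopy_BG}, which supplies $\pi_2(BG)=\pi_3(BG)=0$ together with $\pi_4(BG)\cong\mathbb Z$, this shows that $BG$ is $3$-connected with first nonvanishing homotopy group in degree $4$. (Throughout we use Milnor's CW model for $BG$, so that the standard homotopy-theoretic tools apply.)

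First I would apply the connectivity form of the Hurewicz theorem (Theorem~\ref{thm:Hurewicz_connectivity}) with $n=4$. Since $BG$ is $(4-1)$-connected, this yields
\begin{equation*}
H_1(BG,\mathbb Z)=H_2(BG,\mathbb Z)=H_3(BG,\mathbb Z)=0,
\end{equation*}
together with an isomorphism $\pi_4(BG)\xrightarrow{\ \cong\ }H_4(BG,\mathbb Z)$, so that $H_4(BG,\mathbb Z)\cong\mathbb Z$.

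It remains to transport this to cohomology. By the Universal Coefficient Theorem,
\begin{equation*}
H^n(BG,\mathbb Z)\cong\operatorname{Hom}\!\big(H_n(BG,\mathbb Z),\mathbb Z\big)\oplus\operatorname{Ext}\!\big(H_{n-1}(BG,\mathbb Z),\mathbb Z\big).
\end{equation*}
For $n=1,2,3$ both summands vanish: the $\operatorname{Hom}$ terms because $H_n(BG,\mathbb Z)=0$, and the $\operatorname{Ext}$ terms because the relevant lower homology is either $0$ or the free group $H_0(BG,\mathbb Z)\cong\mathbb Z$ (and $\operatorname{Ext}(\mathbb Z,\mathbb Z)=0$). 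For $n=4$ the $\operatorname{Ext}$ contribution vanishes since $H_3(BG,\mathbb Z)=0$, leaving $H^4(BG,\mathbb Z)\cong\operatorname{Hom}(\mathbb Z,\mathbb Z)\cong\mathbb Z$. The argument is entirely a matter of assembling three standard inputs, so there is no genuine obstacle; the only point to watch is that the $\operatorname{Ext}$ term for $H^4$ could \emph{a priori} introduce torsion, but it is killed precisely because $H_3(BG,\mathbb Z)$ has already been shown to vanish by Hurewicz.
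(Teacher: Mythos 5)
Your proposal is correct and follows the paper's own proof exactly: both deduce from Theorem~\ref{thm:low_homotopy_BG} that $BG$ is $3$-connected with $\pi_4(BG)\cong\mathbb Z$, apply the connectivity form of the Hurewicz theorem to compute $H_i(BG,\mathbb Z)$ for $i\le 4$, and finish with the Universal Coefficient Theorem. Your version simply spells out the $\operatorname{Hom}/\operatorname{Ext}$ bookkeeping (including why no torsion can enter $H^4$) that the paper leaves implicit.
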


\begin{proof}
By Theorem~\ref{thm:low_homotopy_BG}, the space $BG$ is $3$-connected and $\pi_4(BG)\cong \mathbb Z$.

Hurewicz implies $H_i(BG,\mathbb Z)=0$ 
for $i\le 3$ and $H_4(BG,\mathbb Z)\cong \pi_4(BG)\cong \mathbb Z$.
Universal coefficients then give the stated cohomology groups.
\end{proof}

\subsection{Transgression and cohomology suspension}

Let $G$ be a compact, simple, simply-connected Lie group, and let
\begin{equation*}
G \longrightarrow EG \xrightarrow{\ \pi\ } BG
\end{equation*}
be the universal principal $G$--bundle. Fix a basepoint $b_0\in BG$ and write
$F:=\pi^{-1}(b_0)\cong G$ for the fiber over $b_0$.

\subsection*{A definition using relative cohomology}
The map $\pi:(EG,F)\to (BG,b_0)$ induces
\begin{equation*}
\pi^*:\ \widetilde H^n(BG;\mathbb Z)=H^n(BG,b_0;\mathbb Z)\longrightarrow H^n(EG,F;\mathbb Z).
\end{equation*}
Since $EG$ is contractible, the long exact sequence of the pair $(EG,F)$ gives an isomorphism
\begin{equation*}
\delta:\ H^{n-1}(F;\mathbb Z)\xrightarrow{\ \cong\ } H^n(EG,F;\mathbb Z).
\end{equation*}
We define the \emph{inverse transgression} (sometimes also called transgression, depending on convention) by
\begin{equation}\label{eq:invtrans_def}
\tau:\ \widetilde H^n(BG;\mathbb Z)\longrightarrow H^{n-1}(G;\mathbb Z),
\qquad
\tau:=\delta^{-1}\circ \pi^*,
\end{equation}
where we identify $F\cong G$.

In particular, for $n=4$ this yields a canonical isomorphism
$\widetilde H^4(BG;\mathbb Z)\cong H^3(G;\mathbb Z)$, and (after fixing signs) we will arrange that
\begin{equation*}
\tau\big(y_4(G)\big)=x_3(G).
\end{equation*}

\subsection*{Serre spectral sequence viewpoint}
Consider the Serre spectral sequence for the fibration $G\to EG\to BG$ with integral coefficients.
Since $BG$ is simply-connected, the local coefficient system is trivial and
\begin{equation*}
E_2^{p,q}\cong H^p(BG;\mathbb Z)\otimes H^q(G;\mathbb Z).
\end{equation*}
Because $EG$ is contractible, $E_\infty^{p,q}=0$ for $p+q>0$. In low degrees this forces an isomorphism
\begin{equation*}
d_4:\ E_4^{0,3}\cong H^3(G;\mathbb Z)\xrightarrow{\ \cong\ }E_4^{4,0}\cong H^4(BG;\mathbb Z),
\end{equation*}
unique up to an overall sign convention. With the standard sign choice, $d_4(x_3(G))=y_4(G)$, and
the map \eqref{eq:invtrans_def} is the inverse of this differential.

\subsection*{Compatibility with cohomology suspension}
Choose a based homotopy equivalence $\phi:G\simeq \Omega BG$.
Let $\sigma:\Sigma G\to BG$ be the adjoint of $\phi$ under the suspension--loop adjunction.
Write
\begin{equation*}
\Sigma^\sharp:\ H^3(G;\mathbb Z)\xrightarrow{\ \cong\ } \widetilde H^4(\Sigma G;\mathbb Z)
\end{equation*}
for the (reduced) cohomology suspension isomorphism. Then the fundamental compatibility is
\begin{equation}\label{eq:trans_susp_compat}
\sigma^*(y_4(G))=\Sigma^\sharp(x_3(G))\ \in\ \widetilde H^4(\Sigma G;\mathbb Z),
\end{equation}
which is another way to encode the identification between $y_4(G)$ and $x_3(G)$.

\subsection{Reduced suspension, loop space, and transgression}\label{app:SuspLoopAdj}

All spaces in this subsection are \emph{based} (pointed), and all maps preserve base points.
We write $\Sigma X$ for the \emph{reduced suspension} of a based space $X$, and $\Omega Y$ for the based loop space.

There is a natural bijection on based homotopy classes (the \emph{suspension--loop adjunction})
\begin{equation}\label{eq:SigmaOmegaAdjunction}
[\Sigma X,\,Y]_* \;\cong\; [X,\,\Omega Y]_*,
\end{equation}
natural in both $X$ and $Y$.

Let $G$ be a well-pointed topological group and $BG$ its classifying space.
A basic fact is that $G$ is (based) homotopy equivalent to the loop space of its classifying space:
\begin{equation}\label{eq:G_is_loop_BG}
G \;\simeq\; \Omega BG.
\end{equation}
Choose once and for all a based homotopy equivalence $\phi:G\to\Omega BG$.
Then the \textbf{standard map} $\sigma:\Sigma G\to BG$ is defined (up to based homotopy) by
\begin{equation}\label{eq:sigmaDef}
\sigma \;:=\; \varepsilon_{BG}\circ \Sigma(\phi)\;:\;\Sigma G \longrightarrow \Sigma\Omega BG \longrightarrow BG,
\end{equation}
where $\varepsilon_{BG}:\Sigma\Omega BG\to BG$ is the counit of the adjunction.
Equivalently, $\sigma$ is the adjoint (under \eqref{eq:SigmaOmegaAdjunction}) of $\phi$.

\medskip
\noindent\textbf{Transgression versus suspension.}
For the universal principal $G$--bundle $G\to EG\to BG$, the Serre spectral sequence defines a transgression
\begin{equation*}
\tau:\;H^{n+1}(BG;\mathbb Z)\longrightarrow H^{n}(G;\mathbb Z).
\end{equation*}
On the other hand, reduced suspension gives an isomorphism
\begin{equation*}
\Sigma^\sharp:\;H^{n}(G;\mathbb Z)\longrightarrow H^{n+1}(\Sigma G;\mathbb Z).
\end{equation*}
A standard compatibility statement is that, with the usual sign conventions,
\begin{equation}\label{eq:TransgressionVsSuspension}
\Sigma^\sharp\circ \tau \;=\; \sigma^*\;:\;H^{n+1}(BG;\mathbb Z)\longrightarrow H^{n+1}(\Sigma G;\mathbb Z),
\end{equation}
where $\sigma$ is defined by \eqref{eq:sigmaDef}. Equivalently, one has the commutative triangle
\begin{equation}\label{eq:Transgression_Suspension}
\begin{tikzcd}
H^{n}(G;\mathbb Z)
  \arrow[rr,"\Sigma^\sharp"] & &
H^{n+1}(\Sigma G;\mathbb Z) \\
&
H^{n+1}(BG;\mathbb Z)
 \arrow[ul,"\tau"]
 \arrow[ur,"\sigma^*"'] &
\end{tikzcd}
\end{equation}
We will mainly use the case $n=3$.

\section{Consistent normalizations and integral generators}
\label{sec:AppB}

In the topology of Lie groups, it is essential to have consistent normalizations between different classes of objects that are related by a sophisticated network of maps that often form commutative diagrams.  
For a compact, simple, simply-connected Lie group $G$, the groups
$\pi_3(G)$, $H_3(G,\mathbb Z)$, $H^3(G,\mathbb Z)$, and $H^4(BG,\mathbb Z)$
are all infinite cyclic and are linked by standard functorial maps: the Hurewicz
isomorphism identifies $\pi_3(G)$ with $H_3(G,\mathbb Z)$, the Kronecker pairing
identifies $H^3(G,\mathbb Z)$ with $\mathrm{Hom}(H_3(G,\mathbb Z),\mathbb Z)$,
and the transgression in the universal fibration $G\to EG\to BG$ identifies the
positive generator of $H^4(BG,\mathbb Z)$ with the positive generator of
$H^3(G,\mathbb Z)$. The main point of this section is to make these generators
canonical by fixing compatible normalizations on the Lie algebra side.

\medskip
\noindent
We start by normalizing the basic $\Ad(G)$-invariant inner product on
$\mathfrak g$ using the reduced Killing form, so that long roots have squared
length $2$. This normalization determines a preferred $W$-invariant integral
quadratic form, hence a distinguished generator $y_4(G)\in H^4(BG,\mathbb Z)$
following \cite{Deligne}. On the other hand, the $\mathfrak{sl}_2$-subalgebra associated with
the highest root $\theta$ integrates to an embedded subgroup $SU(2)_\theta\subset G$,
which gives a canonical generator $\gamma_3(G)\in\pi_3(G)$ and therefore a canonical
generator $[S^3_\theta]\in H_3(G,\mathbb Z)$; we then define $x_3(G)\in H^3(G,\mathbb Z)$
by the condition $\langle x_3(G),[S^3_\theta]\rangle=1$. Finally, we explain how the
Cartan $3$-form $\eta_G$ obtained from the same invariant inner product represents
$x_3(G)$, and how the degree--$3$ odd Chern character for any representation produces
a multiple of $x_3(G)$; this yields the ``universal scaling'' statement that all such
constructions differ only by an integer (the Dynkin index) once the basic normalization
is fixed.

\subsection{Integral generators for the fourth cohomology group of BG from the reduced Killing form}

Let $G$ be a compact, simple, simply-connected Lie group with Lie algebra $\mathfrak g$.
Fix a maximal torus $T\subset G$ with Lie algebra $\mathfrak t$, and set
$\mathfrak h=\mathfrak t\otimes_{\mathbb R}\mathbb C\subset \mathfrak g_{\mathbb C}$.

Let $\kappa$ be the Killing form on $\mathfrak g_{\mathbb C}$ and let $h^\vee$ be the dual Coxeter number.
On the compact real form $\mathfrak g$, the form $\kappa$ is negative definite, hence
\begin{equation}\label{eq:basic_inner_product}
(X,Y)
\;:=\;
-\frac{1}{2h^\vee}\,\kappa(X,Y),
\qquad X,Y\in\mathfrak g,
\end{equation}
is positive definite and $\Ad(G)$--invariant\footnote{%
One rigorous way to see the sign is to construct the compact real form from the complex algebra
$\mathfrak g_{\mathbb C}$ using a Chevalley/Weyl basis.
Following Serre \cite[Ch.~VI]{Serre}, there is an antilinear involution $\sigma$ of $\mathfrak g_{\mathbb C}$
(the compact conjugation) characterized on a Cartan subalgebra $\mathfrak h\subset \mathfrak g_{\mathbb C}$ and root
vectors $X_\alpha$ by $\sigma(H)=-H$ for $H\in\mathfrak h$ and $\sigma(X_\alpha)=-X_{-\alpha}$.
Its fixed-point set is the compact real form
\begin{equation}
\mathfrak g
=\bigl(\mathfrak g_{\mathbb C}\bigr)^\sigma
=\operatorname{span}_{\mathbb R}\!\left\{\, iH_\alpha,\; X_\alpha-X_{-\alpha},\; i(X_\alpha+X_{-\alpha}) \,\right\},
\end{equation}
and the Killing form $\kappa$ restricts to a \emph{negative definite} bilinear form on $\mathfrak g$.
Accordingly, when working on the compact form (as in topology and gauge theory) one typically uses the sign-reversed
normalization $-\kappa$ (and in our case $-(2h^\vee)^{-1}\kappa$) to obtain a positive definite $\Ad(G)$-invariant inner
product compatible with the usual Euclidean normalization on the root/coroot lattices.%
}. This is the standard ``basic'' or ``reduced Killing'' normalization.

Let $\Lambda^\vee\subset \mathfrak t$ be the coroot lattice, generated by the coroots $\alpha^\vee$
for roots $\alpha$ of $(\mathfrak g_{\mathbb C},\mathfrak h)$.
Consider the $W$--invariant quadratic form on $\Lambda^\vee$
\begin{equation}
Q(H)\;:=\;\frac12(H,H).
\end{equation}
With the normalization \eqref{eq:basic_inner_product}, one has
\begin{equation}
Q(\alpha^\vee)=1 \qquad\text{for every \emph{long} root $\alpha$}.
\end{equation}

\medskip

\begin{thm}[Deligne {\cite[Lemma 1.7.3]{Deligne}}]\label{thm:Deligne_generator_H4_killing_version}
Let $G$ be compact, simple, and simply-connected. Then $H^4(BG,\mathbb Z)\cong \mathbb Z$, and restriction to $BT$
identifies $H^4(BG,\mathbb Z)$ with the group of $W$--invariant $\mathbb Z$--valued quadratic forms on the coroot lattice.
Under this identification, the \emph{positive} generator $y_4(G)\in H^4(BG,\mathbb Z)$ corresponds to
\begin{equation*}
Q(H)=\frac12(H,H),
\end{equation*}
i.e.\ the unique $W$--invariant integral quadratic form satisfying $Q(\alpha^\vee)=1$ for every long root $\alpha$.
\end{thm}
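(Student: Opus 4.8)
The plan is to compare $H^4(BG,\mathbb Z)$ with its restriction to a maximal torus and then pin down the generator by reducing to the highest-root $SU(2)$. First I recall from Theorem~\ref{thm:low_cohom_BG} that $H^4(BG,\mathbb Z)\cong\mathbb Z$, so the task is to locate this infinite cyclic group and its positive generator inside a concrete lattice of quadratic forms. For a maximal torus $T\subset G$ one has $BT\simeq(\mathbb{CP}^\infty)^{\ell}$ and $H^*(BT,\mathbb Z)=\operatorname{Sym}^*(X^*(T))$ with the character lattice $X^*(T)$ in degree $2$; hence $H^4(BT,\mathbb Z)=\operatorname{Sym}^2(X^*(T))$, which evaluated on the cocharacter lattice $X_*(T)$ is exactly the group of $\mathbb Z$-valued quadratic forms on $X_*(T)$. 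Because $G$ is simply connected, $X_*(T)$ is the coroot lattice $\Lambda^\vee$, so $H^4(BT,\mathbb Z)$ is the group of integral quadratic forms on $\Lambda^\vee$; conjugation by $N(T)$ shows that the image of restriction lands in the $W$-invariant subgroup.

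Next I show that restriction $r\colon H^4(BG,\mathbb Z)\to H^4(BT,\mathbb Z)^W$ is injective with infinite-cyclic target. Injectivity follows from Borel's rational isomorphism $H^*(BG,\mathbb Q)\xrightarrow{\ \sim\ }H^*(BT,\mathbb Q)^W$ together with torsion-freeness of $H^4(BG,\mathbb Z)\cong\mathbb Z$: the class injects into $H^4(BT,\mathbb Q)$ and hence into $H^4(BT,\mathbb Z)$. For the target I use that, for simple $G$, the reflection representation $\mathfrak t$ is absolutely irreducible and self-dual, so by Schur the space of $W$-invariant symmetric bilinear forms is one-dimensional; consequently the $W$-invariant integral quadratic forms form a free rank-one group. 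The basic form $Q_0(H)=\tfrac12(H,H)$ is $W$-invariant, takes integer values on $\Lambda^\vee$, and satisfies $Q_0(\alpha^\vee)=1$ for every long root (the classical root-system normalization of Definition~\ref{defn:reduced-killing}); since that value equals $1$ the form is indivisible, hence primitive, and the target is $\mathbb Z\cdot Q_0$.

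The crux is surjectivity: I must rule out $r(y_4(G))=nQ_0$ with $n>1$. For this I restrict along the highest-root subgroup $\varphi\colon SU(2)_\theta\hookrightarrow G$, which represents the generator of $\pi_3(G)$. Two inputs combine. Since $BG$ and $BSU(2)$ are $3$-connected with $H_4\cong\pi_4\cong\pi_3$ of the group, and $\varphi_*\colon\pi_3(SU(2))\to\pi_3(G)$ is an isomorphism (index one), Hurewicz together with universal coefficients (with $H_3=0$) shows that $(B\varphi)^*\colon H^4(BG,\mathbb Z)\to H^4(BSU(2),\mathbb Z)$ is an isomorphism. A direct Chern-class computation on $BU(1)\hookrightarrow BSU(2)$ shows that the generator of $H^4(BSU(2),\mathbb Z)$ restricts to $\pm t^2$, i.e.\ to the basic form on $X_*(U(1))=\mathbb Z\,\theta^\vee$. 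Restricting the identity $r(y_4(G))=nQ_0$ further to $BU(1)$ through $BSU(2)$ then gives $n\,t^2$ on one side and $\pm t^2$ on the other (using $Q_0(\theta^\vee)=1$), which forces $n=1$; matching the sign fixes the positive generator. Hence $r$ is an isomorphism carrying $y_4(G)$ to $Q_0$, as claimed. The main obstacle is exactly this normalization step: injectivity and the rank-one count are essentially formal, whereas showing that the multiplier is precisely $1$ requires the genuinely arithmetic input that the highest-root $SU(2)$ carries unit instanton charge.
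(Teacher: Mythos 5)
Your proposal is correct, but there is no in-paper proof to compare it against: the paper imports this statement verbatim from Deligne \cite[Lemma 1.7.3]{Deligne} without proving it, and the surrounding remarks only record Henriques's integrality criterion and Harris's identification of $y_4(G)$ inside $H^4(BT,\mathbb Z)^W$ as consistency checks. What you have written is a correct, self-contained reconstruction of the standard argument, and it is in substance the same route the cited sources take: identify $H^4(BT,\mathbb Z)=\operatorname{Sym}^2 X^*(T)$ with integral quadratic forms on $X_*(T)=\Lambda^\vee$, deduce injectivity of restriction from Borel's rational isomorphism together with torsion-freeness of $H^4(BG,\mathbb Z)\cong\mathbb Z$ (Theorem~\ref{thm:low_cohom_BG}), get the rank-one count from absolute irreducibility of the reflection representation, and fix the normalization by reducing to the long-root $SU(2)_\theta$ --- exactly Deligne's reduction to $SL_2$. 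Your closing diagnosis is also the right one: injectivity and the rank count are soft, and the entire arithmetic content sits in showing the multiplier is $1$, which your computation $c_2\mapsto -t^2$ on $BU(1)\subset BSU(2)$ combined with $Q_0(\theta^\vee)=1$ handles cleanly (the needed commuting square exists because $H_\theta=\theta^\vee\in\mathfrak t$, so one may choose $T$ containing the diagonal torus of $SU(2)_\theta$).

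Two points deserve tightening. First, integrality of $Q_0$ on all of $\Lambda^\vee$, not merely the values $Q_0(\alpha^\vee)$ at coroots, is asserted but not argued; it is quick and worth a sentence: with long roots of squared length $2$ one has $Q_0(\alpha^\vee)=2/(\alpha,\alpha)\in\{1,2,3\}$ and $(\alpha^\vee,\beta^\vee)=\langle\alpha,\beta^\vee\rangle\cdot 2/(\alpha,\alpha)\in\mathbb Z$, so for $x=\sum_i m_i\alpha_i^\vee$ the expansion $Q_0(x)=\sum_i m_i^2\,Q_0(\alpha_i^\vee)+\sum_{i<j}m_im_j\,(\alpha_i^\vee,\alpha_j^\vee)$ is an integer, i.e.\ the coroot lattice is even for the basic form. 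Second, in the normalization step your parenthetical ``(index one)'' should be anchored to the classical fact that $\phi_\theta$ represents a generator of $\pi_3(G)$ (Bott; this is precisely what Appendix~\ref{sec:AppB} assumes when it declares $\gamma_3(G)=[\phi_\theta]$ the positive generator), and \emph{not} to Theorem~\ref{thm:universal_scaling}: in the paper's logical order that theorem lies downstream of the normalizations that this very statement about $y_4(G)$ is used to fix, so invoking the universal scaling machinery here would be circular. With the input stated as Bott's theorem, your argument stands.
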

\medskip

\begin{rem}[Henriques's integrality criterion for $H^4(BG,\mathbb Z)$ and level quantization] (See \cite[Thm.~6]{Henriques}.)
Let $G$ be a compact, connected Lie group with maximal torus $T$ and Lie algebra $\mathfrak t$.
Write
\begin{equation}
\Lambda := \ker\bigl(\exp:\mathfrak t\to T\bigr)
\end{equation}
for the cocharacter lattice.  By Chern--Weil theory, any class in $H^4(BG,\mathbb Z)$ determines an
$\Ad(G)$--invariant inner product $\langle\cdot,\cdot\rangle$ on $\mathfrak g$ (equivalently, a quadratic form
$Q(X)=\frac12\langle X,X\rangle$).

Henriques shows that the image of $H^4(BG,\mathbb Z)$ inside $(S^2\mathfrak g^\ast)^G$ is precisely the lattice of
$\Ad(G)$--invariant inner products satisfying the integrality condition
\begin{equation}
\frac12\langle X,X\rangle \in \mathbb Z
\qquad\text{for all } X\in \Lambda .
\end{equation}
In particular, when $G$ is semisimple and simply-connected one has $\Lambda=\Lambda^\vee$ (the coroot lattice),
and this criterion is equivalent to Deligne's description of $H^4(BG,\mathbb Z)$ as the group of $W$--invariant
integral quadratic forms on $\Lambda^\vee$.  With our normalization $(\theta,\theta)=2$, the resulting primitive
class is exactly the quadratic form
\begin{equation}
Q(H)=\frac12(H,H),
\end{equation}
hence it agrees with our generator $y_4(G)$ and with the corresponding quantization condition for Chern--Simons/WZW
levels.
\end{rem}

\begin{rem}[Harris's $K$--theoretic identification of the basic generator]
The identification of the basic class $y_4(G)\in H^4(BG,\mathbb{Z})$ (and its transgressed image
$x_3(G)\in H^3(G,\mathbb{Z})$) is consistent with, and in fact already \emph{explicit} in, the work of
Harris~\cite{Harris}.
In \cite[\S 3]{Harris}, Harris identifies the generator $y_4(G)$ via the canonical isomorphism
$H^4(BG,\mathbb{Z})\cong H^4(BT,\mathbb{Z})^{W}$ and shows that it is represented by the
$W$--invariant quadratic polynomial attached to the basic $\Ad(G)$--invariant inner product
$(\cdot,\cdot)$ on $\mathfrak{t}$: more precisely, for the primitive quadratic generator $P_1$,
he proves that
\begin{equation*}
\frac{(\beta,\beta)}{4}\,P_1 \;\in\; H^4(BT,\mathbb{Z})^{W}
\end{equation*}
represents a generator of $H^4(BG,\mathbb{Z})$, where $\beta$ is a root of maximal length.
With our normalization $(\theta,\theta)=2$ (equivalently $(\beta,\beta)=2$ for any long root),
this is exactly the class corresponding to the quadratic form
\begin{equation*}
Q(H)=\frac{1}{2}(H,H),
\end{equation*}
hence $Q$ is the basic integral generator detected by the transgression and by odd $K$--theory.
\end{rem}

\subsection{Transgression to third cohomology group of G  and the canonical generator of the third homotopy group}

Consider the universal principal bundle $G\to EG\to BG$. The associated transgression map identifies the
degree--$4$ generator with a degree--$3$ generator on $G$:
\begin{equation}\label{eq:transgression_y4_to_x3}
\tau\big(y_4(G)\big)\;=\;x_3(G)\in H^3(G,\mathbb Z),
\end{equation}
and $H^3(G,\mathbb Z)\cong \mathbb Z$ for $G$ simple and simply-connected.
(Deligne proves this identification in his discussion of the fibration $G\to EG\to BG$.) 

\medskip

\noindent
\textbf{Canonical generator in $\pi_3(G)$.}
For each root $\alpha\in\mathfrak h^\ast$, let $\mathfrak g_\alpha\subset\mathfrak g_{\mathbb C}$ be the root space.
Choose nonzero $X_\alpha\in\mathfrak g_\alpha$ and $X_{-\alpha}\in\mathfrak g_{-\alpha}$ and set
\begin{equation*}
H_\alpha:=[X_\alpha,X_{-\alpha}]\in\mathfrak h.
\end{equation*}
After scaling $X_\alpha,X_{-\alpha}$, we may assume $(X_\alpha,H_\alpha,X_{-\alpha})$ is an $\mathfrak{sl}_2$--triple:
\begin{equation*}
[H_\alpha,X_\alpha]=2X_\alpha,\qquad
[H_\alpha,X_{-\alpha}]=-2X_{-\alpha},\qquad
[X_\alpha,X_{-\alpha}]=H_\alpha,
\end{equation*}
in which case $H_\alpha=\alpha^\vee$.

Intersecting $\langle X_\alpha,H_\alpha,X_{-\alpha}\rangle$ with the compact real form $\mathfrak g$
gives a subalgebra $\mathfrak{su}(2)\subset\mathfrak g$, hence a subgroup $SU_\alpha(2)\subset G$ and an embedding
$$
\phi_\alpha:SU(2)\cong SU_\alpha(2)\hookrightarrow G.
$$
Let $\theta$ be the highest root (hence a long root). We define
\begin{equation*}
\gamma_3(G):=[\phi_\theta]\in\pi_3(G)\cong\mathbb Z
\end{equation*}
to be the \emph{positive} generator. Let
\begin{equation*}
S^3_\theta:=\phi_\theta(SU(2))\subset G,\qquad [S^3_\theta]\in H_3(G,\mathbb Z).
\end{equation*}
Then $[S^3_\theta]$ is a generator of $H_3(G,\mathbb Z)\cong\mathbb Z$, and the positive generator
$x_3(G)\in H^3(G,\mathbb Z)$ is characterized by
\begin{equation}
\langle x_3(G),[S^3_\theta]\rangle = 1.
\end{equation}

\subsection{The Cartan--Koszul 3-form and its relation to the third cohomology group}
\label{subsec:eta_from_reduced_killing}

Let $\theta_L\in\Omega^1(G;\mathfrak g)$ be the left-invariant Maurer--Cartan form.
The invariant inner product \eqref{eq:basic_inner_product} defines (via the Koszul/Cartan construction)
a bi-invariant $3$-form
\begin{equation}\label{eq:etaG_def}
\eta_G
\;:=\;
\frac{1}{12}\,(\theta_L,[\theta_L,\theta_L])
\;\in\;\Omega^3(G),
\end{equation}
where $[\theta_L,\theta_L]=\theta_L\wedge\theta_L$ uses the Lie bracket on $\mathfrak g$.

The Maurer--Cartan equation implies $d\eta_G=0$, hence $[\eta_G]\in H^3(G,\mathbb R)$.
With the basic normalization, the cohomology class is integral; moreover, it represents the
generator $x_3(G)$ once the sign is fixed by requiring
\begin{equation}\label{eq:eta_normalization}
\int_{S^3_\theta}\eta_G = 1.
\end{equation}
Equivalently,
\begin{equation}
[\eta_G]=x_3(G)\in H^3(G,\mathbb Z).
\end{equation}
The $3$-form $\eta_G$ is the Cartan $3$-form attached to the reduced Killing form via the Koszul map.

\subsection{Recovering the Cartan form from the odd Chern character}
\label{subsec:eta_from_odd_chern}

Let $\rho:G\to U(V)$ be a finite-dimensional unitary representation and write
\begin{equation}
\theta_\rho \;:=\; \rho(g)^{-1}d\rho(g)\ \in\ \Omega^1(G;\mathfrak u(V)).
\end{equation}
The \emph{odd Chern character} of the Bott class $\beta(\rho)\in K^{-1}(G)$ has the standard de Rham representative
\begin{equation*}
\mathrm{ch}_{\mathrm{odd}}(\beta(\rho))
=\sum_{k\ge 0}\frac{(-1)^k\,k!}{(2\pi i)^{k+1}}\,
\mathrm{tr}\!\big(\theta_\rho^{\,2k+1}\big)\ \in\ \Omega^{\mathrm{odd}}(G),
\end{equation*}
whose degree--$3$ component is the closed bi-invariant $3$-form
\begin{equation}\label{eq:eta_rho_def}
\eta_\rho \;:=\; \frac{1}{24\pi^2}\,\mathrm{tr}\!\big(\theta_\rho^{\,3}\big)\ \in\ \Omega^3(G),
\qquad d\eta_\rho=0.
\end{equation}
Thus
\begin{equation}
\big(\mathrm{ch}_{\mathrm{odd}}(\beta(\rho))\big)_{(3)} \;=\; [\eta_\rho]\ \in\ H^3(G,\mathbb R).
\end{equation}

\medskip
\noindent\textbf{Scaling by Dynkin index.}
Using the representation $\rho$, define an $\Ad(G)$--invariant symmetric bilinear form on $\mathfrak g$ by
\begin{equation}\label{eq:trace_form_rho}
B_\rho(X,Y)\;:=\;-\mathrm{tr}\!\big(\rho_\ast(X)\rho_\ast(Y)\big),
\qquad X,Y\in\mathfrak g.
\end{equation}
Since $G$ is compact and simple, the space of $\Ad(G)$--invariant symmetric bilinear forms on $\mathfrak g$ is one-dimensional,
so there exists a unique scalar $j_\rho\in\mathbb R$ such that
\begin{equation}\label{eq:dynkin_index_as_scalar}
B_\rho \;=\; j_\rho\,(\cdot,\cdot),
\end{equation}
where $(\cdot,\cdot)$ is the reduced Killing form from \eqref{eq:basic_inner_product}.
With our normalization (long roots have squared length $2$), the scalar $j_\rho$ is a \emph{positive integer} and equals the
Dynkin index of the representation $\rho$.

\medskip

\begin{prop}\label{prop:eta_rho_scales_x3}
Let $G$ be compact, simple, and simply-connected, and let $x_3(G)\in H^3(G,\mathbb Z)\cong\mathbb Z$ be the positive generator
characterized by $\langle x_3(G),[S^3_\theta]\rangle=1$.
Then for every unitary representation $\rho$ one has
\begin{equation}\label{eq:eta_rho_equals_jrho_x3}
[\eta_\rho]\;=\; j_\rho\,x_3(G)\ \in\ H^3(G,\mathbb Z)\subset H^3(G,\mathbb R),
\end{equation}
equivalently,
\begin{equation*}
\int_{S^3_\theta}\eta_\rho \;=\; j_\rho.
\end{equation*}
\end{prop}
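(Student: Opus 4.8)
The plan is to reduce the cohomological identity to the single numerical statement $\int_{S^3_\theta}\eta_\rho = j_\rho$ and then evaluate that period by restriction to the highest-root $SU(2)$. Since $G$ is simple and simply-connected, $H^3(G,\mathbb{R})$ is one-dimensional and generated by the image of $x_3(G)$; because the positive generator is pinned by $\langle x_3(G),[S^3_\theta]\rangle = 1$ and $[S^3_\theta]$ generates $H_3(G,\mathbb{Z})$, every real class obeys $[\eta_\rho] = \big(\int_{S^3_\theta}\eta_\rho\big)\,x_3(G)$. Thus the two displayed equations in the statement are equivalent, and it suffices to compute the period $\int_{S^3_\theta}\eta_\rho$.

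Next I would exploit that $\eta_\rho = \frac{1}{24\pi^2}\mathrm{tr}(\theta_\rho^{\,3})$ is bi-invariant, hence closed and determined by its value at the identity, i.e.\ by a Chevalley--Eilenberg $3$-cocycle. A direct expansion of $\mathrm{tr}(\theta_\rho^{\,3})$ on a triple $X,Y,Z\in\mathfrak g$ produces, up to a universal constant, the Koszul cocycle $\mu_{B_\rho}(X,Y,Z)=B_\rho(X,[Y,Z])$ attached to the trace form $B_\rho(X,Y)=-\mathrm{tr}(\rho_*X\,\rho_*Y)$. By Theorem~\ref{thm:Koszul_map}(4) the Koszul map $B\mapsto[\mu_B]$ is an isomorphism from the one-dimensional space of invariant symmetric forms onto $H^3_{\mathrm{CE}}(\mathfrak g)\cong H^3(G,\mathbb R)$, and it is \emph{linear} in $B$. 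Since by definition $B_\rho = j_\rho\,(\cdot,\cdot)$, this linearity gives $[\eta_\rho]=j_\rho\,[\eta_{\mathrm{bas}}]$, where $\eta_{\mathrm{bas}}$ is the bi-invariant $3$-form built from the basic inner product (equivalently the $j=1$ case). It then remains only to fix the universal constant, i.e.\ to show $\int_{S^3_\theta}\eta_{\mathrm{bas}} = 1$.

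To pin that constant I would restrict along the highest-root embedding $\phi_\theta\colon SU(2)\hookrightarrow G$, which has Dynkin embedding index $1$ (Appendix~\ref{sec:AppB}). Naturality of the odd Chern form gives $\phi_\theta^{\,*}\eta_\rho=\eta_{\rho\circ\phi_\theta}$, and the index-one property means $\phi_\theta^{\,*}$ carries $(\cdot,\cdot)_G$ to the basic form $(\cdot,\cdot)_{SU(2)}$, so the reference form pulls back to the $SU(2)$ reference form. Hence $\int_{S^3_\theta}\eta_{\mathrm{bas}}=\int_{SU(2)}\eta_{\mathrm{bas}}^{SU(2)}$, and the right-hand side is the period of the degree-$3$ odd Chern character of the defining representation, whose value is the winding number $\deg(\mathrm{id}_{SU(2)})=1$ of $SU(2)\cong S^3\hookrightarrow U$. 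The same reduction also checks $j_{\rho\circ\phi_\theta}=j_\rho$ algebraically: restricting $B_\rho=j_\rho(\cdot,\cdot)_G$ to $\mathfrak{su}(2)_\theta$ and using index one yields $B_{\rho\circ\phi_\theta}=j_\rho\,(\cdot,\cdot)_{SU(2)}$.

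The main obstacle is not the conceptual structure but the normalization bookkeeping: the overall constant relating $\mathrm{tr}(\theta_\rho^{\,3})$ to the integral Koszul generator, together with the $\pi$-factors in the odd Chern character, must be tracked so that the period comes out as exactly $j_\rho$ rather than some nonzero multiple of it. Concentrating all of this into the single, standard fact $\frac{1}{24\pi^2}\int_{SU(2)}\mathrm{tr}\big((g^{-1}dg)^3\big)=1$ for the defining representation is what makes the argument clean and avoids having to compare $\eta_\rho$ against the explicitly normalized $\eta_G$ directly; everything else is then forced by linearity of the Koszul map and naturality of the Chern character.
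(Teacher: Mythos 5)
Your proof is correct, and its skeleton coincides with the paper's: both arguments use bi-invariance to reduce $\eta_\rho$ to an $\Ad$-invariant alternating $3$-form at the identity, invoke the one-dimensionality of that space (the Koszul map of Theorem~\ref{thm:Koszul_map}) together with $B_\rho=j_\rho\,(\cdot,\cdot)$ to conclude that $\eta_\rho$ is $j_\rho$ times the basic Cartan form, and then evaluate on the highest-root subgroup $S^3_\theta$ to fix the integral normalization. Where you genuinely depart from the paper is in the final normalization step: the paper simply compares $\eta_\rho$ against $\eta_G$ and cites the normalization $\int_{S^3_\theta}\eta_G=1$ of Appendix~\ref{sec:AppB} (which there partly fixes only the \emph{sign} by fiat, with integrality of the magnitude asserted), whereas you derive the unit period from scratch by pulling back along $\phi_\theta$ via naturality of the odd Chern form ($\phi_\theta^*\eta_\rho=\eta_{\rho\circ\phi_\theta}$), using the index-one property of the highest-root embedding and $\ell_{\mathrm{def}}=1$ for $SU(2)$, and concentrating everything into the standard winding-number integral $\frac{1}{24\pi^2}\int_{SU(2)}\mathrm{tr}\bigl((g^{-1}dg)^3\bigr)=1$, consistent with \eqref{eq:ch3_component}. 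This buys a self-contained verification that the basic form really has period $1$ on $S^3_\theta$ (closing a small gap left implicit in the paper's sketch), at the modest cost of the extra inputs $j_{\phi_\theta}=1$ and the $SU(2)$ defining-representation index; your reduction $j_{\rho\circ\phi_\theta}=j_\rho$ and the equivalence of the two displayed identities via the perfect pairing with $[S^3_\theta]$ are both handled correctly.
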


\begin{proof}[Proof sketch]
Both $\eta_\rho$ and $\eta_G$ are bi-invariant $3$-forms, hence determined by their values at the identity, i.e.\ by
$\Ad(G)$--invariant alternating $3$-forms on $\mathfrak g$.
For a compact simple $\mathfrak g$, this space is one-dimensional and generated by
$(X,[Y,Z])$ (equivalently, by the Koszul map applied to an invariant bilinear form).
Replacing $(\cdot,\cdot)$ by $B_\rho=j_\rho(\cdot,\cdot)$ scales the resulting Cartan $3$-form by the same factor $j_\rho$.
Evaluating on the highest-root subgroup $S^3_\theta=SU(2)_\theta$ fixes the integral normalization, giving
$\int_{S^3_\theta}\eta_\rho=j_\rho$ and hence \eqref{eq:eta_rho_equals_jrho_x3}.
\end{proof}

\medskip
\noindent\textbf{Recovering $\eta_G$ from a representation.}
If $\rho$ has Dynkin index $j_\rho=1$, then \eqref{eq:eta_rho_equals_jrho_x3} gives $[\eta_\rho]=x_3(G)$ and hence
$\eta_\rho$ is a de Rham representative of the canonical generator.
More generally,
\begin{equation*}
x_3(G)\;=\;\frac{1}{j_\rho}\,[\eta_\rho].
\end{equation*}

\medskip
\noindent\textbf{Example: the adjoint representation.}
For $\rho=\mathrm{Ad}$ one has
\begin{equation*}
B_{\mathrm{Ad}}(X,Y)=-\mathrm{tr}(\mathrm{ad}_X\mathrm{ad}_Y)=-\kappa(X,Y)=2h^\vee\,(X,Y),
\end{equation*}
so $j_{\mathrm{Ad}}=2h^\vee$ and therefore
\begin{equation*}
[\eta_{\mathrm{Ad}}]\;=\;2h^\vee\,x_3(G).
\end{equation*}
This is often a convenient way to compare trace-normalizations appearing in the literature with the reduced Killing normalization.

\section{ Normalizations in a G-bundle}
\label{sec:CS}

This appendix records the standard differential-form realizations of the low-degree classes
\begin{equation*}
y_4(G)\in H^4(BG,\mathbb Z),\qquad x_3(G)\in H^3(G,\mathbb Z),
\end{equation*}
and explains how they arise naturally in Yang--Mills theory, Chern--Simons theory, and Wess--Zumino--Witten terms.

\subsection{Normalized Ad(G)-invariant bilinear form, trace, and quadratic form}
\label{subsec:CS_normalized_trace}

Let $G$ be a compact, connected Lie group with Lie algebra $\mathfrak g$.
Fix an $\Ad(G)$--invariant symmetric bilinear form 
\begin{equation*}
(\cdot,\cdot):\mathfrak g\times\mathfrak g\to\mathbb R.
\end{equation*}
When $G$ is simple, this bilinear form is uniquely defined by requesting that it respects the standard Dynkin's normalization:
\begin{equation}\label{eq:CS_long_root_norm}
(\alpha^\vee,\alpha^\vee)=2
\qquad\text{for every long root $\alpha$.}
\end{equation}
For example, that is the case of the \emph{reduced Killing form}
\begin{equation}\label{eq:CS_basic_form}
(X,Y)\;:=\;-\frac{1}{2h^\vee}\,\kappa(X,Y)
\;=\;-\frac{1}{2h^\vee}\,\mathrm{Tr}\!\big(\ad_X\circ \ad_Y\big),
\qquad X,Y\in\mathfrak g.
\end{equation}
On the compact real form $\mathfrak g$, the Killing form $\kappa$ is negative definite, we therefore introduced an overall negative sign to have a positive-definite bilinear form.

\begin{rem}[Sign convention: algebraic versus topological normalization]
The sign in our definition of the reduced Killing form \eqref{eq:reducedKilling} warrants a comment,
since it does not appear in Dynkin's original algebraic work \cite{Dynkin.SubA,Dynkin59}.
Dynkin works naturally with a complex semisimple Lie algebra $\mathfrak g_{\mathbb C}$ and the Euclidean
space $\mathfrak h_{\mathbb R}^*$ spanned by the roots, equipped with a positive definite $W$-invariant
inner product normalized by $(\theta,\theta)=2$ for the highest (long) root $\theta$.

In contrast, gauge theory and the topology of $G$-bundles are formulated on the compact real form
$\mathfrak g$ (e.g.\ $\mathfrak{su}(n)$ rather than $\mathfrak{sl}(n,\mathbb C)$).
On $\mathfrak g$ the Killing form $\kappa(X,Y)=\Tr(\ad_X\ad_Y)$ is negative definite, so the
standard choice of a \emph{positive definite} $\Ad$-invariant inner product is the sign-reversed
(and rescaled) form
\begin{equation}
(X,Y)_{\mathrm{basic}}
\;:=\;
-\frac{1}{2h^\vee}\,\kappa(X,Y).
\end{equation}
With this convention, the induced form on the root lattice agrees with the usual Euclidean geometry
($(\theta,\theta)=2$ for long roots), and the corresponding Chern--Weil representatives match the
standard integral generators in low-degree topology (equivalently, the basic $SU(2)$ instanton on
$S^4$ has charge $+1$ with the conventional orientation).
\end{rem}

For any nontrivial finite-dimensional complex representation $\rho$ of $G$, let $\ell_\rho\in\mathbb Z_{>0}$ be the Dynkin index of the representation $\rho$, then 
\begin{equation}
\Tr\big(\rho(X)\rho(Y)\big)=-\ell_\rho\,(X,Y).
\end{equation}
We then define the \emph{normalized trace} $\trb$ on $\mathfrak g$ by
\begin{equation}
\trb(XY):=-\frac{1}{\ell_\rho}\Tr\big(\rho(X)\rho(Y)\big).
\end{equation}
This definition is independent of the choice of $\rho$, and satisfies $(X,Y)=\trb(XY)$. We recall that the the Dynkin index for the adjoint representation is $2h^\vee$ with $h^\vee$ the dual Coxeter number of the Lie algebra. 

Let
\begin{equation}\label{eq:CS_Q_def}
Q(X):=\frac12(X,X)=\frac{1}{2}\trb (X^2)
\end{equation}
be the corresponding invariant quadratic polynomial respecting Deligne requirement for integrability of the generator $y_4(G)$ of $H^4(BG,\mathbb{Z})$.

\medskip

\begin{rem}[Brackets versus wedge products for $\mathfrak g$--valued forms]\label{rem:bracket_wedge_convention}
If $A\in\Omega^p(M;\mathfrak g)$ and $B\in\Omega^q(M;\mathfrak g)$, we use the \emph{graded Lie bracket}
\begin{equation}
[A,B]\;:=\;A\wedge B-(-1)^{pq}B\wedge A,
\end{equation}
where the product $A\wedge B\in\Omega^{p+q}(M;\mathfrak g)$ is defined using the Lie bracket in $\mathfrak g$:
if $A=\sum_i a_i\otimes X_i$ and $B=\sum_j b_j\otimes Y_j$ then
\begin{equation}
A\wedge B\;:=\;\sum_{i,j}(a_i\wedge b_j)\otimes [X_i,Y_j].
\end{equation}
An important case is given by $\mathfrak g$--valued $1$-forms:
\begin{equation}
[A,A]=2\,A\wedge A=2\, A^2.
\end{equation}
in particular, the Maurer--Cartan equation is $d\theta_L+\tfrac12[\theta_L,\theta_L]=0$.
Equivalently, in matrix notation (with $A,B$ matrix-valued forms),
$A\wedge B$ uses matrix multiplication and $[A,B]=A\wedge B-(-1)^{pq}B\wedge A$ is the graded commutator.
\end{rem}

\subsection{Chern--Weil degree 4 and the instanton number}
\label{subsec:CS_ChernWeil}

Let $\pi:P\to M$ be a principal $G$--bundle with connection $A$ and curvature
\begin{equation}\label{eq:CS_curvature}
F \;=\; dA+\frac12[A,A]\;=\; dA + A\wedge A.
\end{equation}
Chern--Weil theory assigns to the invariant polynomial $Q$ the closed $4$-form
\begin{equation}\label{eq:CS_cw_4form}
\omega_4(A) \;:=\; \frac{1}{8\pi^2}\,\trb (F\wedge F)\;\in\;\Omega^4(M),
\end{equation}
whose de Rham class depends only on the isomorphism class of $P$.

If $G$ is simple and simply-connected and $(\cdot,\cdot)$ is the reduced Killing form
\eqref{eq:CS_basic_form}--\eqref{eq:CS_long_root_norm}, then Appendix \ref{sec:AppB} identifies the universal class
$y_4(G)\in H^4(BG,\mathbb Z)$ with $Q$.
Consequently, if $c:M\to BG$ is the classifying map of $P$, then
\begin{equation}\label{eq:CS_pullback_y4}
[\omega_4(A)] \;=\; c^*\bigl(y_4(G)\bigr)\ \in\ H^4(M;\mathbb R),
\end{equation}
and in particular $\,[\omega_4(A)]\,$ is integral.

\medskip
\noindent\textbf{Instanton number.}
If $M$ is a closed oriented $4$-manifold, define the characteristic number
\begin{equation}\label{eq:CS_instanton_number}
k(P)\;:=\;\int_M \omega_4(A)
\;=\;\frac{1}{8\pi^2}\int_M \trb (F\wedge F).
\end{equation}
By \eqref{eq:CS_pullback_y4} this equals $\langle c^*y_4(G),[M]\rangle\in\mathbb Z$ and is independent of $A$.
When $M=S^4$, this integer coincides with the class of $P$ under the identification
$\pi_4(BG)\cong \pi_3(G)\cong \mathbb Z$ (cf.\ AHS).

\medskip
\noindent\textbf{Remark (dependence on the chosen ``trace'').}
If one instead uses $(\cdot,\cdot)_\rho$ coming from a representation trace, then
\begin{equation*}
\frac{1}{8\pi^2}\int_M \Tr_\rho (F\wedge F) \;=\; \ell_\rho\,k(P),
\end{equation*}
with $\ell_\rho$ the Dynkin index of the representation $\rho$.  This is exactly the phenomenon emphasized by
Atiyah--Hitchin--Singer: the integer class of the bundle is canonical, while the Chern--Weil integral
reflects the chosen invariant form (e.g.\ adjoint trace, defining trace, etc.).

\begin{rem}[Relation to $c_2$ / $ch_2$]
For any unitary representation $\rho:G\to U(V)$, the invariant polynomial
$X\mapsto -\Tr_\rho(X^2)$ defines a universal class
$c_\rho\in H^4(BG;\mathbb Z)$ (the degree--$4$ Chern--Weil class associated with $\rho$).
If $P\to M$ is a principal $G$--bundle with connection $A$, then
\begin{equation}
\frac{1}{8\pi^2}\int_M \Tr_\rho(F_A\wedge F_A)
\;=\;
\langle c^*(c_\rho),[M]\rangle
\;\in\;\mathbb Z.
\end{equation}
With our basic normalization, one has $c_\rho=\ell_\rho\,y_4(G)$, hence
\begin{equation}
\frac{1}{8\pi^2}\int_M \Tr_\rho(F_A\wedge F_A)
\;=\;
\ell_\rho\,k(P),
\qquad
k(P)=\langle c^*y_4(G),[M]\rangle .
\end{equation}
When $G=SU(n)$ and $\rho$ is the defining representation, $c_\rho$ is the second Chern class
$c_2$ of the associated rank-$n$ complex vector bundle; in general $c_\rho$ is the degree--$4$
characteristic class determined by $\rho$.
\end{rem}

\subsection{The Chern--Simons 3-form and the boundary formula}
\label{subsec:CS_CSform}

On any $3$-manifold (or on the boundary of a $4$-manifold), the $4$-form $\omega_4(A)$ is locally exact.
A standard primitive is the Chern--Simons $3$-form associated with $(\cdot,\cdot)$:
\begin{equation}\label{eq:CS_CS_def}
\CS(A)\;:=\;\frac{1}{8\pi^2}\trb \left( A\wedge dA+\frac13\,(A\wedge [A,A])\right)
=
\frac{1}{8\pi^2}\trb \left(A\wedge dA+\frac23\,A^3\right)
\in\Omega^3(M).
\end{equation}
A direct computation gives the transgression identity
\begin{equation}\label{eq:CS_dCS}
d\,\CS(A)\;=\;\frac{1}{8\pi^2}\,\trb (F\wedge F)\;=\;\omega_4(A).
\end{equation}

If $M$ is an oriented $4$-manifold with boundary $\partial M$, Stokes' theorem yields
\begin{equation}\label{eq:CS_boundary_formula}
\int_M \omega_4(A)\;=\;\int_{\partial M}\CS(A).
\end{equation}

\medskip
\noindent\textbf{Pure gauge on the boundary.}
If on $\partial M$ the connection is gauge-equivalent to the trivial connection,
so that $A|_{\partial M}=g^{-1}dg$ for a smooth map $g:\partial M\to G$, then $F|_{\partial M}=0$ and
$\CS(A|_{\partial M})$ reduces to a universal $3$-form pulled back from $G$.
This is the Wess--Zumino term discussed next.

\subsection{The Cartan (Wess--Zumino--Witten) 3-form on G}
\label{subsec:CS_Cartan3form}

Let $\theta_L\in\Omega^1(G;\mathfrak g)$ be the left-invariant Maurer--Cartan form.
Define the \emph{Cartan invariant $3$-form} (normalized using the same $(\cdot,\cdot)$ as above) by
\begin{equation}\label{eq:CS_eta_def}
\eta_G
\;:=\;\frac{1}{48\pi^2}\, \trb\big(\theta_L\wedge[\theta_L,\theta_L]\big)
=\frac{1}{24\pi^2}\, \trb\big(\theta_L^3\big)
\ \in\ \Omega^3(G).
\end{equation}
Using the Maurer--Cartan equation $d\theta_L+\frac12[\theta_L,\theta_L]=0$, one checks that $d\eta_G=0$.

\medskip
\noindent\textbf{Relation with Chern--Simons for a pure gauge connection.}
If $A=\theta_L$ (the flat Maurer--Cartan connection), then $F=0$ and \eqref{eq:CS_CS_def} gives
\begin{equation}\label{eq:CS_CS_equals_minus_eta}
\CS(\theta_L)\;=\;-\eta_G.
\end{equation}
More generally, if $A=g^{-1}dg$ on a $3$-manifold $W$, then
\begin{equation}
\CS(A)\;=\; -\,g^*\eta_G.
\end{equation}
Consequently, in the ``pure gauge at infinity'' situation for $M$ with boundary,
\eqref{eq:CS_boundary_formula} becomes the familiar winding-number formula
\begin{equation}\label{eq:CS_winding_formula}
k(P)
\;=\;
\int_{\partial M}\CS(A)
\;=\;
-\int_{\partial M} g^*\eta_G.
\end{equation}

\medskip
\noindent\textbf{Integrality and the generator of $H^3(G,\mathbb Z)$.}
If $G$ is compact, simple, and simply-connected and $(\cdot,\cdot)$ is the reduced Killing form,
then $H^3(G,\mathbb Z)\cong \mathbb Z$ and the class of $\eta_G$ is integral.
Fix the \emph{index-one} subgroup $SU(2)_\theta\subset G$ associated with the highest root (Appendix~B),
and normalize the sign by requiring
\begin{equation}\label{eq:CS_eta_normalization}
\int_{SU(2)_\theta}\eta_G = 1.
\end{equation}
Then
\begin{equation}\label{eq:CS_x3_from_eta}
x_3(G)\;:=\;[\eta_G]\ \in\ H^3(G,\mathbb Z)
\end{equation}
is the positive generator, characterized equivalently by
\begin{equation*}
\langle x_3(G),[SU(2)_\theta]\rangle=1.
\end{equation*}

\noindent\textbf{Transgression.}
In the universal bundle $G\to EG\to BG$, the Serre transgression identifies the degree-$4$ generator
with the degree-$3$ generator:
\begin{equation}\label{eq:CS_transgression_relation}
\tau\big(y_4(G)\big)=x_3(G),
\end{equation}
and at the level of differential forms the Cartan form $\eta_G$ is the Chern--Simons transgression form
associated with the invariant polynomial $Q$.

\subsection{Gauge transformations and level quantization}
\label{subsec:CS_gauge}

Let $W$ be a closed oriented $3$-manifold and $A$ a connection on a principal $G$--bundle over $W$.
Under a gauge transformation $g:W\to G$ one has the standard identity
\begin{equation}\label{eq:CS_gauge_transform}
\CS(A^g)=\CS(A)-g^*\eta_G +d\,\Xi,\qquad 
\Xi= \frac{1}{8\pi^2} \trb \left( A \wedge g^*\theta_R \right)
\end{equation}
where $\theta_R$ denotes the Maurer--Cartan right invariant form ($g^*\theta_R=dg g^{-1}$ and satisfying $d\theta_R-\tfrac{1}{2}[\theta_R,\theta_R]=0$). 
In particular, on a closed $W$,
\begin{equation}\label{eq:CS_shift}
\int_W \CS(A^g)\;-\;\int_W \CS(A)\;=\;-\int_W g^*\eta_G \ \in\ \mathbb Z,
\end{equation}
because $[\eta_G]\in H^3(G,\mathbb Z)$ is integral.
Thus the \emph{Chern--Simons functional at integer level $k\in\mathbb Z$}
\begin{equation}\label{eq:CS_level_m_action}
\exp\!\left(2\pi i\,k\int_W \CS(A)\right)\in S^1
\end{equation}
is gauge-invariant.  This is the usual ``level quantization'' statement in Chern--Simons theory,
and it is exactly the integrality of $x_3(G)$ (equivalently $y_4(G)$) expressed in physics language.

\begin{rem}
The allowed integer levels are precisely the lattice of invariant inner products whose
quadratic form $Q(X)=\frac12(X,X)$ is integral on $\Lambda=\ker(\exp:\mathfrak t\to T)$; see
\cite[Thm.~6]{Henriques}.
\end{rem}

\subsection{Link with the Dynkin embedding index}
\label{subsec:CS_scaling_embeddings}

Let $f:G\hookrightarrow H$ be an embedding of compact, simple, simply-connected Lie groups.
By definition of the Dynkin embedding index $j_f$, the pullback of the reduced Killing form of $H$ satisfies
\begin{equation*}
(fX,fY)_H \;=\; j_f\,(X,Y)_G.
\end{equation*}
Consequently, all the associated differential forms scale uniformly:
\begin{align*}
(Bf)^*\bigl(y_4(H)\bigr) &= j_f\,y_4(G),\quad 
f^*\bigl(x_3(H)\bigr) = j_f\,x_3(G),\quad
f^*(\eta_H) = j_f\,\eta_G,\\
\frac{1}{8\pi^2}\trb(F_H\wedge F_H) &= j_f\,\frac{1}{8\pi^2}\trb(F_G\wedge F_G)
\qquad\text{(after identifying the curvatures via $f$).}
\end{align*}
This is the differential-form incarnation of the ``universal scaling'' theorem proved in the main text.

\subsection{Normalization ladder and a summary table}
\label{subsec:CS_summary}

\noindent\textbf{Normalization ladder.}
With our conventions, the basic invariant polynomial $Q(X)=\frac12(X,X)$ produces the integral degree-$4$ class,
whose Chern--Simons transgression produces the integral degree-$3$ class on $G$:
\begin{equation}\label{eq:CS_normalization_ladder}
Q(X)=\frac12(X,X)
\;\longrightarrow\;
\omega_4(A)=\frac{1}{8\pi^2}\trb(F\wedge F)
\;\longrightarrow\;
\CS(A)
\;\longrightarrow\;
\eta_G=\frac{1}{24\pi^2}\trb(\theta_L^3).
\end{equation}
The last arrow means: on a pure gauge $A=g^{-1}dg$ one has $\CS(A)=-g^*\eta_G$
(cf.\ \eqref{eq:CS_CS_equals_minus_eta}).

\begin{table}[htb]
\begin{center}
\renewcommand{\arraystretch}{1.25}
\begin{tabular}{|l|p{7cm}|}
\hline
Invariant bilinear form & $(X,Y)$ on $\mathfrak g$ \\
\hline
Invariant quadratic polynomial & $Q(X)=\frac12(X,X)$ \\
\hline
Chern--Weil $4$-form (degree $4$ class) & $\displaystyle \omega_4(A)=\frac{1}{8\pi^2}\trb(F\wedge F)$ \\
\hline
Chern--Simons $3$-form & $\displaystyle \CS(A)=\frac{1}{8\pi^2}\trb\Big(A\wedge dA+\frac23 A^3\Big)$ \\
\hline
Cartan / WZW $3$-form on $G$ & $\displaystyle \eta_G=\frac{1}{24\pi^2}\trb(\theta_L^3)$ \\
\hline
Koszul--Cartan $3$--cocycle on $\mathfrak g$ & $\displaystyle \mu_3(X,Y,Z)=(X,[Y,Z])$ \\
\hline
Key identities &
$\displaystyle d\,\CS(A)=\omega_4(A)$, \quad  
$\CS(\theta_L)=-\eta_G$,\\
& $[\eta_G]=x_3(G)$,\quad $\tau(y_4(G))=x_3(G)$ \\
\hline
\end{tabular}
\end{center}
\caption{
{\bf Consistent normalization.} All objects listed on the table are consistently defined using the same invariant bilinear form $(\cdot,\cdot)$ normalized such that $(\alpha^\vee,\alpha^\vee)=2$ for any long root $\alpha$. Hence, they are compatible with the generators $y_4(G)$ of $H^4(BG,\mathbb{Z})$ and $x_3(G)$ fixed in Appendix~B).
} \label{Table:Summary}
\end{table}

\noindent\textbf{Lie algebra cocycle versus Cartan $3$-form.}
Given an $\Ad(G)$--invariant bilinear form $(\cdot,\cdot)$ on $\mathfrak g$, the Koszul (Cartan) $3$--cocycle
$\mu_3(X,Y,Z):=(X,[Y,Z])$
is the infinitesimal avatar of the bi-invariant Cartan $3$-form $\eta_G$ on $G$: under the standard identification
$\Omega^3(G)^{\mathrm L}\cong \Lambda^3\mathfrak g^\ast$ by evaluation at the identity, one has
$(\eta_G)_e \;=\; c\,\mu_3$ 
for a constant $c$ fixed by our normalization (e.g.\ $\int_{SU(2)_\theta}\eta_G=1$).

\end{appendices}

\end{document}